\newcommand\R{\mathbb{R}}
\newtheorem{theorem}{Theorem}[section]
\newtheorem{corollary}[theorem]{Corollary}
\newtheorem{lemma}[theorem]{Lemma}
\newtheorem{proposition}[theorem]{Proposition}
\theoremstyle{remark}
\newtheorem{definition}[theorem]{Definition}
\theoremstyle{remark}
\newtheorem{example}[theorem]{Example}
\theoremstyle{remark}
\newtheorem{remark}[theorem]{Remark}
\newtheorem{assumptions}[theorem]{Assumption}
\theoremstyle{remark}
\begin{document}

\begin{center}{\Large \bf
Quasi-free states on a class of algebras of multicomponent commutation relations}
\end{center}

{\large Eugene Lytvynov\\ Department of Mathematics, Swansea University,  Bay Campus, Swansea, SA1 8EN, UK;\\
e-mail: \texttt{e.lytvynov@swansea.ac.uk}\vspace{2mm}

{\large Nedal Othman}\\ 
Dubai Women's College, Higher Colleges of Technology,  Baghdad Street, Al Nahda, Dubai, P.O. Box: 16062, The United Arab Emirates;\\
e-mail: \texttt{nothman@hct.ac.ae}\vspace{2mm}

{\small

\begin{center}
{\bf Abstract}
\end{center}

\noindent Multicomponent commutations relations (MCR) describe  plektons, i.e., multicomponent quantum systems   with a generalized statistics.  In such systems, exchange of quasiparticles is governed by a unitary matrix $Q(x_1,x_2)$ that depends on the position of quasiparticles. For such an exchange to be possible, the matrix must satisfy several conditions, including the functional Yang--Baxter equation. The aim of the paper is to give an appropriate definition of a quasi-free state on an MCR algebra,  and construct such states on a class of  MCR algebras. We observe a significant difference between the classical setting for bosons and fermions and the setting of MCR algebras. We show that the developed theory is applicable to systems that contain quasiparticles of opposite type. An example of such a system is  a two-component system in which two quasiparticles, under exchange, change their respective types to the opposite ones ($1\mapsto 2$, $2\mapsto1$). Fusion of quasiparticles means intuitively putting several quasiparticles in an infinitely small box and identifying the statistical behaviour of the box. By carrying out fusion of an odd number of particles from the two-component system as described above, we obtain further examples of quantum systems to which the developed theory is applicable.

 } \vspace{2mm}

\noindent
{\bf Keywords:} Anyon; Plekton; Fusion of quasiparticles; Quasi-free state
\vspace{2mm}

\noindent
{\bf Mathematics Subject Classification (2020):} Primary 81R10, 81V27; Secondary 47L90.    

\section{Introduction and preliminaries}

Multicomponent commutations relations (MCR) describe  plektons, i.e., multicomponent quantum systems   with a generalized statistics.
 In such systems, exchange of quasiparticles is governed by a unitary matrix that depends on the position of quasiparticles. For such an exchange to be possible, the matrix must satisfy several conditions, including the functional Yang--Baxter equation. 

The aim of the paper is to give an appropriate definition of a quasi-free state on an MCR algebra  and construct such states on a class of MCR algebras. We observe a significant difference between the classical setting for bosons and fermions and the setting of MCR algebras.

\subsection{Multicomponent commutation relations} \label{cydyccdfe}

The first paper pointing out the possibility of a multicomponent quantum system was the comment by Menikoff, Sharp and Goldin \cite{GoldinMenikoffSharp}. Such systems were rigorously   derived and studied by Liguori and Mintchev  \cite{LM}, see also Goldin and Majid  \cite{GoMa}. The paper  \cite{DKLP} gave an overview of multicomponent quantum systems with concrete  examples when the number of components of a quantum system is two. That paper actually treated a more general setting than the one considered in \cite{GoMa,LM}. For other results  related to multicomponent quantum systems, see e.g.\ Bo\.zejko, Speicher~\cite{BozSpe} and J{\o}rgensen, Schmitt, Werner \cite{jorgensenschmittwernerJFA1995}.

Let us briefly recall the definition of the MCR, for more detail see e.g.\ \cite{DKLP,GoMa,LM}. Let $X:=\R^d$ with $d\ge 2$ (the physically important case being $d=2$) and let $V:=\mathbb C^r$, $r$ being the number of components (types of quisiparticles) in the quantum system. Let $\{e_1,\dots,e_r\}$ be the standard orthonormal basis in $V$ (the $i$th coordinate of $e_i$ being 1, the other coordinates being 0). Let $J$ be the complex conjugation in $V$, i.e., the antilinear operator satisfying $Je_i=e_i$ for all $i$. Let $\mathcal L(V^{\otimes 2})$ denote the space of all linear operators (matrices) in $V^{\otimes 2}$. We fix a map $Q:\R^2\to\mathcal L(V^{\otimes 2})$ such that, for each $(y_1,y_2)\in \R^2$, $Q(y_1,y_2)$ is a unitary operator, $Q^*(y_1,y_2)=Q(y_2,y_1)$, and the functional Yang--Baxter equation is satisfied  in $V^{\otimes 3}$:
\begin{equation}\label{v6w6u34}
Q_1(y_1,y_2)Q_2(y_1,y_3)Q_1(y_2,y_3)=Q_2(y_2,y_3)Q_1(y_1,y_3)Q_2(y_1,y_2)\end{equation}
for $(y_1,y_2,y_3)\in \R^3$. In \eqref{v6w6u34} and below, for a linear operator $\mathcal C$ acting in $V^{\otimes 2}$ and $k\ge3$, we denote by $\mathcal C_i$ the linear operator in $V^{\otimes k}$ acting by $\mathcal C$ on the 
 $i$th and $(i+1)$th components of $V^{\otimes k}$. We also define, with an abuse of notation, for $x_i=(x_i^1,\dots,x_i^d)\in X$ ($i=1,2$),  $Q(x_1,x_2):=Q(x_1^1,x_2^1)$.

Let $\mathcal H:=L^2(X;V)$ be the $L^2$-space (with respect to the Lebesgue measure on $X$) of $V$-valued functions on $X$. We keep the notation $J$ for the complex conjugation in $\mathcal H$. Let  $A^+(f)$, $A^-(f)$  ($f\in\mathcal H$) be linear operators acting in a separable Hilbert space $\mathfrak F$, with a dense domain $\mathfrak D\subset \mathfrak F$. 
We assume that the operators $A^+(f)$, $A^-(f)$ depend linearly on $f$, $A^+(f)^*\restriction_{\mathfrak D}=A^-(Jf)$, and the operators $A^+(f)$, $A^-(f)$ map $\mathfrak D$ into itself. We introduce operator-valued distributions
\begin{equation}\label{vftye6u}
A^+(x)=(A_1^+(x),\dots,A^+_r(x)),\quad A^-(x)=(A_1^-(x),\dots,A^-_r(x))\quad (x\in X)\end{equation}
that satisfy, for all $f(x)=\sum_{k=1}^r f^k(x)e_k\in\mathcal H$,
\begin{equation}\label{vctesay5ra5}
 A^\sharp(f)=\sum_{k=1}^r\int_X f^k(x)A_k^\sharp(x)\,dx,\quad \sharp\in\{+,-\}.\end{equation}
For $u,v\in V$, denote $\langle u,v\rangle_V:=(u,Jv)_V=\sum_{k=1}^r u^kv^k$.
Then we may write formula~\eqref{vctesay5ra5} in the heuristic form $A^\sharp(f)=\int_X \langle f(x),A^\sharp(x)\rangle_V\,dx$. Analogously, we write, for a product of two operators,
\begin{align*}
A^{\sharp_1}(f_1) A^{\sharp_2}(f_2)=\int_{X^2}\big\langle f_1(x_1)\otimes f_2(x_2),A^{\sharp_1}(x_1)\otimes A^{\sharp_2}(x_2)\big\rangle_{V^{\otimes 2}}\, dx_1\, dx_2,\quad \sharp_1,\sharp_2\in\{+,-\}.
\end{align*}

We say that the operators $A^+(f)$, $A^-(f)$ satisfy the $Q$-MCR  if 
\begin{align}
&\int_{X^2}\big\langle f_1(x_1)\otimes f_2(x_2),A^+(x_1)\otimes A^+(x_2)\big\rangle _{V^{\otimes 2}}\,dx_1\,dx_2\notag\\
&\quad =
\int_{X^2}\big\langle Q(x_2,x_1) f_1(x_1)\otimes f_2(x_2),A^+(x_2)\otimes A^+(x_1)\big\rangle_{V^{\otimes 2}}\,dx_1\,dx_2,\label{xseas5aq53q}\\
&\int_{X^2}\big\langle f_1(x_1)\otimes f_2(x_2),A^-(x_1)\otimes A^-(x_2)\big\rangle _{V^{\otimes 2}}\,dx_1\,dx_2\notag\\
&\quad =
\int_{X^2}\big\langle \widehat Q(x_2,x_1) f_1(x_1)\otimes f_2(x_2),A^-(x_2)\otimes A^-(x_1)\big\rangle _{V^{\otimes 2}}\,dx_1\,dx_2,\label{buyfr86r68}\\
&\int_{X^2}\big\langle f_1(x_1)\otimes f_2(x_2),A^-(x_1)\otimes A^+(x_2)\big\rangle _{V^{\otimes 2}}\,dx_1\,dx_2 = \int_X \langle f_1(x),f_2(x)\rangle _V\,dx\notag\\
&\qquad +
\int_{X^2}\big\langle \widetilde Q(x_1,x_2) f_1(x_1)\otimes f_2(x_2),A^+(x_2)\otimes A^-(x_1)\big\rangle _{V^{\otimes 2}}\,dx_1\,dx_2.\label{yqdqidi}
\end{align}
In these formulas, $\widehat Q(x_1,x_2):=\mathbb S^{(2)}\, Q(x_2,x_1)\,\mathbb S^{(2)}$, where the antilinear operator $\mathbb S^{(2)}$ in $V^{\otimes 2}$ is given by 
$\mathbb S^{(2)}(u\otimes v):=(Jv)\otimes (Ju)$, 
and $\widetilde Q(x_1,x_2)$ satisfies 
\begin{equation}\label{tydse5asw}
\big\langle \widetilde Q(x_1,x_2)e_i\otimes e_j,e_k\otimes e_l\big\rangle _{V^{\otimes 2}}=\big\langle Q(x_1,x_2)e_k \otimes e_i,e_l\otimes e_j\big\rangle _{V^{\otimes 2}}\quad \forall i,j,k,l\in\{1,\dots,r\}. 
\end{equation}
Note that  the double integrals appearing on the right hand side of formulas \eqref{xseas5aq53q}--\eqref{yqdqidi} are assumed to be well-defined. Below, in Section~\ref{fgdyjdde}, we will recall the Fock representation of the $Q$-MCR.

Note that the $Q$-MCR \eqref{xseas5aq53q}--\eqref{yqdqidi} can be written in the following shorthand form:
\begin{align}
A^+(x_1)\otimes A^+(x_2)&=Q(x_2,x_1)^TA^+(x_2)\otimes A^+(x_1),\notag\\ 
A^-(x_1)\otimes A^-(x_2)&=\widehat Q(x_2,x_1)^TA^-(x_2)\otimes A^-(x_1),\notag\\
A^-(x_1)\otimes A^+(x_2)&=\delta(x_1-x_2)\operatorname{Tr}(\cdot)+\widetilde Q(x_1,x_2)^TA^+(x_2)\otimes A^-(x_1).\label{ydsydq}
\end{align}
Here, for $M\in\mathcal L(V^{\otimes 2})$, $M^T$ is the transposed of $M$; for $v^{(2)}\in V^{\otimes 2}$, $\operatorname{Tr}(v^{(2)}):=\sum_{k=1}^r (v^{(2)},e_k\otimes e_k)_{V^{\otimes 2}}$ is the trace\footnote{Note that, in the usual  way, $V^{\otimes 2}$ can be identified with $\mathcal L(V)$,  and then  $\operatorname{Tr}(v^{(2)})$ becomes  the   trace of the linear operator $v^{(2)}$.} of $v^{(2)}$; and for $f_1,f_2\in L^2(X;\mathbb C)$,
$$\int_{X^2} f_1(x_1)f_2(x_2)\delta(x_1-x_2)dx_1\,dx_2:=\int_X  f_1(x)f_2(x)dx.$$

When $r=1$ (i.e., the quantum system has a single component), $Q(x_1,x_2)$ is just a complex-valued function satisfying $|Q(x_1,x_2)|=1$ and $\overline{Q(x_1,x_2)}=Q(x_2,x_1)$ while equation~\eqref{v6w6u34} is trivially  satisfied. Furthermore, $\widehat Q(x_1,x_2)=Q(x_1,x_2)$ and $\widetilde Q(x_1,x_2)=Q(x_1,x_2)$. Hence, formulas \eqref{ydsydq} become
\begin{gather}
A^+(x_1)A^+(x_2)=Q(x_2,x_1)A^+(x_2)A^+(x_1),\quad A^-(x_1)A^-(x_2)=Q(x_2,x_1)A^-(x_2)A^-(x_1),\notag\\
A^-(x_1)  A^+(x_2)=\delta(x_1-x_2)+Q(x_1,x_2)A^+(x_2) A^-(x_1).\label{gdstsj6e5}
\end{gather}
We will call formulas \eqref{gdstsj6e5} the $Q$-anyon commutation relations ($Q$-ACR), see e.g.\ \cite{LM,BLW} and the references in \cite{DKLP}. 
For $Q(x_1,x_2)\equiv 1$, formulas \eqref{gdstsj6e5} become the canonical commutation relations (CCR), and for $Q(x_1,x_2)\equiv -1$ they become the canonical anticommutation relations (CAR).

\subsection{Quasi-free states}

In the theory of the CCR and CAR algebras, quasi-free states play a fundamental role. We refer the reader to e.g.\ \cite{A,AS}, Chapter~5 in \cite{BR}, or Chapter~17 in  \cite{DG}.

Let $\mathbb A$ be the  unital $*$-algebra generated by the operators $A^+(f)$, $A^-(f)$  ($f\in \mathcal H$), satisfying either the CCR or CAR. Define, for $f\in \mathcal H$, the field (or Segal-type) operators 
$B(f)=A^+(f)+A^-(Jf)$. Since 
$A^+(f)=\frac{1}{2}\big(B(f)-iB(if)\big)$ and $A^-(f)=\frac{1}{2}\big(B(Jf)+iB(iJf)\big)$, the algebra $\mathbb A$ is generated by $B(f)$ ($f\in \mathcal H$). Let $\tau$ be a state on $\mathbb A$. Then $\tau$ is completely determined by the moments of $B(f)$  under $\tau$. Intuitively, the state $\tau$ being quasi-free means that the moments of $B(f)$ can be calculated similarly to the case where $\tau$ is the vacuum state on the Fock representation of the CCR or CAR, respectively. For example, in the case of the CAR, this means that the odd moments are equal to zero and 
\begin{equation}\label{ye5y373}
\tau\big(B(f_1)B(f_2)\dotsm B(f_{2n})\big)=\sum_\xi (-1)^{\operatorname{cross}(\xi)}\prod_{\substack
{\{i,j\}\in \xi\\ i<j}} \tau\big(B(f_i)B(f_j)\big).\end{equation}
Here, the summation is over all partitions $\xi$ of the set $\{1,2,\dots,2n\}$ into $n$ two-point sets and  $\operatorname{cross}(\xi)$ denotes the number of crossings in $\xi$, i.e., the number of all choices of $\{i,j\},\,\{k,l\}\in \xi$ with $i<k<j<l$.

A state $\tau$ is called gauge-invariant if, for any $q\in \mathbb C$, $\vert q\vert=1$, $\tau$ remains invariant under the transformation 
$A^+(f)\mapsto qA^+(f)$, $A^-(f)\mapsto \overline qA^-(f)$, equivalently 
$$\tau\big(B(qf_1)\dotsm B(qf_n)\big)=\tau\big(B(f_1)\dotsm B(f_n)\big)\quad\forall n.$$

Due to the commutation relations,  any state $\tau$ is completely characterized by the  $n$-point functions, 
\begin{equation}\label{ts6uebbb}
\mathbf S^{(m,n)}(f_1,\dots,f_m,g_1,\dots,g_n)=\tau\big(A^+(f_1)\dotsm A^+(f_m)A^-(g_1)\dotsm A^-(g_n)\big).
\end{equation}
As easily seen, a state $\tau$ is gauge-invariant if and only if $\mathbf S^{(m,n)}=0$ if $m\neq n$. In fact, a state $\tau$ is gauge-invariant quasi-free if and only if $\mathbf  S^{(m,n)}=0$ if $m\neq n$ and 
\begin{equation}
\mathbf S^{(n,n)}(f_n,\dots,f_1,g_1,\dots,g_n)=\operatorname{per}\left[\mathbf S^{(1,1)}(f_i,g_j)\right]_{i,j=1,\dots,n}=\sum_{\pi \in S_n} \prod_{i=1}^n \mathbf  S^{(1,1)}(f_i,g_{\pi(i)})\label{cfs5w5u64wu}
\end{equation}
for the CCR algebra, and 
\begin{equation}\label{jfdu5wu5}
\mathbf S^{(n,n)}(f_n,\dots,f_1,g_1,\dots,g_n)=\operatorname{det}\left[\mathbf S^{(1,1)}(f_i,g_j) \right]_{i,j=1,\dots,n}=\sum_{\pi \in S_n}\text{sgn}(\pi) \prod_{i=1}^n \mathbf S^{(1,1)}(f_i,g_{\pi(i)})\end{equation}
for the CAR algebra. 

Araki, Woods \cite{AWoods} and Araki, Wyss \cite{AWyss} presented an explicit construction of the gauge-invariant quasi-free states on the CCR algebra and the CAR algebra, respectively. This has been done by constructing a non-Fock representation of the commutation relations in the symmetric, respectively antisymmetric Fock space over $\mathcal H\oplus\mathcal H$ (doubling of the underlying space) and applying the vacuum state to this representation.  

In \cite{anyons}, gauge-invariant quasi-free states on the algebra of the anyon commutation relations (ACR) were defined and constructed. The definition of such a state was based on an extension of formulas \eqref{cfs5w5u64wu}, \eqref{jfdu5wu5} to the ACR case. In particular, a  counterpart of the sign of the permutation $\pi$ in formula \eqref{jfdu5wu5} becomes the function 
\begin{equation}\label{vcytrd6u4we}
Q_{\pi}(x_1,\dots,x_n)=\prod_{\substack{1\le i<j\le n\\ \pi(i)>\pi(j)} }Q(x_i,x_j).\end{equation}
Note that, for $Q(\cdot,\cdot)\equiv -1$, one indeed  has $Q_\pi(x_1,\dots,x_n)\equiv\text{sgn}(\pi)$. A gauge-invar\-iant quasi-free state on the ACR algebra was constructed  as the vacuum state on a  representation of the ACR in the Fock space of $\mathbf Q$-symmetric functions with the underlying space $\mathbf X:=X_1\sqcup X_2$. Here $X_1$ and $X_2$ are two copies of the space $X$ and the function $\mathbf Q$ is defined as follows: $\mathbf Q(x_1,x_2):=Q(x_1,x_2)$ if both $x_1$ and $x_1$ are from the same part, $X_i$ ($i=1,2$), and $\mathbf Q(x_1,x_2):=Q(x_2,x_1)$ if $x_1$ and $x_2$ are from  different parts, $X_i$ and $X_j$, respectively, with $i\ne j$. In the case of a CCR or CAR algebra this construction of gauge-invariant quasi-free states  reduces to the construction of Araki--Woods or Araki--Wyss, respectively. 

\subsection{A brief description of the results}

In this paper, we consider a multicomponent system with a continuous operator-valued function $Q(x_1,x_2)$. We define a $Q$-MCR algebra, which contains multiple integrals as those appearing on the right-hand side of formulas \eqref{xseas5aq53q}--\eqref{yqdqidi}. 

For the vacuum state defined on the Fock representation of the $Q$-MCR, we calculate the moments  of the field operators. Similarly to \eqref{ye5y373}, this allows us to come up with formulas which may potentially be required for a state on the  $Q$-MCR algebra to be quasi-free. 
However, it appears that even  the gauge-invariant quasi-free states on the $Q$-ACR algebra constructed in  \cite{anyons} do not satisfy these conditions. The reason for this is the very definition of the function $\mathbf  Q(x_1,x_2)$. Indeed, when one evaluates the $n$-point functions $\mathbf  S^{(n,n)}$ for such a system, only the terms with all points from $X_1$ do not vanish. But for such points, $x_1,x_2\in X_1$, we have $\mathbf  Q(x_1,x_2)=Q(x_1,x_2)$. 
Hence, one comes up with an extension of formulas \eqref{cfs5w5u64wu}, \eqref{jfdu5wu5}  which contains the function $Q_\pi$ given by \eqref{vcytrd6u4we} in place of the sign of the permutation $\pi$ in the case of the CAR.  However, when one evaluates $\tau\big(B(f_1)\dotsm B(f_{2n})\big)$, one has to deal with terms containing points from both parts $X_1$ and $X_2$. Because of this, one is unable to come up with the function $Q_{\pi}(x_1,\dots,x_n)$, unless $Q(x_1,x_2)=Q(x_2,x_1)$. But the latter formula means that $Q(x_1,x_2)$ is either identically equal to $1$ (CCR), or identically equal to $-1$ (CAR). 

Thus, instead of using the term a {\it quasi-free state on the $Q$-MCR algebra}, we use the term a {\it strongly quasi-free state on the $Q$-MCR algebra} if a counterpart of formula \eqref{ye5y373} holds for  a state $\tau$ on the $Q$-MCR algebra. As a result, the gauge-invariant quasi-free states constructed on the $Q$-ACR algebra in \cite{anyons} are not strongly quasi-free. If $\mathbf S^{(m,n)}=0$ for $m\ne n$ and a counterpart of formulas \eqref{cfs5w5u64wu}, \eqref{jfdu5wu5} holds  for  a state $\tau$ on the $Q$-MCR algebra, then we call $\tau$ {\it gauge-invariant quasi-free}.

Our next aim is to construct gauge-invariant quasi-free states on the $Q$-MCR algebra. One hopes that the construction from \cite{anyons} of the gauge-invariant quasi-free states on the  $Q$-ACR algebra (which in turn includes the Araki--Woods and Araki--Wyss constructions as special cases) can be extended to the case of a multicomponent system. We show that this indeed can be achieved, however under  additional assumptions on the operator-valued function $Q$. It should be stressed that these assumptions are essentially  necessary for the $Q$-MCR to hold for this construction. 

Furthermore, if it additionally holds that $\widetilde Q(x_1,x_2)=Q(x_2,x_1)$, then the corresponding gauge-invariant quasi-free states are also strongly quasi-free.

We show that the developed theory is applicable to systems that contain quasiparticles of opposite type. An example of such a system is  a two-component system in which two quasiparticles, under exchange, change their respective types to the opposite ones  ($1\mapsto 2$, $2\mapsto1$).  More precisely, let $q_1$ and $q_2$ be complex-valued continuous functions on $\R^2$ satisfying $\overline{q_i(y_1,y_2)}=q_i(y_2,y_1)$, $|q_i(y_1,y_2)|=1$, and let the permutation $\theta\in S_2$ be given by $\theta(1)=2$, $\theta(2)=1$. Then the $Q$-MCR in this case are of the following form: 
\begin{align}
A_i^+(x_1)A_i^+(x_2)&=q_1(y_2,y_1)A_{\theta(i)}^+(x_2)A_{\theta(i)}^+(x_1),\notag\\
 A_i^+(x_1)A_{j}^+(x_2)&=q_2(y_2,y_1)A_i^+(x_2)A_{j}^+(x_1),\quad i\ne j,\notag\\
A_i^-(x_1)A_i^-(x_2)&=q_1(y_2,y_1)A_{\theta(i)}^-(x_2)A_{\theta(i)}^-(x_1),\notag\\
 A_i^-(x_1)A_{j}^-(x_2)&=q_2(y_2,y_1)A_i^-(x_2)A_{j}^-(x_1),\quad i\ne j,\notag\\
A_i^-(x_1)A_i^+(x_2)&=\delta(x_1-x_2)+q_2(y_1,y_2)A_{\theta(i)}^+(x_2)A_{\theta(i)}^-(x_1),\notag\\
 A_i^-(x_1)A_{j}^+(x_2)&=q_1(y_1,y_2)A_i^+(y_2)A_{j}^-(x_1),\quad i\ne j,\label{utfdT}
\end{align}
for $i,j\in\{1,2\}$.    For the corresponding $Q$-MCR algebra $\mathbb A$, our theory gives a class of gauge-invariant quasi-free states $\tau$ on $\mathbb A$.

If we additionally assume that $q_1(y_1,y_2)=q_2(y_2,y_1)$, then   the condition $\widetilde Q(x_1,x_2)=Q(x_2,x_1)$ is satisfied. Hence,  in this case, each gauge-invariant quasi-free state $\tau$ constructed in the paper is  strongly quasi-free, a property that the gauge-invariant quasi-free states on the ACR algebra do not possess. 

Fusion of (non-abelian) anyons plays a central role in topological quantum computation, see e.g.\ \cite[Chapter~4]{Pachos} or \cite[Section~12.1]{Stanescu}.  Intuitively, fusion means putting several anyons in an infinitely small box and identifying the statistical behaviour of the box. Let $k\ge3$ be an odd number. We show that fusion of $k$ quasiparticles that are  described by the commutation relations  \eqref{utfdT} leads to a new, nontrivial exchange function $Q(x_1,x_2)$ to which our theory of quasi-free states is applicable. Note that, in this case, we have $V=(\mathbb C^2)^{\otimes k}$, i.e., the quantum system has  $2^k$ components. 

The paper is organized as follows. In Section~\ref{vcyrte6ie4w}, we define the $Q$-MCR algebra and discuss some of its basic properties. In particular, we show that the $Q$-MCR algebra allows Wick (normal) ordering. In Section~\ref{d6es6use}, we briefly discuss general states on the $Q$-MCR algebra. In Section~\ref{fgdyjdde}, we construct the Fock state  on the $Q$-MCR algebra. The main result of this section, Theorem~\ref{radiaition12345}, gives an explicit formula for the moments of sums of creation and annihilations operators with respect to the Fock state.  

The main results of the paper are in Section~\ref{kljiun6958} and  \ref{vcrte6u}. In Section~\ref{kljiun6958}, we use Theorem~\ref{radiaition12345} to define gauge-invariant quasi-free states and strongly quasi-free states on the $Q$-MCR algebra. Under several assumptions on the exchange (operator-valued) function $Q$, we explicitly construct such quasi-free states (Theorem~\ref{buty7e6u}). In Section~\ref{vcrte6u}, we present examples of the function $Q$ that satisfies the conditions of  Theorem~\ref{buty7e6u}. In particular, Theorem~\ref{rtsw5uw5ude} states the form of a function $Q$ that satisfies the conditions of  Theorem~\ref{buty7e6u} and describes a quantum system with quasiparticles of opposite type.  

Finally, we remark that, in the case where the condition $\widetilde Q(x_1,x_2)=Q(x_2,x_1)$ is satisfied and hence strongly quasi-free states exist, one may hope to construct strongly quasi-free states that are not gauge-invariant.  

\section{The $Q$-MCR algebra}\label{vcyrte6ie4w}

Let $Q:X^2\to\mathcal L(V^{\otimes 2})$ be a continuous function as in Subsection~\ref{cydyccdfe}. Our aim is to define the $Q$-MCR algebra. 

To simplify notation, we denote $Y:=\R$, $Z:=\R^{d-1}$, so that $X=Y\times Z$. Respectively, for a given $x\in X$, we denote $y:=x^1\in Y$ and $z:=(x^2,\dots,x^d)\in Z$, so that $x=(y,z)$. We denote $\mathcal G:=L^2(Z;\mathbb C)$. Then   
$$\mathcal H=L^2(X; V)=L^2(Y; V)\otimes \mathcal G.$$
 We denote by  $C_0(Y^n;V^{\otimes n}) $ the vector space of all continuous functions $\varphi^{(n)}:Y^n \to V^{\otimes n}$ with compact support. We denote by 
 $\mathfrak F^{(n)}$ the vector space spanned  by functions $f^{(n)}:X^n\to V^{\otimes n}$  of the form 
\begin{equation}\label{tsar43}
f^{(n)}(x_1,\dots,x_n)=\varphi^{(n)}(y_1,\dots,y_n)g_1(z_1),\dotsm g_n(z_n)\end{equation}
 with $\varphi^{(n)} \in C_0(Y^n ;V^{\otimes n})$ and $g_1,\dots,g_n \in \mathcal G$. 
 
 \begin{remark} Below we will use the trivial facts  that, for $f^{(m)}\in \mathfrak F^{(m)}$,   $g^{(n)}\in \mathfrak F^{(n)}$ and  a continuous function $C:Y^2\to \mathcal L(V^{\otimes 2})$,
 $$f^{(m)}(x_1,\dots,x_m)\otimes g^{(n)}(x_{m+1},\dots,x_{m+n})\in  \mathfrak F^{(m+n)},$$
and 
 $$ C_i(y_i,y_{i+1})f^{(m)}(x_1,\dots,x_{i+1},x_i,\dots,x_m) \in  \mathfrak F^{(m)},\quad i=1,\dots,m-1.$$
Furthermore, for $u,v\in V$, $\langle u,v\rangle_V=\operatorname{Tr}(u\otimes v)$.
 \end{remark}

We define the $Q$-MCR algebra $\mathbb A$ as follows. Consider the following (formal for now) operator-valued integrals:
\begin{equation}\label{vtrs56e57}
\int_{X^n} \langle f^{(n)}(x_1,\dots,x_n), A^{\sharp_1}(x_1)\otimes \dotsm \otimes A^{\sharp_n}(x_n)\rangle_{V^{\otimes n}}\, dx_1\dotsm  dx_n \end{equation}
with $f^{(n)}\in \mathfrak F^{(n)}$ and $\sharp_1,\dots,\sharp_n \in \{+,-\}$ ($n\in\mathbb N$), the integral in \eqref{vtrs56e57} depending on $f^{(n)}$ linearly. To shorten notation, we will denote the integral in \eqref{vtrs56e57} by
\begin{equation}\label{cftrs5y64e}
\Phi(f^{(n)};\sharp_1,\dots,\sharp_n)=\Phi(f^{(n)}(x_1,\dots,x_n);\sharp_1,\dots,\sharp_n).\end{equation}
As a vector space, $\mathbb A$ is spanned by the identity operator and integrals of the form \eqref{vtrs56e57}, equivalently \eqref{cftrs5y64e}.
Then $\mathbb A$  is defined as the unital $*$-algebra with the product 
\begin{align}
&\Phi(f^{(m)};\sharp_1,\dots,\sharp_m)\Phi(g^{(n)};\sharp_{m+1},\dots,\sharp_{m+n})\notag\\
&\quad =\Phi(f^{(m)}(x_1,\dots,x_m)\otimes g^{(n)}(x_{m+1},\dots,x_{m+n});\sharp_1,\dots,\sharp_{m+n})\label{tera4qy}
\end{align}
and the $*$-operation 
\begin{equation}\label{vftrd6e7}
\Phi(f^{(n)};\sharp_1,\dots,\sharp_n)^*= \Phi(\mathbb S^{(n)}f^{(n)}(x_n,\dots,x_1);-\sharp_n,\dots,-\sharp_1).
\end{equation}
Here, for $\sharp\in\{+,-\}$, $-\sharp$ denotes the opposite sign of $\sharp$,   and 
 the antilinear operator $\mathbb S^{(n)}: V^{\otimes n}\to V^{\otimes n}$  is defined by 
\begin{equation*}
\mathbb S^{(n)}v_1\otimes \dotsm \otimes v_n=(Jv_n)\otimes \dotsm \otimes (Jv_1),\quad v_1,\dots,v_n\in V.
\end{equation*}

Furthermore, elements of $\mathbb A$ are subject to the commutation relations \eqref{xseas5aq53q}--\eqref{yqdqidi}. More precisely, if $i\in \{1,\dots,n-1\}$ and $\sharp_i=\sharp_{i+1}=+$, then 
\begin{equation}\label{gdtrs5u4}
\Phi(f^{(n)};\sharp_1,\dots,\sharp_n)=\Phi\big(Q_i(x_i,x_{i+1})f^{(n)}(x_1,\dots,x_{i+1},x_i,\dots,x_n);\sharp_1,\dots,\sharp_n\big),
\end{equation}
if $\sharp_i=\sharp_{i+1}=-$, then 
\begin{equation}\label{vcerw5q5q}
\Phi(f^{(n)};\sharp_1,\dots,\sharp_n)=\Phi\big(\widehat Q_i(x_{i},x_{i+1})f^{(n)}(x_1,\dots,x_{i+1},x_i,\dots,x_n);\sharp_1,\dots,\sharp_n\big),
\end{equation}
and if  $\sharp_i=-,\,\sharp_{i+1}=+$, then
\begin{align}
\Phi(f^{(n)};\sharp_1,\dots,\sharp_n)&=\Phi(\widetilde Q_i(x_{i+1},x_i)f^{(n)}(x_1,\dots,x_{i+1},x_i,\dots, x_n);\sharp_1,\dots,\sharp_{i+1},\sharp_i,\dots,\sharp_n)\notag\\
&\quad + \Phi(g^{(n-2)};\sharp_1,\dots,\sharp_{i-1},\sharp_{i+2},\dots,\sharp_n),\label{ctrw53}
\end{align}
where
\begin{equation}\label{cdtesw5w}
g^{(n-2)}(x_1,\dots,x_{n-2}):= \int_X \operatorname{Tr}_if^{(n)}(x_1,\dots,x_{i-1},x,x,x_{i},\dots,x_{n-2})\,dx\in \mathfrak F^{(n-2)}. \end{equation}
Here  $\operatorname{Tr}_i:V^{\otimes n}\to V^{\otimes (n-2)}$ acts by the functional $\operatorname{Tr}:V^{\otimes 2} \to\mathbb C$ on the $i$th and $(i+1)$th components of $V^{\otimes n}$. 

For $m,n\ge0$ with $m+n\ge1$ and $f^{(m+n)}\in \mathfrak F^{(m+n)}$, denote $W^{(m,n)}(f^{(m+n)}):=\Phi(f^{(m+n)};\sharp_1,\dots,\sharp_{m+n})$
with $\sharp_1=\dots=\sharp_m=+$, $\sharp_{m+1}=\dots=\sharp_{m+n}=-$. Thus, $W^{(m,n)}(f^{(m+n)})$ is a Wick-ordered element of the $Q$-MCR algebra $\mathbb A$.
Let also $\mathfrak{SF}^{(m,n)}$ consist of all elements $f^{(m+n)}\in \mathfrak{F}^{(m+n)}$ that satisfy 
\begin{equation}\label{vcfydst6}
f^{(m+n)}(x_1,\dots,x_{m+n})=Q_i(x_i,x_{i+1})f^{(m+n)}(x_1,\dots,x_{i+1},x_i,\dots,x_{m+n})
\end{equation}
for all $i =1,\dots, m-1$,  and 
\begin{align*}
f^{(m+n)}(x_1,\dots,x_{m+n})=\widehat Q_i(x_{i},x_{i+1})f^{(m+n)}(x_1,\dots,x_{i+1},x_i,\dots,x_{m+n})
\end{align*}
for all $i=m+1,\dots,m+n-1$. In words, $f^{(m+n)}$ is $Q$-symmetric in the first $m$ variables and $\widehat Q$-symmetric in the last $n$ variables. 

The following proposition states that $\mathbb A$ is a Wick algebra, i.e., it allows Wick (normal) ordering. 

\begin{proposition}\label{tye7i43}
Each element of the $Q$-MCR algebra $\mathbb A$ can be represented in the form 
\begin{equation}\label{askjndsx}
c\mathbf{1}+\sum_{m,n\ge0,\ m+n \ge 1}W^{(m,n)}(f^{(m,n)}) 
\end{equation}
where $c \in \mathbb C$ and $f^{(m,n)} \in \mathfrak {SF}^{(m,n)}$. The sum in \eqref{askjndsx} is finite.
\end{proposition}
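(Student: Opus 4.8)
The plan is to reduce the claim to a single generator and then normal-order it by a double induction. Because the product in $\mathbb A$ is given by \eqref{tera4qy}, a product of generators $\Phi(\cdot\,;\cdot)$ is again a single generator with a tensor-product kernel; hence, as recorded after \eqref{cftrs5y64e}, $\mathbb A$ is spanned as a vector space by $\mathbf1$ and by the elements $\Phi(f^{(n)};\sharp_1,\dots,\sharp_n)$ with $n\ge1$, $f^{(n)}\in\mathfrak F^{(n)}$. By linearity in the summands of \eqref{askjndsx}, it then suffices to show that each such generator is a \emph{finite} sum of the form \eqref{askjndsx}.

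To a sign pattern $w=(\sharp_1,\dots,\sharp_n)$ I would attach its inversion number $N(w):=\#\{(i,j):i<j,\ \sharp_i=-,\ \sharp_j=+\}$ and induct on the pair $(n,N(w))$ ordered lexicographically, with $\mathbf1$ (degree $0$) as base case. If $N(w)=0$ then $w=(+,\dots,+,-,\dots,-)$ with some $m$ pluses, so $\Phi(f^{(n)};w)=W^{(m,n-m)}(f^{(n)})$ is already Wick-ordered, and only the symmetrization of its kernel (last paragraph) is left. If $N(w)>0$, then $w$ is not sorted and there is a least $i$ with $\sharp_i=-$, $\sharp_{i+1}=+$; applying \eqref{ctrw53} at this $i$ expresses $\Phi(f^{(n)};w)$ as the sum of (i) $\Phi\big(\widetilde Q_i(x_{i+1},x_i)f^{(n)}(x_1,\dots,x_{i+1},x_i,\dots,x_n);w'\big)$, where $w'$ is $w$ with entries $i,i+1$ transposed, and (ii) a degree-$(n-2)$ term $\Phi(g^{(n-2)};\cdot)$ with $g^{(n-2)}\in\mathfrak F^{(n-2)}$ as in \eqref{cdtesw5w} (for $n=2$ this is a scalar multiple of $\mathbf1$, absorbed into $c$). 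A one-line count shows that replacing an adjacent $(-,+)$ by $(+,-)$ removes exactly one inversion and introduces none, so $N(w')=N(w)-1$; and $\widetilde Q_i(x_{i+1},x_i)f^{(n)}(x_1,\dots,x_{i+1},x_i,\dots,x_n)\in\mathfrak F^{(n)}$ by the Remark preceding the proposition, since $\widetilde Q$ is continuous by \eqref{tydse5asw}. Thus (i) is covered by the inner inductive hypothesis and (ii) by the outer one; their sum has the form \eqref{askjndsx}, and it is finite because each step replaces one generator by at most two generators of degree $\le n$ and the ordering is well-founded.

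Finally I would show that a Wick-ordered generator $W^{(m,k)}(h)$, $h\in\mathfrak F^{(m+k)}$, equals $W^{(m,k)}(\tilde h)$ for some $\tilde h\in\mathfrak{SF}^{(m,k)}$. Put $T_i^+h:=Q_i(x_i,x_{i+1})h(x_1,\dots,x_{i+1},x_i,\dots,x_{m+k})$ for $i=1,\dots,m-1$ and $T_i^-h:=\widehat Q_i(x_i,x_{i+1})h(x_1,\dots,x_{i+1},x_i,\dots,x_{m+k})$ for $i=m+1,\dots,m+k-1$; by \eqref{gdtrs5u4} and \eqref{vcerw5q5q} one has $W^{(m,k)}(h)=W^{(m,k)}(T_i^\pm h)$, and by the Remark each $T_i^\pm$ preserves $\mathfrak F^{(m+k)}$. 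Since $Q(y_1,y_2)Q(y_2,y_1)$ is the identity (unitarity together with $Q^*(y_1,y_2)=Q(y_2,y_1)$), each $T_i^+$ is an involution; the functional Yang--Baxter equation \eqref{v6w6u34} yields the braid relation $T_i^+T_{i+1}^+T_i^+=T_{i+1}^+T_i^+T_{i+1}^+$; and $T_i^+$ commutes with $T_j^+$ when $|i-j|\ge2$. Hence $i\mapsto T_i^+$ extends to a representation $\sigma\mapsto T_\sigma^+$ of $S_m$ in the first $m$ variables, and $\mathcal S^+:=\frac1{m!}\sum_{\sigma\in S_m}T_\sigma^+$ is a projection onto the functions $Q$-symmetric in those variables, i.e.\ satisfying \eqref{vcfydst6}. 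The same works for $\widehat Q$ in the last $k$ variables, giving a projection $\mathcal S^-$; the properties of $\widehat Q$ used here (unitarity, $\widehat Q^*(y_1,y_2)=\widehat Q(y_2,y_1)$, Yang--Baxter) follow by a direct computation from those of $Q$ and the fact that $\mathbb S^{(2)}$ is an antiunitary involution. Then $\tilde h:=\mathcal S^+\mathcal S^-h$ lies in $\mathfrak{SF}^{(m,k)}$ (as $\mathcal S^+$ and $\mathcal S^-$ act on disjoint variable groups and commute), and iterating $W^{(m,k)}(h)=W^{(m,k)}(T_i^\pm h)$ with the linearity of $W^{(m,k)}$ gives $W^{(m,k)}(h)=W^{(m,k)}(\tilde h)$. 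The one genuinely structural part of the argument is this last paragraph: the Yang--Baxter equation and $Q^*(y_1,y_2)=Q(y_2,y_1)$ are exactly what make the $Q$- and $\widehat Q$-symmetrizations well-defined idempotents with image $\mathfrak{SF}^{(m,k)}$, and transferring these properties from $Q$ to $\widehat Q$ is the main point to verify; the reduction, the inversion-count induction and the finiteness claim are routine.
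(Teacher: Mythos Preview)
Your proof is correct and follows essentially the same route as the paper: the paper handles the Wick-ordering step in one line (``The commutation relation \eqref{ctrw53} implies\dots''), whereas you spell out the underlying inversion-count induction, and your symmetrization argument via the representation of $S_m$ built from the $T_i^\pm$ is exactly the paper's construction of $P_m\widehat P_n$ from the operators $U_i$, $\widehat U_i$. The only difference is the level of detail in the first step.
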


\begin{proof} The commutation relation \eqref{ctrw53} implies that each element of $\mathbb A$ can be represented in the form \eqref{askjndsx} with $c\in\mathbb C$ and $f^{(m,n)}\in\mathfrak F^{(m+n)}$.  So we only need to prove that the functions $f^{(m,n)}$ can be chosen from $\mathfrak {SF}^{(m,n)}$. 

For $i=1,\dots,m-1$, define $U_i:\mathfrak F^{(m+n)}\to \mathfrak F^{(m+n)}$ by 
\begin{equation}\label{vcrtw46ue3i6}
(U_if^{(m+n)})(x_1,\dots,x_{m+n}):=Q_i(x_i,x_{i+1})f^{(m+n)}(x_1,\dots,x_{i+1},x_i,\dots,x_{m+n}).\end{equation}
The functional Yang--Baxter equation  \eqref{v6w6u34} implies that, for $i=1,\dots,m-2$,  $U_iU_{i+1}U_i=U_{i+1}U_iU_{i+1}$ (the Yang--Baxter equation). Furthermore, we obviously have $U_iU_j=U_jU_i$ if $|i-j|\ge2$, and $U_i^2=\mathbf 1$. Therefore, we can construct a representation of the symmetric group $S_m$ by setting $U_\pi:=U_{i_1}U_{i_2}\dotsm U_{i_k}$ where $\pi \in S_m$ has an (arbitrary) representation $\pi=\pi_{i_1}\pi_{i_2}\dotsm \pi_{i_k}$. Here, for $i=1,\dots,m-1$, $\pi_i$ is the adjacent transposition of $i$ and $i+1$. Let $P_m:=\frac1{m!}\sum_{\pi\in S_m}U_\pi$. 

Next, for $i=m+1,\dots,m+n-1$, we define  $\widehat U_i:\mathfrak F^{(m+n)}\to \mathfrak F^{(m+n)}$ by 
$$(\widehat U_if^{(m+n)})(x_1,\dots,x_{m+n}):=\widehat Q_i(x_{i},x_{i+1})f^{(m+n)}(x_1,\dots,x_{i+1},x_i,\dots,x_{m+n}).$$
It follows from \eqref{v6w6u34} and the definition of $\widehat Q(\cdot,\cdot)$ that  $\widehat Q(\cdot,\cdot)$  also satisfies the functional Yang--Baxter equation. Hence, we can similarly define $\widehat P_n:=\frac1{n!}\sum_{\pi\in S_n}\widehat U_\pi$, where $S_n$ is interpreted as the group of permutations of $m+1,\dots,m+n$. 

It is easy to see that $P_m\widehat P_n$ maps $\mathfrak F^{(m+n)}$ onto $\mathfrak {SF}^{(m,n)}$. Furthermore, by  \eqref{gdtrs5u4} and \eqref{vcerw5q5q}, 
for each $f^{(m+n)}\in\mathfrak F^{(m+n)}$, we have 
$$W^{(m,n)}(f^{(m+n)})=W^{(m,n)}(P_m\widehat P_n f^{(m+n)}).\qquad\qedhere $$
\end{proof}

The following proposition shows that, under an additional assumption on the operator-valued function $Q$, the algebra $\mathbb A$ allows anti-normal ordering.

\begin{proposition}\label{zzzzaaqqq}
Assume that, for all $(x_1,x_2)\in X^2$, the operator $\widetilde Q(x_1,x_2)\in\mathcal L(V^{\otimes 2}) $ is invertible. Furthermore, assume that there exists a constant $\varkappa \in \mathbb R$ such that, for all $x \in X$ and $v^{(2)} \in V^{\otimes 2}$, 
\begin{equation}\label{olhkgmknenj}
\operatorname{Tr}\big(\widetilde Q(x,x)^{-1}v^{(2)}\big)=\varkappa \operatorname{Tr}v^{(2)},\quad x\in X,\ v^{(2)} \in V^{\otimes 2}.
\end{equation}
Let $\sharp_1,\dots,\sharp_n\in\{+,-\}$ be such that, for some $i\in\{1,\dots,n-1\}$, $\sharp_i=+$ and $\sharp_{i+1}=-$. Then, for all $f^{(n)}\in\mathfrak F^{(n)}$,
\begin{align}
&\Phi^{(n)}\left(f^{(n)};\sharp_1,\dots,\sharp_n\right)\notag\\
&\quad=\Phi^{(n)}\left(\widetilde Q_i(x_i,x_{i+1})^{-1}f^{(n)}(x_1,\dots,x_{i+1},x_i,\dots,x_n); \sharp_1,\dots,\sharp_{i+1},\sharp_i,\dots,\sharp_n\right)\notag\\
&\qquad-\varkappa  \Phi(g^{(n-2)};\sharp_1,\dots,\sharp_{i-1},\sharp_{i+2},\dots,\sharp_n),\label{xaw45w}
\end{align}
where  $g^{(n-2)}$ is given by \eqref{cdtesw5w}.
\end{proposition}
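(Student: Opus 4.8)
The plan is to obtain \eqref{xaw45w} by running the commutation relation \eqref{ctrw53} backwards, which is possible precisely because of the two standing hypotheses. First I would fix signs $\rho_1,\dots,\rho_n$ with $\rho_i=-$, $\rho_{i+1}=+$ and $\rho_j=\sharp_j$ for $j\notin\{i,i+1\}$, and apply \eqref{ctrw53} to $\Phi(\phi^{(n)};\rho_1,\dots,\rho_n)$ for a yet-to-be-chosen $\phi^{(n)}\in\mathfrak F^{(n)}$. The first term on the right-hand side of \eqref{ctrw53} is then $\Phi\big(\widetilde Q_i(x_{i+1},x_i)\phi^{(n)}(x_1,\dots,x_{i+1},x_i,\dots,x_n);\sharp_1,\dots,\sharp_{i+1},\sharp_i,\dots,\sharp_n\big)$, whose sign string is exactly $\sharp_1,\dots,\sharp_n$ because $\sharp_i=+$ and $\sharp_{i+1}=-$. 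Demanding that its argument coincide with $f^{(n)}(x_1,\dots,x_n)$, applying $\widetilde Q_i(x_{i+1},x_i)^{-1}$ and relabelling $x_i\leftrightarrow x_{i+1}$, I get
$$\phi^{(n)}(x_1,\dots,x_n)=\widetilde Q_i(x_i,x_{i+1})^{-1}f^{(n)}(x_1,\dots,x_{i+1},x_i,\dots,x_n),$$
which is precisely the function inside the first $\Phi$ on the right of \eqref{xaw45w}. Since $Q$, hence $\widetilde Q$ by \eqref{tydse5asw}, is continuous and $\widetilde Q(x_1,x_2)$ is invertible for every $(x_1,x_2)$, the map $(x_1,x_2)\mapsto\widetilde Q(x_1,x_2)^{-1}$ is continuous, so $\phi^{(n)}\in\mathfrak F^{(n)}$ (cf.\ the Remark in Section~\ref{vcyrte6ie4w}) and \eqref{ctrw53} applies to it.

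It remains to identify the scalar term. The second term on the right of \eqref{ctrw53} applied to $\phi^{(n)}$ is $\Phi(h^{(n-2)};\sharp_1,\dots,\sharp_{i-1},\sharp_{i+2},\dots,\sharp_n)$ with $h^{(n-2)}(x_1,\dots,x_{n-2})=\int_X\operatorname{Tr}_i\phi^{(n)}(x_1,\dots,x_{i-1},x,x,x_i,\dots,x_{n-2})\,dx$. Setting the $i$th and $(i+1)$th arguments of $\phi^{(n)}$ both equal to $x$ makes the argument swap in $f^{(n)}$ trivial, so $\phi^{(n)}(x_1,\dots,x_{i-1},x,x,x_i,\dots,x_{n-2})=\widetilde Q_i(x,x)^{-1}f^{(n)}(x_1,\dots,x_{i-1},x,x,x_i,\dots,x_{n-2})$. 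Applying $\operatorname{Tr}_i$ and invoking \eqref{olhkgmknenj} — in the form $\operatorname{Tr}_i\circ\widetilde Q_i(x,x)^{-1}=\varkappa\operatorname{Tr}_i$ on $V^{\otimes n}$, valid because $\widetilde Q_i(x,x)^{-1}$ acts only on the $i$th and $(i+1)$th tensor slots — gives $\operatorname{Tr}_i\phi^{(n)}(\dots,x,x,\dots)=\varkappa\operatorname{Tr}_i f^{(n)}(\dots,x,x,\dots)$, whence $h^{(n-2)}=\varkappa g^{(n-2)}$ with $g^{(n-2)}$ as in \eqref{cdtesw5w}. Substituting back, \eqref{ctrw53} reads $\Phi(\phi^{(n)};\rho_1,\dots,\rho_n)=\Phi(f^{(n)};\sharp_1,\dots,\sharp_n)+\varkappa\Phi(g^{(n-2)};\sharp_1,\dots,\sharp_{i-1},\sharp_{i+2},\dots,\sharp_n)$; solving for $\Phi(f^{(n)};\sharp_1,\dots,\sharp_n)$ and rewriting $\Phi(\phi^{(n)};\rho_1,\dots,\rho_n)=\Phi\big(\widetilde Q_i(x_i,x_{i+1})^{-1}f^{(n)}(x_1,\dots,x_{i+1},x_i,\dots,x_n);\sharp_1,\dots,\sharp_{i+1},\sharp_i,\dots,\sharp_n\big)$ produces \eqref{xaw45w}.

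I do not expect a substantive obstacle: the statement is essentially that \eqref{ctrw53} can be inverted, and the two hypotheses are tailored exactly to this — invertibility of $\widetilde Q$ to solve for $\phi^{(n)}$ (and keep it in $\mathfrak F^{(n)}$), and the proportionality \eqref{olhkgmknenj} to control the contracted trace term when its two arguments coincide. The only point requiring genuine care is the bookkeeping of the interchange $x_i\leftrightarrow x_{i+1}$ and of the positions of the signs $\sharp_j$, together with the elementary observation that $\operatorname{Tr}_i$ composed with an operator supported on the $i$th and $(i+1)$th tensor components is governed entirely by the two-particle identity \eqref{olhkgmknenj}.
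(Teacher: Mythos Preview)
Your proof is correct and follows exactly the same approach as the paper: apply \eqref{ctrw53}--\eqref{cdtesw5w} to $\Phi^{(n)}\big(\widetilde Q_i(x_i,x_{i+1})^{-1}f^{(n)}(x_1,\dots,x_{i+1},x_i,\dots,x_n);\sharp_1,\dots,\sharp_{i+1},\sharp_i,\dots,\sharp_n\big)$ and use \eqref{olhkgmknenj} to identify the contracted term as $\varkappa\,\Phi(g^{(n-2)};\dots)$, then rearrange. The paper states this in one sentence; you have simply spelled out the bookkeeping.
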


\begin{proof}
Applying formulas \eqref{ctrw53}, \eqref{cdtesw5w} to 
$$\Phi^{(n)}\left(\widetilde Q_i(x_i,x_{i+1})^{-1}f^{(n)}(x_1,\dots,x_{i+1},x_i,\dots,x_n); \sharp_1,\dots,\sharp_{i+1},\sharp_i,\dots,\sharp_n\right)$$ 
and using the assumption \eqref{olhkgmknenj}, we get  \eqref{xaw45w}. \end{proof}

\section{States on the $Q$-MCR algebra}\label{d6es6use}

To study states on the $Q$-MCR algebra $\mathbb A$,  we need to make an assumption on their continuity.
To this end, we equip $C_0(Y^n;V^{\otimes n})$ with the Fr\'echet topology in which a sequence $(f_k)_{k=1}^\infty$ converges if and only if there exists a compact set $K\subset Y^n$ such that the support of each $f_k$ is a subset of $K$ and $(f_k)_{k=1}^\infty$ converges uniformly on $K$.
 The dual space of $C_0(Y^n; V^{\otimes n})$ is the space $\mathcal M(Y^n; V^{\otimes n})$ of $V^{\otimes n}$-valued Radon measures on $Y^n$, i.e., $F^{(n)}$ is a continuous linear functional  on $C_0(Y^n; V^{\otimes n})$ if and only if 
$$F^{(n)}(\varphi^{(n)})=\int_{Y^n}\langle \varphi^{(n)}(y_1,\dots,y_n),m^{(n)}(dy_1\dots, dy_n)\rangle_{V^{\otimes n}},\quad \varphi^{(n)}\in C_0(Y^n; V^{\otimes n}),$$
 where $m^{(n)}\in \mathcal M(Y^n; V^{\otimes n})$, see e.g.\ \cite[Chapter~7]{Simonnet}. Note that $m^{(n)}$ is completely determined by its values on functions of the form $\varphi^{(n)}(y_1,\dots,y_n)=\varphi_1(y_1)\otimes\dots\otimes  \varphi_n(y_n)$ with $\varphi_1,\dots,\varphi_n\in C_0(Y; V)$. 

Let  $\tau:\mathbb A \to  \mathbb C$ be a state on $\mathbb A$. For any $\sharp_1,\dots,\sharp_n\in \{+,-\}$,  we define a linear functional 
$\tau^{(n)}_{\sharp_1,\dots,\sharp_n}:C_0(Y^n; V^{\otimes n})\times \mathcal G^n \to \mathbb C$ by 
\begin{align}
\tau^{(n)}_{\sharp_1,\dots,\sharp_n }(\varphi^{(n)},g_1,\dots,g_n):=\tau \left(\Phi^{(n)}(\varphi^{(n)}(y_1,\dots,y_n)g_1(z_1)\dotsm g_n(z_n);\sharp_1,\dots,\sharp_n)\right).\label{dre5te5}
\end{align}
Obviously, these   linear functionals uniquely identify the state $\tau$. We assume that, for any fixed $g_1,\dots,g_n\in\mathcal G$, $\tau^{(n)}_{\sharp_1,\dots,\sharp_n }(\cdot,g_1,\dots,g_n)$ is continuous on $C_0(Y^n; V^{\otimes n})$.
Hence, there exists a $V^{\otimes n}$-valued Radon measure $m_{\sharp_1,\dots,\sharp_n}^{(n)}[g_1,\dots,g_n]\in  \mathcal M(Y^n; V^{\otimes n})$  that can be identified with $\tau^{(n)}_{\sharp_1,\dots,\sharp_n }(\cdot,g_1,\dots,g_n)$. 

Below, for $f\in\mathfrak F^{(1)}$ and $\sharp\in\{+,-\}$, we denote $A^\sharp(f):=\Phi(f;\sharp)$, called creation, respectively  annihilation operators in the $Q$-MCR algebra $\mathbb A$, and 
\begin{equation}\label{buyyr7}
B(f):=A^+(f)+A^-(Jf). 
\end{equation}

\begin{proposition}\label{vcreaq4y5qwy} Under the above assumption on the continuity of a state $\tau$ on a $Q$-MCR algebra $\mathbb A$, the following statements hold.

(i) For $m,n\ge0$, $m+n\ge1$, we define 
\begin{align}
&\mathbf S^{(m,n)}(\varphi_1\otimes g_1,\dots,\varphi_m\otimes g_m,\varphi_{m+1}\otimes g_{m+1},\dots,\varphi_{m+n}\otimes g_{m+n})\notag\\
&\quad:= \tau\big(A^+(\varphi_1\otimes g_1)\dotsm A^+(\varphi_m\otimes g_m)A^- (\varphi_{m+1}\otimes g_{m+1})\dotsm A^-(\varphi_{m+n}\otimes g_{m+n})\big), \notag                  
\end{align}
where $\varphi_i\in C_0(Y;V)$, $g_i\in\mathcal G$, and $(\varphi_i\otimes g_i)(x):=\varphi_i(y)g_i(z)$ ($i=1,\dots,m+n$).
Then the functionals  $\mathbf S^{(m,n)}$ uniquely identify the state $\tau$.

(ii) For $n\ge 1$, we define
\begin{equation*}
\mathbf M^{(n)}(\varphi_1\otimes g_1,\dots,\varphi_n\otimes g_n):=\tau\big(B(\varphi_1\otimes g_1)\dotsm B(\varphi_n\otimes g_n)\big),\end{equation*}
where  $\varphi_i\in C_0(Y; V_\R)$,  $g_i\in\mathcal G$ ($i=1,\dots,n$). Here $V_\mathbb R:=\R^m\subset\mathbb C^m=V$. Then the functionals $\mathbf M^{(n)}$ uniquely identify the state $\tau$.

\end{proposition}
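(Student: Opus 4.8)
For part~(i) the plan is to reduce the statement to the Wick-ordered generators of $\mathbb A$ and then to use that a Radon measure is pinned down by its values on product functions. First I would invoke Proposition~\ref{tye7i43}: every element of $\mathbb A$ is a finite linear combination of $\mathbf 1$ and elements $W^{(m,n)}(f^{(m,n)})$ with $f^{(m,n)}\in\mathfrak{SF}^{(m,n)}\subset\mathfrak F^{(m+n)}$, and since $\tau(\mathbf 1)=1$ it suffices to recover $\tau\bigl(W^{(m,n)}(f^{(m+n)})\bigr)$ for all $m,n\ge0$ with $m+n\ge1$ and all $f^{(m+n)}\in\mathfrak F^{(m+n)}$. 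By linearity and the defining form \eqref{tsar43} of $\mathfrak F^{(m+n)}$, it is enough to do this for $f^{(m+n)}(x_1,\dots,x_{m+n})=\varphi^{(m+n)}(y_1,\dots,y_{m+n})\,g_1(z_1)\dotsm g_{m+n}(z_{m+n})$ with $\varphi^{(m+n)}\in C_0(Y^{m+n};V^{\otimes(m+n)})$ and $g_i\in\mathcal G$, and for such $f^{(m+n)}$ we have $\tau\bigl(W^{(m,n)}(f^{(m+n)})\bigr)=\tau^{(m+n)}_{+,\dots,+,-,\dots,-}(\varphi^{(m+n)},g_1,\dots,g_{m+n})$. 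Fixing $g_1,\dots,g_{m+n}$, the continuity hypothesis identifies this with the $V^{\otimes(m+n)}$-valued Radon measure $m^{(m+n)}_{+,\dots,+,-,\dots,-}[g_1,\dots,g_{m+n}]$, and such a measure is completely determined by its values on product functions $\varphi_1\otimes\dots\otimes\varphi_{m+n}$ with $\varphi_i\in C_0(Y;V)$. Finally, the product formula \eqref{tera4qy} gives, on such product functions, $W^{(m,n)}\bigl((\varphi_1\otimes g_1)\otimes\dots\otimes(\varphi_{m+n}\otimes g_{m+n})\bigr)=A^+(\varphi_1\otimes g_1)\dotsm A^+(\varphi_m\otimes g_m)A^-(\varphi_{m+1}\otimes g_{m+1})\dotsm A^-(\varphi_{m+n}\otimes g_{m+n})$, so the corresponding value equals $\mathbf S^{(m,n)}(\varphi_1\otimes g_1,\dots,\varphi_{m+n}\otimes g_{m+n})$. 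Tracing these reductions backwards shows that the family $(\mathbf S^{(m,n)})_{m,n}$ determines $\tau$.

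For part~(ii) the plan is to express each $A^\sharp(\varphi\otimes g)$ through the field operators and then appeal to part~(i). Using \eqref{buyyr7}, the linearity of $A^{\pm}$ and the antilinearity of $J$, I would first record the inversion formulas $A^+(f)=\tfrac12\bigl(B(f)-iB(if)\bigr)$ and $A^-(f)=\tfrac12\bigl(B(Jf)+iB(iJf)\bigr)$, exactly as in the classical CCR/CAR case recalled in the Introduction. The key point is that for $f=\varphi\otimes g$ with $\varphi\in C_0(Y;V)$ and $g\in\mathcal G$ all operations appearing on the right-hand sides stay within the class of arguments admitted in the definition of $\mathbf M^{(n)}$: multiplication by $i$ is absorbed into the $\mathcal G$-factor, since $i(\varphi\otimes g)=\varphi\otimes(ig)$; the conjugation acts by $J(\varphi\otimes g)=(J\varphi)\otimes\bar g$; and splitting $\varphi=\varphi'+i\varphi''$ into real-valued parts (again moving the $i$ into $\mathcal G$) reduces everything to $\varphi\in C_0(Y;V_{\mathbb R})$. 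Hence each $A^\sharp(\varphi\otimes g)$ is a finite linear combination of operators $B(\chi\otimes h)$ with $\chi\in C_0(Y;V_{\mathbb R})$ and $h\in\mathcal G$. Substituting these expansions into the definition of $\mathbf S^{(m,n)}$ and using multilinearity in the arguments, each $\mathbf S^{(m,n)}$ becomes a finite linear combination of values $\mathbf M^{(m+n)}(\psi_1,\dots,\psi_{m+n})$; so the family $(\mathbf M^{(n)})_n$ determines all $\mathbf S^{(m,n)}$, and part~(i) finishes the argument.

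Both parts are essentially bookkeeping built on Proposition~\ref{tye7i43} and the continuity assumption, and I do not expect any genuinely hard step. The only point needing slight care is the verification in part~(ii) that the complex scalars and the conjugation $J$ generated by the inversion formulas never push an argument of $\mathbf M^{(n)}$ outside the admissible class $\{\chi\otimes h:\chi\in C_0(Y;V_{\mathbb R}),\ h\in\mathcal G\}$, which works precisely because every complex degree of freedom can be transferred to the $\mathcal G$-component.
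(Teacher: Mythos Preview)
Your proposal is correct and follows essentially the same route as the paper: part~(i) is exactly the paper's argument (Wick ordering via Proposition~\ref{tye7i43}, then the continuity assumption to reduce to product test functions), and part~(ii) uses the same inversion formulas, the paper merely reversing your order of operations by first observing that (i) already holds with $\varphi_i\in C_0(Y;V_{\mathbb R})$ and then writing $A^+(\varphi\otimes g)=\tfrac12\bigl(B(\varphi\otimes g)-iB(\varphi\otimes ig)\bigr)$, $A^-(\varphi\otimes g)=\tfrac12\bigl(B(\varphi\otimes\bar g)+iB(\varphi\otimes i\bar g)\bigr)$ directly for real $\varphi$.
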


\begin{proof} (i) By Proposition~\ref{tye7i43}, the state $\tau$ is completely determined by the functionals $\tau^{(m+n)}_{\sharp_1,\dots,\sharp_{m+n}}$ with $\sharp_1=\dots=\sharp_m=+$ and $\sharp_{m+1}=\dots=\sharp_{m+n}=-$, where $m,n\ge0$, $m+n\ge1$. Statement (i) now follows from the assumed continuity of the functionals $\tau^{(m+n)}_{\sharp_1,\dots,\sharp_{m+n}}$. 

(ii) First, we note that the result of part (i) remains true if all $\varphi_i$'s are assumed to be from $C_0(Y; V_{\mathbb R})$. Next, for each $\varphi\in C_0(Y; V_{\mathbb R})$ and $g\in\mathcal G$, we have
$$
A^+(\varphi\otimes g)=\frac12\big(B(\varphi\otimes g)-iB(\varphi\otimes ig)\big),\quad A^-(\varphi\otimes g)=\frac12\big(B(\varphi\otimes \bar g)
+iB(\varphi\otimes i\bar g)\big).
$$
Hence, the functionals $\mathbf M^{(n)}$ uniquely identify the functionals $\mathbf S^{(m,n)}$.
\end{proof}

\section{The Fock state on the $Q$-MCR algebra}\label{fgdyjdde}
 
 Let us briefly recall the Fock representation of the $Q$-MCR \eqref{xseas5aq53q}--\eqref{yqdqidi}, for detail see \cite{LM} or \cite{DKLP}. First, note that $\mathcal H^{\otimes n}=L^2(X^n; V^{\otimes n})$. Define a unitary operator $U\in\mathcal L(\mathcal H^{\otimes 2})$ by $(Uf^{(2)})(x_1,x_2):=Q(x_1,x_2)f^{(2)}(x_2,x_1)$. For $i=1,\dots,n-1$, let $U_i$ denote the operator $U$ acting on the $i$th and $(i+1)$th components of $\mathcal H^{\otimes n}$, compare with \eqref{vcrtw46ue3i6}. Similarly to the proof of Proposition~\ref{tye7i43}, define $P_n:=\sum_{\pi\in S_n}U_\pi$. Then $P_n$ is an orthogonal projection in $\mathcal H^{\otimes n}$. Denote by $\mathcal H^{\circledast n}$ the image of $P_n$. This is the subspace of all functions from $\mathcal H^{\otimes n}$ that are $Q$-symmetric, compare with \eqref{vcfydst6}. Let also $\mathcal H^{\circledast 0}:=\mathbb C$. Define the $Q$-symmetric Fock space $\mathcal F^Q(\mathcal H):=\bigoplus_{n=0}^\infty \mathcal H^{\circledast n} n!$, i.e., $\mathcal F^Q(\mathcal H)$ is the Hilbert space of all infinite sequences $F=(f^{(n)})_{n=0}^\infty$ with $f^{(n)}\in \mathcal H^{\circledast n} $ and $\|F\|^2_{\mathcal F^Q(\mathcal H)}=\sum_{n=0}^\infty \|f^{(n)}\|_{\mathcal H^{\circledast n} }^2n!<\infty$. The vector $\Omega=(1,0,0,\dots)$ is called the vacuum.  Let $\mathcal F^Q_{\mathrm{fin}}(\mathcal H)$ denote the (dense) subspace of $\mathcal F^Q(\mathcal H)$ consisting of all $F=(f^{(n)})_{n=0}^\infty\in \mathcal F^Q(\mathcal H)$ such that, for some $N\in\mathbb N$ (depending on $F$), $f^{(n)}=0$ for all $n\ge N$. 
 
Let $f\in\mathcal H$. We define a (standard) creation operator $a^+(f)$ as the linear operator in $\mathcal                    F^Q_{\mathrm{fin}}(\mathcal H)$ given by $a^+(f)f^{(n)}:=P_{n+1}(f\otimes f^{(n)})$ for each $f^{(n)}\in\mathcal H^{\circledast n}$.  
  Next,  we define a  (standard) annihilation operator $a^-(f)$ as the linear operator in $\mathcal F^Q_{\mathrm{fin}}(\mathcal H)$ given by 
  \begin{equation}\label{cxtes5uw}
  (a^-(f)f^{(n)})(x_1,\dots,x_{n-1})=n\int_X\langle f^{(n)}(x,x_1,\dots,x_{n-1}),f(x)\rangle_V\,dx.\end{equation}  
  Here, for $u_1,\dots,u_n,v\in V$, we denote 
\begin{equation}\label{vcftse5y}
\langle u_1\otimes\dots\otimes u_n,v\rangle_V:=\langle u_1,v\rangle_V\,u_2\otimes\dots\otimes u_n\,.\end{equation} 
 Then $a^-(f)$ is the restriction to $\mathcal F^Q_{\mathrm{fin}}(\mathcal H)$ of $a^+(Jf)^*$.  
  
  Similarly to \eqref{vftye6u}, \eqref{vctesay5ra5} one defines operator-valued distributions $a^+(x)$, $a^-(x)$.  One  shows that the operator-valued integrals on the right-hand side of formulas \eqref{xseas5aq53q}--\eqref{yqdqidi}, with $A^+(\cdot)$, $A^-(\cdot)$ being replaced by $a^+(\cdot)$, $a^-(\cdot)$, are well-defined and formulas \eqref{xseas5aq53q}--\eqref{yqdqidi} hold. 
 
 Let us show that this construction can be easily extended to a representation of the $Q$-MCA algebra. 
By formula (33) in \cite{DKLP}, we have, for any $f_1,\dots,f_m\in\mathcal H$ and $f^{(k)}\in\mathcal H^{\circledast n}$,
\begin{equation}\label{ctesw5uy3w}
a^+(f_1)a^+(f_1)\dotsm a^+(f_m)f^{(k)}=P_{m+k}(f_1\otimes f_2\otimes\dotsm\otimes f_m\otimes f^{(k)}).\end{equation}
Furthermore, by using \eqref{cxtes5uw} and  \eqref{vcftse5y}, we have 
\begin{align}
&(a^-(f_1)a^-(f_2)\dotsm a^-(f_n)f^{(k)})(x_1,\dots,x_{k-n})=(k)_n \notag\\
&\times \int_{X^n}(\mathbb T^{(2n)}\otimes \mathbf 1_{k-n})(f_1\otimes\dots\otimes f_n)(x_1',\dots,x_n')\otimes f^{(k)}(x_n',x_{n-1}',\dots,x_1',x_1,\dots,x_{k-n})\notag\\
&\quad\times  dx_1'\dotsm dx_n'\,.\label{vtyqd6w}
 \end{align}
Here, $(k)_n$ is the Pochhammer symbol, $\mathbb T^{(2n)}:V^{\otimes(2n)}\to\mathbb C$ is the linear functional given by
\begin{equation}\label{qdfrqd}
\mathbb T^{(2n)}\,v_1\otimes\dots\otimes v_{2n}:=\prod_{i=1}^n\langle v_i,v_{2n-i+1}\rangle_V,\quad v_1,\dots,v_{2n}\in V, \end{equation}
and $\mathbf 1_{k-n}$ is the identity operator in $V^{\otimes(k-n)}$.

In view of formulas \eqref{ctesw5uy3w}, \eqref{vtyqd6w}, we define for $f^{(m+n)}\in\mathfrak F^{(m+n)}$, a linear operator  $W^{(m,n)}(f^{(m+n)})$ in $\mathcal F^Q_{\mathrm{fin}}(\mathcal H)$ by
\begin{align}
&\left(W^{(m,n)}\left(f^{(m+n)}\right)u^{(k)}\right)(x_1,\dots,x_{m-n+k})=P_{m-n+k}\bigg[\int_{X^n} \left(\mathbf{1}_m \otimes \mathbb T^{(2n)} \otimes \mathbf{1}_{k-n}\right)\notag\\
&f^{(m+n)}(x_1,\dots,x_m,x_1',\dots,x_n' )\otimes u^{(k)}(x_n',\dots,x_1',x_{m+1},\dots,x_{m-n+k})\,dx_1' \dotsm dx_n' \bigg]\label{gg44222}
\end{align}
for $u^{(k)}\in\mathcal H^{\circledast k}$. Due to the $Q$-MCR, the operators $W^{(m,n)}(f^{m+n})$ uniquely identify operators $\Phi(f^{(n)};\sharp_1,\dots,\sharp_n)$ with $f^{(n)}\in\mathfrak F^{(n)}$ and $\sharp_1,\dots,\sharp_n\in\{+,-\}$. Thus, we get a representation of the $Q$-MCR $*$-algebra $\mathbb A$.

Next, we define the vacuum state $\tau$ on $\mathbb A$ by $\tau(a):=(a\Omega,\Omega)_{\mathcal F^Q(\mathcal H)}$ for $a\in\mathbb A$. Since, for each $f\in\mathfrak F^{(1)}$, $a^-(f)\Omega=0$, $\tau$ is the Fock state on $\mathbb A$.
Note also that the state $\tau$ can be  extended to the vector space of all linear operators acting in  
$\mathcal F^Q_{\mathrm{fin}}(\mathcal H)$. 

Obviously, for the Fock state $\tau$, the corresponding functionals $\tau^{(n)}_{\sharp_1,\dots,\sharp_n }(\cdot,g_1,\dots,g_n)$ are continuous on $C_0(Y^n; V^{\otimes n})$. Hence, by Proposition~\ref{vcreaq4y5qwy}, the functionals $\mathbf S^{(m,n)}$ uniquely identify $\tau$, and so do the functionals $\mathbf M^{(n)}$. It follows from the definition of $\tau$ that all the functionals $\mathbf S^{(m,n)}$  are equal to zero.
Our next aim is to calculate the functionals $\mathbf M^{(n)}$. To this end, we will first prove  a slightly more general result. Before doing that,  let us introduce required notations.

For $n\in\mathbb N$, let $\mathcal P^{(2n)}$ denote the set of all partitions of $\{1,\dots,2n\}$ into $n$ two-point sets. Fix a partition $\xi \in \mathcal P^{(2n)}$. 
All the definitions below depend on the choice of $\xi$, nevertheless for simplicity we will not show this dependence in  some of our notations.

Let 
\begin{equation}\label{ytds}
\xi=\big\{\{i_1,j_1\},\dots, \{i_n,j_n\}\big\}\end{equation} with 
\begin{equation}\label{xzseRTYUI}
i_1<j_1,\ i_2<j_2,\dots,\ i_n <j_n, \quad i_1>i_2>\dotsm>i_n=1.
\end{equation} 
 Denote  $I:=\{i_1,i_2,\dots,i_n\}$, $J:=\{j_1,j_2,\dots,j_n\}$. For $k=1,2,\dots,n$, we define sets $J^{(k)}$ and $\mathbf J^{(k)}$ as follows.
Let 
$$
J^{(1)}:=\{j \in J \mid i_1<j \leq j_1\},\quad \mathbf J ^{(1)}:=\{j \in J \mid i_1<j\},$$
and for $k=2,\dots,n$, 
\begin{align}
J^{(k)}&:=\big\{j\in J \mid i_k <j\le j_k,\ j\neq j_1,\ j \neq j_2,\dots,\ j \neq j_{k-1}\big\},
\notag\\
\mathbf J^{(k)}&:=\left\{j\in J \mid i_k < j,\ j \neq \min \mathbf J^{(1)},\ j \neq \min \mathbf J^{(2)},\dots,\ j \neq \min \mathbf J^{(k-1)}\right\}.\label{xdzazza}
\end{align}
We write 
\begin{equation}\label{bgtd6w6u5w6u}
J^{(k)}=\big\{j_1^{(k)},j_2^{(k)},\dots,j_{l_k}^{(k)}\big\},\quad \mathbf J^{(k)}=\big\{\mathbf j_1^{(k)},\mathbf j_2^{(k)},\dots,\mathbf j_{m_k}^{(k)}\big\},
\end{equation}
with 
\begin{equation}\label{nifr75fdrd}
j_1^{(k)}<j_2^{(k)}<\dots<j_{l_k}^{(k)},\quad \mathbf j_1^{(k)}<\mathbf j_2^{(k)}<\dots<\mathbf j_{m_k}^{(k)}.
\end{equation}
Here  $l_k$ and $m_k$ are the number of elements of the sets $J^{(k)}$ and $\mathbf J^{(k)}$, respectively,


\begin{remark}\label{rds6qed}
Note that
\begin{align*}
\xi'=\big\{\{i_1,\mathbf j_1^{(1)}\},\{i_2,\mathbf j_1^{(2)}\},\dots,\{i_n,\mathbf j_1^{(n)}\}\big\}
\end{align*}
belongs to $\mathcal P^{(2n)}$. Furthermore, since  $\mathbf j_1^{(k)}=\min \mathbf J^{(k)}$, the definition \eqref{xdzazza} implies that $\xi'$ is a non-crossing partition.
\end{remark}

Recall that, for a linear operator $\mathcal C\in \mathcal L(V^{\otimes 2})$ and $i \in \{1,2,\dots,m-1\}$, we 
denote by $\mathcal C_i$ the linear operator in $V^{\otimes m}$ acting as $\mathcal C$ on the $i$th  and $(i+1)$th components of the tensor product $V^{\otimes m}$. Now, similarly, for $1\leq i <j \leq m$, we define a linear operator $ \mathcal C[i,j]$ in $V^{\otimes m}$ that acts as the operator $\mathcal C$ on the $i$th and $j$th components of  $V^{\otimes m}$.

Let us now fix an arbitrary $(x_1,\dots,x_{2n})\in X^{2n}$. For $k=1,\dots,n$, we define a linear operator $Q^{(k)}(\xi;x_1,x_2,\dots,x_{2n})$ in $V^{\otimes 2n}$ as follows. If $J^{(k)}=\{j_k\}$, i.e., $l_k=1$, then $Q^{(k)} (\xi;x_1,x_2,\dots,x_{2n})$ is the identity operator. 
If $l_k>1$, then we set 
\begin{align}
&Q^{(k)} (\xi;x_1,x_2,\dots,x_{2n})\notag
\\
&\quad:=Q(x_{i_k},x_{j_1^{(k)}})\big[\mathbf j_1^{(k)},\,\mathbf j_2^{(k)}\big]\, Q(x_{i_k},x_{j_2^{(k)}})\big[\mathbf j_2^{(k)},\,\mathbf j_3^{(k)}\big]
\dotsm Q(x_{i_k},x_{j_{{l_k}-1}^{(k)}})\big[\mathbf j_{{l_k}-1}^{(k)},\,\mathbf j_{l_k}^{(k)}\big].\label{ctsts}
\end{align}
Next we define a linear operator 
\begin{align}
&Q(\xi;x_1,x_2,\dots,x_{2n})\notag\\
&\quad:=Q^{(n)}(\xi;x_1,x_2,\dots,x_{2n})Q^{(n-1)}(\xi;x_1,x_2,\dots,x_{2n})\dotsm  Q^{(1)}(\xi;x_1,x_2,\dots,x_{2n}).\label{fxszs}
\end{align}

Recall Remark \ref{rds6qed}. Similarly to \eqref{qdfrqd}, we define a linear functional 
$ \mathbb T^{(2n)}(\xi):V^{\otimes 2n} \to \mathbb C$
 by 
\begin{equation}\label{dmmfvdfc1}
 \mathbb T^{(2n)}(\xi)v_1\otimes v_2\otimes \dotsm \otimes v_{2n}=\langle v_{i_1},v_{\mathbf j_1^{(1)}}\rangle_V  \langle v_{i_2},v_{\mathbf j_1^{(2)}}\rangle_V \dotsm \langle v_{i_n},v_{\mathbf j_1^{(n)}}\rangle_V.
\end{equation}

\begin{theorem}\label{radiaition12345} For $f\in\mathcal H$, let $a^+(f)$ and $a^-(f)$ denote the standard creation and annihilation operators in the $Q$-symmetric Fock space $\mathcal F^Q(\mathcal H)$. Let $\tau(\cdot)=(\cdot\,\Omega,\Omega)_{\mathcal F^Q(\mathcal H)}$ denote the Fock state.  Let $n\in \mathbb N$ and $f_1,f'_1,\dots,f_{2n},f'_{2n}\in \mathcal H$. Then  
\begin{equation}\label{5dhuy8}
\tau\big((a^+(f_1)+a^-(f'_1))\dotsm (a^+(f_{2n-1})+a^-(f'_{2n-1}))\big)=0\end{equation} and 
\begin{align}
&\tau\big((a^+(f_1)+a^-(f'_1))\dotsm (a^+(f_{2n})+a^-(f'_{2n}))\big)
=\sum_{\xi \in \mathcal P^{(2n)}}\int_{X^{2n}}\bigotimes_{\substack{\{i,j\}\in \xi\\i<j}}\sigma^{(2)}(dx_{i}\,dx_{j})\notag\\
&\qquad \times\mathbb T^{(2n)}(\xi)Q(\xi;x_1,\dots,x_{2n})h_1(x_1)\otimes h_2(x_2)\otimes \dotsm \otimes h_{2n}(x_{2n}). \label{fs75w7}
\end{align}
Here $\sigma^{(2)}$ is the measure on $X^2$ satisfying 
\begin{equation}\label{vcyds6ue}
\int_{X^2}f^{(2)}(x_1,x_2)\sigma^{(2)}(dx_1\,dx_2)=\int_X f^{(2)}(x,x)dx 
\end{equation} 
and $h_k(x)=f_k'(x)$ if $k\in I$ and $h_k(x)=f_k(x)$ if $k\in J$.
\end{theorem}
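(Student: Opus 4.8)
The plan is to expand the products over $k$ into $2^{2n-1}$, respectively $2^{2n}$, monomials in the standard creation and annihilation operators $a^\pm(\cdot)$, and to evaluate the Fock state $\tau$ on each monomial separately. Write a generic monomial as $a^{\sharp_1}(h_1)\dotsm a^{\sharp_N}(h_N)$, $\sharp_k\in\{+,-\}$, where $h_k$ is the function carried at position $k$, namely $h_k=f_k'$ when $\sharp_k=-$ and $h_k=f_k$ when $\sharp_k=+$; this labelling agrees with that of the theorem once the minus-positions are identified with the set $\{i_1,\dots,i_n\}$ of a pair partition. Since $a^+$ raises and $a^-$ lowers the number of particles by one, an odd-length monomial sends $\Omega$ into a sum of states living in $\mathcal H^{\circledast k}$ with $k$ odd, hence orthogonal to $\Omega\in\mathcal H^{\circledast 0}$; summing over the monomials gives \eqref{5dhuy8}.

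For $N=2n$ it remains to show that $\tau\big(a^{\sharp_1}(h_1)\dotsm a^{\sharp_{2n}}(h_{2n})\big)$ equals the right-hand side of \eqref{fs75w7} with the outer sum restricted to those $\xi\in\mathcal P^{(2n)}$ whose set $I_\xi:=\{i_1,\dots,i_n\}$ coincides with $\{k:\sharp_k=-\}$; summing this over all $2^{2n}$ sign words and recalling $h_k=f_k'$ for $k\in I_\xi$, $h_k=f_k$ for $k\in J_\xi$, then reproduces \eqref{fs75w7}. If the sign word is not a ballot word --- equivalently, if the minus-positions cannot be matched injectively to strictly larger plus-positions --- then the restricted sum is empty and, correspondingly, the monomial kills $\Omega$: applying the operators to $\Omega$ from right to left, the first moment at which the number of annihilation operators already applied strictly exceeds the number of creation operators already applied produces $a^-(\cdot)$ acting on $\mathbb C\Omega=\mathcal H^{\circledast 0}$, which is zero. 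If the sign word is a ballot word, its minus-positions are precisely the sets $\{i_1,\dots,i_n\}$, $i_1>\dots>i_n=1$, of the partitions in the restricted sum.

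We prove this monomial identity by induction on $n$; the case $n=1$ follows directly from \eqref{cxtes5uw}, \eqref{vcftse5y} and the symmetry of $\langle\cdot,\cdot\rangle_V$. For the inductive step, assume the word is a ballot word and set $p:=\max\{k:\sharp_k=-\}$, so $\sharp_p=-$ and $\sharp_{p+1}=\dots=\sharp_{2n}=+$. By \eqref{ctesw5uy3w} we have $a^+(h_{p+1})\dotsm a^+(h_{2n})\Omega=P_{2n-p}(h_{p+1}\otimes\dotsm\otimes h_{2n})$; applying $a^-(h_p)$ to it via \eqref{cxtes5uw}, expanding the symmetrizer $P_{2n-p}$ into its summands $U_\pi$ and using $Q$-symmetry, one obtains a sum over $q\in\{p+1,\dots,2n\}$ of contributions in which $h_p$ is contracted with $h_q$ (producing a factor $\sigma^{(2)}(dx_p\,dx_q)$), the slot of $h_q$ is dragged to the front through the slots of $h_{p+1},\dots,h_{q-1}$, creating an operator of exactly the form \eqref{ctsts} corresponding to taking $\{i_1,j_1\}=\{p,q\}$ as the first pair, while the remaining operators $a^{\sharp_1}(h_1)\dotsm a^{\sharp_{p-1}}(h_{p-1})$ act on the $(2n-p-1)$-particle state built from $h_{p+1},\dots,\widehat{h_q},\dots,h_{2n}$ (with the dragged-through functions modified by the $Q(x_p,\cdot)$ operators just described). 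Deleting the positions $p$ and $q$ leaves a length-$(2n-2)$ ballot word (after order-preserving relabelling) whose minus-positions correspond to $\{i_2,\dots,i_n\}$; applying the induction hypothesis to it and then re-inserting the peeled pair $\{p,q\}=\{i_1,j_1\}$ gives a bijection between the partitions $\xi''\in\mathcal P^{(2n-2)}$ appearing in the shorter identity and the partitions $\xi\in\mathcal P^{(2n)}$ with $I_\xi=\{k:\sharp_k=-\}$ and first pair $\{p,q\}$; summing over $q$ then ranges over all $\xi$ with $I_\xi=\{k:\sharp_k=-\}$.

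The main obstacle is the bookkeeping of the $Q$-operators in this last step. One must verify that the operator produced by moving $h_q$ through the $Q$-symmetrizer of $P_{2n-p}(h_{p+1}\otimes\dotsm\otimes h_{2n})$ is precisely $Q^{(1)}(\xi;x_1,\dots,x_{2n})$ of \eqref{ctsts} --- this is what forces the appearance of the set $J^{(1)}$ of ``obstructing'' plus-positions between $i_1$ and $j_1$ and of the set $\mathbf J^{(1)}$ of \emph{all} plus-positions to the right of $i_1$, and explains why the surviving contraction in \eqref{dmmfvdfc1} pairs $i_1$ with $\mathbf j_1^{(1)}=\min\mathbf J^{(1)}$ rather than with the partner $j_1$ --- and that, under the re-insertion of the peeled pair, the operators $Q^{(1)}(\xi''),\dots,Q^{(n-1)}(\xi'')$ and the functional $\mathbb T^{(2n-2)}(\xi'')$ furnished by the induction hypothesis become $Q^{(2)}(\xi),\dots,Q^{(n)}(\xi)$ and $\prod_{k=2}^n\langle v_{i_k},v_{\mathbf j_1^{(k)}}\rangle_V$, so that the product \eqref{fxszs} and the functional \eqref{dmmfvdfc1} are recovered. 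This is a careful but routine index computation relying on the functional Yang--Baxter equation \eqref{v6w6u34} (which makes the symmetrizer unambiguous) and on the non-crossing partition $\xi'$ of Remark~\ref{rds6qed}; it is the technical core of the argument.
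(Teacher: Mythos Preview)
Your broad strategy matches the paper's: expand into monomials, discard non-ballot sign words by the grading argument, and for a ballot word parametrise the nonzero contributions by the ways each annihilation at a minus-position $i_k$ can contract with a later plus-position $j_k$, i.e.\ by $\xi\in\mathcal P^{(2n)}$. The paper, however, does not argue by induction on $n$ inside $\mathcal F^Q(\mathcal H)$. It first proves an auxiliary lemma lifting the whole computation to the \emph{full} Fock space $\mathbb F(\mathcal H)$: one has $\tau\big(a^{\sharp_1}(h_1)\dotsm a^{\sharp_{2n}}(h_{2n})\big)=\big(\mathcal A^{\sharp_1}(h_1)\dotsm\mathcal A^{\sharp_{2n}}(h_{2n})\Omega,\Omega\big)_{\mathbb F(\mathcal H)}$, where $\mathcal A^+(f)$ is bare left tensoring and $\mathcal A^-(f)=\sum_{k\ge1}\mathcal A^-(f,k)$, the $k$-th summand arising precisely from the $k$-th term of the symmetriser expansion \eqref{vcrtew64}. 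With all projectors $P_m$ removed, the paper then computes the product directly, one annihilation at a time: apply $\mathcal A^-(h_{i_1},l_1)$, then the intervening creations, then $\mathcal A^-(h_{i_2},l_2)$, and so on. Each intermediate state is written out explicitly as a $V^{\otimes m}$-valued function on $X^m$, and the operators $Q^{(k)}$ of \eqref{ctsts} and the trace pairings of \eqref{dmmfvdfc1} appear mechanically.

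Your inductive step, as written, does not close. After peeling the pair $\{p,q\}=\{i_1,j_1\}$, the remaining $(2n-p-1)$-particle state is $P_{2n-p-1}$ applied to a function carrying the operator $Q_2(x_p,\cdot)Q_3(x_p,\cdot)\dotsm Q_{l_1}(x_p,\cdot)$ across consecutive pairs of $V$-slots; this is \emph{not} of the form $\tilde h_{p+1}\otimes\dotsm\otimes\tilde h_{2n}$ for any $\tilde h_j\in\mathcal H$, because each $Q(x_p,x_j)\in\mathcal L(V^{\otimes 2})$ entangles two adjacent tensor factors rather than modifying a single one. Hence the induction hypothesis \eqref{fs75w7}, formulated only for products $a^{\sharp_1}(h_1)\dotsm a^{\sharp_{2n}}(h_{2n})$ with one-particle inputs $h_j\in\mathcal H$, cannot be invoked on the shorter word. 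What you call a ``careful but routine index computation'' therefore also requires either a strengthening of the hypothesis to arbitrary test functions in $L^2(X^{2n};V^{\otimes 2n})$ (both sides of \eqref{fs75w7} extend by linearity and continuity in $h_1\otimes\dotsm\otimes h_{2n}$, so this is legitimate but must be stated), or the full-Fock-space lift, which converts the argument into a direct unwinding and is what the paper actually does. Incidentally, the functional Yang--Baxter equation is used only to make $U_\pi$ and hence $P_m$ well-defined; it plays no further role in the step-by-step evaluation.
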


\begin{proof} Denote by $\mathbb F(\mathcal H):=\bigoplus_{n=0}^\infty \mathcal H^{\otimes n}n!$ the full Fock space over $\mathcal H$, and by 
$\mathbb F_{\mathrm{fin}}(\mathcal H)$ the vector space of all sequences $F=(f^{(n)})_{n=0}^\infty\in \mathbb F(\mathcal H)$ such that, for some $N\in\mathbb N$ (depending on $F$), $f^{(n)}=0$ for all $n\ge N$. For 
each $f\in\mathcal H$, we define linear operators $\mathcal A^+(f)$ and $\mathcal A^-(f)$ in $\mathbb F_{\mathrm{fin}}(\mathcal H)$ by 
\begin{align}
&(\mathcal A^+(f)f^{(n)})(x_1,\dots,x_{n+1}):=f(x_1)\otimes f^{(n)}(x_2,\dots,x_{n+1}),\notag\\
&\left(\mathcal A^-(f)f^{(n)}\right)(x_1,\dots,x_{n-1})=\int_X \Big\langle f(x),f^{(n)}(x,x_1,\dots,x_{n-1})\notag\\
&\quad+\sum_{k=1}^{n-1}Q_1(x,x_1)Q_2(x,x_2)\dotsm Q_k(x,x_k)f^{(n)}(x_1,\dots,x_k,x,x_{k+1},\dots,x_{n-1}) \Big\rangle_V\,dx\notag\\
&=\int_X \Big[\operatorname{Tr_1}\big(f(x)\otimes f^{(n)}(x,x_1,\dots,x_{n-1})\big)+\sum_{k=1}^{n-1}\operatorname{Tr_1}\big(Q_2(x,x_1)Q_3(x,x_2)\dotsm Q_{k+1}(x,x_k)\notag\\
&\quad f(x)\otimes f^{(n)}(x_1,\dots,x_k,x,x_{k+1},\dots,x_{n-1})\big)\Big]dx,\quad f^{(n)}\in\mathcal H^{\otimes n}.\label{vcdsawq}
\end{align}

\begin{lemma}\label{jkmnm1235}
For any $f_1,\dots,f_n \in \mathcal H$ and $\sharp_1,\dots,\sharp_n \in \{+,-\}$, we have 
$$
\tau\left(a^{\sharp_1}(f_1)\dotsm a^{\sharp_n}(f_n)\right)=\left(\mathcal A^{\sharp_1}(f_1)\dotsm \mathcal A^{\sharp_n}(f_n)\Omega,\Omega\right)_{\mathbb F(\mathcal H)}.
$$
\end{lemma}

\begin{proof} It is sufficient to prove that, for each $f^{(n)}\in\mathcal H^{\otimes n}$,
\begin{equation}\label{vds}
a^+(f)P_nf^{(n)}=P_{n+1}\mathcal A^+(f)f^{(n)},\quad a^-(f)P_nf^{(n)}=P_{n-1}\mathcal A^-(f)f^{(n)}.
\end{equation}
It easily follows from the definition of the operator $P_n$ that $P_{n+1}(\mathbf 1_{\mathcal H}\otimes P_n)=P_{n+1}$, where $\mathbf 1_{\mathcal H}$ is the identity operator in $\mathcal H$. 
From here the first formula in \eqref{vds} follows.

Next, it follows from the proof of \cite[Theorem~3.1]{BozSpe} that
\begin{equation}\label{vcrtew64}
P_n=\frac1n(\mathbf 1_{\mathcal H}\otimes P_{n-1})(\mathbf 1+U_1+U_1U_2+\dots+U_1U_2\dotsm U_{n-1}).
\end{equation}
Denote by $a^-_{\mathrm{free}}(f)$ the linear operator in $\mathbb F_{\mathrm{fin}}(\mathcal H)$ given by
\begin{equation}\label{vcyrs6u}
(a^-_{\mathrm{free}}(f)f^{(n)})(x_1,\dots,x_{n-1}):=n\int_X\langle f(x),f^{(n)}(x,x_1,\dots,x_{n-1})\rangle_V\,dx,\quad f^{(n)}\in\mathcal H^{\otimes n}.\end{equation}
By \eqref{cxtes5uw}, \eqref{vcdsawq}, \eqref{vcrtew64}, and \eqref{vcyrs6u}, we get, for each $f^{(n)}\in\mathcal H^{\otimes n}$,
\begin{align*}
 a^-(f)P_nf^{(n)}&=a_{\mathrm{free}}^-(f)P_nf^{(n)}\\
 &=a_{\mathrm{free}}^-(f)\frac1n\,(\mathbf 1_{\mathcal H}\otimes P_{n-1})(\mathbf 1+U_1+U_1U_2+\dots+U_1U_2\dotsm U_{n-1})f^{(n)}\\
 &= P_{n-1}\,\frac1n\,a_{\mathrm{free}}^-(f) (\mathbf 1+U_1+U_1U_2+\dots+U_1U_2\dotsm U_{n-1})f^{(n)}\\
 &=P_{n-1}\mathcal A^-(f)f^{(n)}.\qedhere
 \end{align*}
\end{proof}

Formula \eqref{5dhuy8} trivially holds, so we only need to prove \eqref{fs75w7}.
By Lemma \ref{jkmnm1235},
\begin{align}
&\tau\big((a^+(f_1)+a^-(f'_1))\dotsm (a^+(f_{2n})+a^-(f'_{2n}))\big)\notag\\
&\quad=\big(
(\mathcal A^+(f_1)+\mathcal A^-(f'_1))\dotsm(\mathcal A^+(f_{2n})+\mathcal A^-(f'_{2n}))
\Omega,\Omega\big)_{\mathbb F(\mathcal H)}.\label{vsreara}
\end{align}

For each $f \in \mathcal H$ and $k \in \mathbb N$, we define a linear operator $\mathcal A^-(f,k)$ in $\mathbb F_{\mathrm{fin}}(\mathcal H)$  as follows: if $f^{(n)} \in\mathcal H^{\otimes n}$ and $n < k$, then $\mathcal A^-(f,k)f^{(n)}=0$, and if $n \geq k$, then  
\begin{align}
&\left(\mathcal A^-(f,k)f^{(n)}\right)(x_1,\dots,x_{n-1})=\int_X  \operatorname{Tr}_1 Q_2(x,x_1) Q_3(x,x_2)\dotsm Q_k(x,x_{k-1})\notag\\  
&\qquad\quad f(x)\otimes f^{(n)}(x_1,\dots,x_{k-1},x,x_k,\dots,x_{n-1})\,dx. \label{cfyxtszrwa}
\end{align}
Here, for $k=1$, the operator $Q_2(x,x_1)Q_3(x,x_2)\dotsm Q_k(x,x_{k-1})$ is understood as the identity operator. We may equivalently write formula \eqref{cfyxtszrwa} as follows:
\begin{align}
&\left(\mathcal A^-(f,k)f^{(n)}\right)(x_1,\dots, x_{k-1},x_{k+1},\dots,x_{n})=\int_{X^{2}}  \operatorname{Tr}_1 Q_2(x,x_1) Q_3(x,x_2)\dotsm Q_k(x,x_{k-1})\notag\\
&\qquad\quad  f(x)\otimes f^{(n)}(x_1,
\dots,x_{k-1},x_k,x_{k+1},\dots,x_{n})\,\sigma^{(2)}(dx\,dx_k). \label{ctstest}
\end{align}
We will say that the operator $\mathcal A^-(f,k)$ annihilates the $x_k$ variable.

By \eqref{vcdsawq} and \eqref{cfyxtszrwa}, $\mathcal A^-(f)=\sum_{k\geq 1} \mathcal A^-(f,k)$, or equivalently, for each $f^{(n)} \in \mathcal H^{\otimes n}$,
$\mathcal A^-(f)f^{(n)}=\sum_{k=1}^n\mathcal A^-(f,k)f^{(n)}$. Hence, we can express the right-hand side of  \eqref{vsreara} as a summation over all partitions $\xi\in\mathcal P^{(2n)}$ so that each $\xi$ of the form \eqref{ytds}, \eqref{xzseRTYUI} corresponds to the term in which  the creation operators are at places $j_1,\dots,j_n$, the annihilation operators are at places $i_1,\dots,i_n$, and the annihilation operator at place $i_k$ annihilates the variable created by the creation operator at place $j_k$. By  \eqref{xdzazza}--\eqref{nifr75fdrd}, this gives
\begin{align}
&\tau\big((a^+(f_1)+a^-(f'_1))\dotsm (a^+(f_{2n})+a^-(f'_{2n}))\big)\notag\\
&\quad =
\sum_{\xi\in \mathcal P^{(2n)}}
\mathcal A^-(h_{i_n},l_n)\mathcal A^+(h_{i_{n}-1})\dotsm \mathcal A^+(h_{i_{n-1}+1})
\notag\\
&\qquad\mathcal A^-(h_{i_{n-1}},l_{n-1})\mathcal A^+(h_{i_{n-1}+1}) \dotsm\mathcal A^+(h_{i_{n-2}+1})\dotsm \mathcal A^-(h_{i_1},l_1)\mathcal A(h_{i_1+1})\dotsm\mathcal A^+(h_{2n})\Omega. \label{cftz}
\end{align}

For a fixed $\xi \in \mathcal P^{(2n)}$, let us now calculate the value of the expression in the sum  appearing in \eqref{cftz}. To this end, we introduce the following notations. Assume $ \mathbf R$ is a subset of $\{1,\dots,2n\}$ and for some $k \in \{1,\dots,2n\}$ let $\mathbf S=\{k,k+1,\dots,2n\}\setminus \mathbf R$.
We write $\mathbf R=\{\mathbf r_1,\mathbf r_2,\dots,\mathbf r_l\}$ and $\mathbf S=\{\mathbf s_1,\mathbf s_2,\dots,\mathbf s_m\}$ with  $\mathbf s_1<\mathbf s_2<\dotsm<\mathbf s_m$. 
Then, for a function $\psi^{(m)}:X^m \to \mathbb C$, we will use the notation 
\begin{align*}
\psi^{(m)}(x_k,x_{k+1},\dots,x_{2n}\setminus x_{\mathbf r_1},x_{\mathbf r_2},\dots,x_{\mathbf r_l}):=\psi^{(m)}(x_{\mathbf s_1},x_{\mathbf s_2},\dots,x_{\mathbf s_m}).
\end{align*}

Next let $\zeta_1,\zeta_2,\dots,\zeta_{2l}$ be different numbers from the set $\{k,k+1,\dots,2n\}$. Let
$$
\{k,k+1,\dots,2n\}\setminus \{\zeta_1,\zeta_2,\dots,\zeta_{2l}\}=\{\gamma_1,\gamma_2,\dots,\gamma_{2n-k+1-2l}\}
$$
with $\gamma_1<\gamma_2<\dotsm <\gamma_{2n-k+1-2l}$.
We define a linear functional  
$$
\mathbb T^{(2n-k+1)}\left(k;\,\zeta_1,\zeta_2\mid \zeta_3, \zeta_4 \mid \dotsm \mid \zeta_{2l-1},\zeta_{2l}\right):V^{\otimes(2n-k+1)}\to V^{\otimes (2n-k+1-2l)}$$ 
by 
\begin{align*}
&\mathbb T^{(2n-k+1)}\left(k;\,\zeta_1,\zeta_2\mid \zeta_3, \zeta_4 \mid \dotsm \mid \zeta_{2l-1},\zeta_{2l}\right)v_k\otimes v_{k+1}\otimes \dotsm \otimes v_{2n}\\
&\quad=\langle v_{\zeta_1},v_{\zeta_2}\rangle_V \langle v_{\zeta_3},v_{\zeta_4}\rangle_V \dotsm \langle v_{\zeta_{2l-1}},v_{\zeta_{2l}}\rangle_V\, v_{\gamma_1}\otimes v_{\gamma_2}\otimes \dotsm  \otimes v_{2n-k+1-2l}
\end{align*}
for $v_k,\dots,v_n\in V$.

Recall that, for a linear operator $\mathcal C\in\mathcal L(V^{\otimes 2})$ and  $1\leq i <j \leq m$, we defined a linear operator $ \mathcal C[i,j]\in\mathcal L(V^{\otimes m})$. Now, for $k\ge 1$ and $k\le i<j\le k+m-1$, we define an operator $\mathcal C[k;\,i,j]\in\mathcal L(V^{\otimes m})$ by
$$\mathcal C[k;\,i,j]:=\mathcal C[i-k+1,j-k+1].$$
This definition can be interpreted as follows. We enumerate the components of $V^{\otimes m}$ as $k,k+1,\dots,k+m-1$ and the operator $\mathcal C[k;\,i,j]$ acts as the operator $\mathcal C$ on the variables $i$ and $j$. In particular, $\mathcal C[1;i,j]=\mathcal C[i,j]$.

We have 
\begin{align*}
\left(\mathcal A^+(h_{{i_1}+1})\dotsm \mathcal A^+(h_{2n})\Omega \right)(x_{i_1+1},\dots,x_{2n})=h_{i_1+1}(x_{i_1+1}) \otimes \dotsm \otimes h_{2n}(x_{2n})
\end{align*}
and by \eqref{ctstest},
\begin{align*}
&\left(\mathcal A^-(h_{i_1},l_1)\mathcal A^+(h_{i_1+1})\mathcal A^+(h_{2n})\Omega\right)(x_{i_1},\dots,x_{i_{2n}}\setminus x_{i_1},x_{j_1})\\
&\quad=\int_{X^2} \sigma^{(2)}(dx_{i_1}\,dx_{j_1}) \operatorname{Tr}_1 Q_2(x_{i_1},x_{j_1^{(1)}})Q_3(x_{i_1},x_{j_2^{(1)}})\\
&\qquad \dotsm Q_{l_1}(x_{i_1},x_{j_{l_1-1}^{(1)}})h_{i_1}(x_{i_1})\otimes h_{i_2}(x_{i_2})\otimes \dotsm \otimes h_{2n}(x_{2n})\\
&\quad=\int_{X^2} \sigma^{(2)}(dx_{i_1}\,dx_{j_1})\,\mathbb T^{(2n-i_1+1)}(i_1;\,i_1, \mathbf j_1^{(1)})\\
&\qquad Q(x_{i_1},x_{j_1^{(1)}})[i_1;\, \mathbf j_1^{(1)},\mathbf j_2^{(1)}]Q(x_{i_1},x_{j_2^{(1)}})[i_1;\, \mathbf j_2^{(1)},\mathbf j_3^{(1)}]
\dotsm Q(x_{i_1},x_{j_{{l_1}-1}^{(1)}})[i_1;\, \mathbf j_{l_1-1}^{(1)},\mathbf j_{l_1}^{(1)}]\\
&\qquad  h_{i_1}(x_{i_1})\otimes h_{i_2}(x_{i_2})\otimes \dotsm \otimes h_{2n}(x_{2n}).
\end{align*}
Next, we have 
\begin{align*}
&\left(\mathcal A^+(h_{i_2+1})\dotsm \mathcal A^+(h_{i_1-1})\mathcal A^-(h_{i_1},l_1)\mathcal A^+(h_{{i_1}+1})\dotsm \mathcal A^+(h_{2n})\Omega \right)(x_{{i_2}+1},x_{{i_2}+2}\dots,x_{2n}\setminus x_{i_1}, x_{j_1})\notag\\
&\quad=\int_{X^2} \sigma^{(2)}(dx_{i_1}\,dx_{j_1})h_{{i_2}+1}(x_{{i_2}+1})\otimes \dotsm \otimes h_{{i_1}-1}(x_{{i_1}-1})\otimes 
\mathbb T^{(2n-i_1+1)}(i_1;\,i_1, \mathbf j_1^{(1)})\\
&\qquad Q(x_{i_1},x_{j_1^{(1)}})[i_1;\, \mathbf j_1^{(1)},\mathbf j_2^{(1)}]Q(x_{i_1},x_{j_2^{(1)}})[i_1;\, \mathbf j_2^{(1)},\mathbf j_3^{(1)}]
\dotsm Q(x_{i_1},x_{j_{{l_1}-1}^{(1)}})[i_1;\, \mathbf j_{l_1-1}^{(1)},\mathbf j_{l_1}^{(1)}]\\
&\qquad  h_{i_1}(x_{i_1})\otimes g_{i_2}(x_{i_2})\otimes \dotsm \otimes g_{2n}(x_{2n})\\
&\quad=\int_{X^2} \sigma^{(2)}(dx_{i_1}\,dx_{j_1}) \mathbb T^{(2n-i_2)}(i_2+1;\, i_1, \mathbf j_1^{(1)})\\
&\qquad Q(x_{i_1},x_{j_1^{(1)}})[i_2+1;\, \mathbf j_1^{(1)},\mathbf j_2^{(1)}] Q(x_{i_1},x_{j_2^{(1)}})[i_2+1;\,\mathbf j_2^{(1)},\mathbf j_3^{(1)}]\\
&\qquad \dotsm Q(x_{i_1},x_{j_{l_1-1}^{(1)}})[i_2+1;\, \mathbf j_{l_1-1}^{(1)},\mathbf j_{l_1}^{(1)}]
h_{i_2+1}(x_{i_2+1})\otimes h_{i_2+2}(x_{i_2+2})\otimes \dotsm \otimes h_{2n}(x_{2n}).
\end{align*}
From here, using the commutativity of the functionals, respectively operators, acting in different variables, we similarly get:
\begin{align*}
&\big(\mathcal A^-(h_{i_2},l_2)\mathcal A^+(h_{{i_2}+1})\dotsm \mathcal A^+(h_{{i_1}-1})\mathcal A^-(h_{i_1},l_1)\mathcal A^+(h_{{i_1}+1})\\
&\quad 
\dotsm \mathcal A^+(h_{2n})\Omega\big)(x_{i_2},x_{{i_2}+1},\dots,x_{2n} \setminus x_{i_1},x_{j_1},x_{i_2},x_{j_2})\\
&\quad=\int_{X^4}\sigma^{(2)}(dx_{i_2}\,dx_{j_2})\sigma^{(2)}(dx_{i_1}\,dx_{j_1})  \mathbb T^{(2n-i_2+1)}(i_2;\, i_1,\mathbf j_1^{(1)}\mid i_2,\mathbf  j_1^{(2)})\\
&\qquad Q(x_{i_2},x_{j_1^{(2)}})[i_2;\, \mathbf j_1^{(2)},\mathbf j_2^{(2)}]
Q(x_{i_2},x_{j_2^{(2)}})[i_2;\, \mathbf j_2^{(2)},\mathbf j_3^{(2)}]
\dotsm Q(x_{i_2},x_{j_{l_2-1}^{(2)}})[i_2;\,\mathbf j_{l_2-1}^{(2)},\mathbf j_{l_2}^{(2)}] \\
&\qquad Q(x_{i_1},x_{j_1^{(1)}})[i_2;\, \mathbf j_1^{(1)},\mathbf j_2^{(1)}]
Q(x_{i_1},x_{j_2^{(1)}})[i_2;\, \mathbf j_2^{(1)},\mathbf j_3^{(1)}]
\dotsm Q(x_{i_1},x_{j_{l_1-1}^{(1)}})[i_2;\,\mathbf j_{l_1-1}^{(2)},\mathbf j_{l_1}^{(1)}] \\
&\qquad h_{i_2}(x_{i_2}) \otimes h_{{i_2}+1}(x_{{i_2}+1})\otimes \dotsm \otimes h_{2n}(x_{2n}).
\end{align*} 
Continuing by analogy, we get at the $k$th  step:
\begin{align*}
&\big(\mathcal A^-(h_{i_k},l_k)\mathcal A^+(h_{{i_k}+1})\dotsm \mathcal A^+(h_{{i_{k-1}}+1})
\mathcal A^-(h_{i_{k-1}},l_{k-1})\mathcal A^+(h_{i_{k-1}+1})\dotsm A^+(h_{{i_{k-2}}+1})\\
&\qquad \dotsm \mathcal A^-(h_{i_2},l_2)\mathcal A^+(h_{{i_2}+1})\dotsm \mathcal A^+(h_{{i_1}-1})\mathcal A^-(h_{i_1},l_1)\mathcal A^+(h_{{i_1}+1})
\dotsm \mathcal A^+(h_{2n})\Omega\big)\\
&\qquad(x_{i_k},x_{i_k+1},\dots,x_{2n}\setminus x_{i_1},x_{j_1},x_{i_2},x_{j_2},\dots,x_{i_n},x_{j_n})\\
&\quad =\int_{X^{2k}}\sigma^{(2)}(dx_{i_1} dx_{j_1})\sigma^{(2)}(dx_{i_2} dx_{j_2})\dotsm\sigma^{(2)}(dx_{i_k} dx_{j_k})\\
&\qquad \mathbb T^{(2n-i_k+1)}(i_k;i_1,\mathbf j_1^{(1)}\mid i_2,\mathbf j_1^{(2)}\mid\dotsm\mid i_k,\mathbf j_1^{(k)})\\
&\qquad  Q(x_{i_k},x_{j_1^{(k)}})[i_k;\mathbf j_1^{(k)},\mathbf j_2^{(k)}]Q(x_{i_k},x_{j_2 ^{(k)}})[i_k;\mathbf j_2^{(k)},\mathbf j_3^{(k)}]
\dotsm Q(x_{i_k},x_{j_{l_k-1}^{(k)}})
[i_k;\mathbf j_{l_k-1}^{(k)},\mathbf j_{l_k}^{(k)}]\\
&\qquad  Q(x_{i_{k-1}},x_{j_1^{(k-1)}})[i_k;\mathbf j_1^{(k-1)},\mathbf j_2^{(k-1)}]Q(x_{i_{k-1}},x_{j_2^{(k-1)}})[i_k;\mathbf j_2^{(k-1)},\mathbf j_3^{(k-1)}]\\
&\qquad
\dotsm Q(x_{i_{k-1}},x_{j_{l_{k-1}-1}^{(k-1)}})
[i_k;\mathbf j_{l_{k-1}-1}^{(k-1)},\mathbf j_{l_{k-1}}^{(k-1)}]\\
&\qquad \dotsm Q(x_{i_1},x_{j_1^{(1)}})[i_k;\mathbf j_1^{(1)},\mathbf j_2^{(1)}]Q(x_{i_1},x_{j_2 ^{(1)}})[i_k;\mathbf j_2^{(1)},\mathbf j_3^{(1)}]
\dotsm Q(x_{i_1},x_{j_{l_1-1}^{(1)}})
[i_k;\mathbf j_{l_1-1}^{(1)},\mathbf j_{l_1}^{(1)}]\\
&\qquad h_{i_k}(x_{i_k})\otimes h_{i_{k+1}}(x_{{i_k}+1})\otimes \dotsm \otimes h_{2n}(x_{2n}).
\end{align*} 
Thus, after $n$ steps, we get, by using the definitions \eqref{ctsts}--\eqref{dmmfvdfc1},
\begin{align}
&\big(\mathcal A^-(h_{i_n},l_n)\mathcal A^+(h_{i_{n}-1})\dotsm \mathcal A^+(h_{i_{n-1}+1})\mathcal A^-(h_{i_{n-1}},l_{n-1})\mathcal A^+(h_{i_{n-1}+1}) \dotsm\mathcal A^+(h_{i_{n-2}+1})
\notag\\
&\quad\dotsm \mathcal A^-(h_{i_1},l_1)\mathcal A(h_{i_1+1})\dotsm\mathcal A^+(h_{2n})\Omega\big)(x_1,x_2,\dots,x_n)\\
&\quad=\int_{X^{2n}}\sigma^{(2)}(dx_{i_1}\,dx_{j_1})\sigma^{(2)}(dx_{i_2}\,dx_{j_2})\dotsm  \sigma^{(2)}(dx_{i_n}\,dx_{j_n})\notag\\
&\qquad \times \mathbb T^{(2n)}(\xi)Q(\xi;x_1,x_2,\dots,x_{2n})h_1(x_1)\otimes h_2(x_2)\otimes \dotsm \otimes h_{2n}(x_{2n})\label{dhfbvxizkx}.
\end{align} 
Formulas  \eqref{cftz} and \eqref{dhfbvxizkx} imply \eqref{fs75w7}.
\end{proof}

 \begin{remark}
 We can unify formulas \eqref{5dhuy8} and \eqref{fs75w7} into a single formula 
 \begin{align}\label{tdnmcx}
&\tau\big((a^+(f_1)+a^-(f_1'))\dotsm (a^+(f_{n})+a^-(f_{n}'))\big)=\sum_{\xi \in \mathcal P^{(n)}}\int_{X^n}\bigotimes_{\substack{\{i,j\}\in \xi\\i<j}}\sigma^{(2)}(dx_{i}\,dx_{j})\notag\\
&\qquad \times\mathbb T^{(n)}(\xi)Q(\xi;x_1,\dots,x_{n})h_1(x_1)\otimes h(x_2)\otimes \dotsm \otimes h_{n}(x_{n})
 \end{align}
 which holds for all $n\in\mathbb N$.
 Indeed, if $n$ is even, then \eqref{tdnmcx} becomes \eqref{fs75w7} and if $n$ is odd, then the set $\mathcal  P^{(n)}$ is empty, hence the right hand-side of \eqref{tdnmcx} is equal to zero. 
 \end{remark}
 
 Recall that the operator $Q(\xi;x_1,\dots,x_{n})$ was defined through $\xi \in \mathcal P^{(n)}$ and operators $Q(x,x^\prime)$.  But, for $x=(y,z)$ and $x^\prime=(y^\prime,z^\prime)$, we have $Q(x,x^\prime)=Q(y,y^\prime)$. Therefore, we can write the operator $Q(\xi;x_1,\dots,x_n)$ as $Q(\xi;y_1,\dots,y_n)$.

 Similarly to the definition \eqref{vcyds6ue} of the measure $\sigma^{(2)}$ on $X^2$, we define the measure $\nu^{(2)}$ on $Y^2$. 

In accordance with \eqref{buyyr7}, we define $b(f):=a^+(f)+a^-(Jf)$.  We can easily see that Theorem~\ref{radiaition12345}  implies the following 
 
 \begin{corollary}\label{wsdfrcgvddf}
 Let $\tau$ be the Fock state.
 For any $g_1, g_2 \in \mathcal G$, we define a complex-valued measure $\lambda^{(2)} [g_1,g_2]$ on $Y^2$ by
 \begin{equation}\label{b12nmvbn}
 \lambda^{(2)} [g_1,g_2](dy_1\,dy_2)=(g_2,g_1)_{\mathcal G}\, \nu^{(2)}(dy_1\, dy_2). 
 \end{equation}
  Then, for any $\varphi_1,\dots,\varphi_n \in C_0(Y; V_{\mathbb R})$ and $g_1,\dots,g_n\in \mathcal G$, we have 
 \begin{align*}
 \tau\big(b(\varphi_1\otimes g_1)\dotsm b(\varphi_n \otimes g_n)\big)&=\sum_{\xi \in \mathcal P^{(n)}}\int_{Y^n}\bigotimes_{\substack{\{i,j\}\in \xi\\i<j}} \lambda^{(2)}[g_i,g_j](dy_i\, dy_j)\\ 
 &\quad\mathbb T^{(n)}(\xi)Q(\xi;y_1,\dots,y_n)\varphi_1(y_1)\otimes \varphi_2(y_2)\otimes \dotsm \otimes \varphi_n(y_n).
 \end{align*}
 \end{corollary}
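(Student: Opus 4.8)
The plan is to deduce Corollary~\ref{wsdfrcgvddf} directly from Theorem~\ref{radiaition12345} (more precisely, from its unified form~\eqref{tdnmcx}) by specializing the functions $f_k$, $f_k'$ to the tensor-product form $\varphi_k\otimes g_k$ and carrying out the $z$-integrations. First I would recall that $b(f)=a^+(f)+a^-(Jf)$, so that $b(\varphi_k\otimes g_k)=a^+(\varphi_k\otimes g_k)+a^-(J(\varphi_k\otimes g_k))$. Since $J$ is the complex conjugation on $\mathcal H=L^2(Y;V)\otimes\mathcal G$ and $\varphi_k$ is $V_{\mathbb R}$-valued, we have $J(\varphi_k\otimes g_k)=\varphi_k\otimes\bar g_k$. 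Thus in the notation of Theorem~\ref{radiaition12345} we set $f_k:=\varphi_k\otimes g_k$ and $f_k':=\varphi_k\otimes\bar g_k$, so that $h_k=f_k'=\varphi_k\otimes\bar g_k$ if $k\in I$ and $h_k=f_k=\varphi_k\otimes g_k$ if $k\in J$.

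Next I would substitute these choices into~\eqref{tdnmcx}. The measure $\sigma^{(2)}$ on $X^2=Y^2\times Z^2$ is the ``diagonal'' measure $\int_{X^2}F(x_1,x_2)\,\sigma^{(2)}(dx_1\,dx_2)=\int_X F(x,x)\,dx$, which factorizes as $\nu^{(2)}$ on $Y^2$ times the corresponding diagonal measure on $Z^2$; concretely, for a pair $\{i,j\}\in\xi$ with $i<j$, the factor $\sigma^{(2)}(dx_i\,dx_j)$ becomes, after integrating the $z$-variables, $\nu^{(2)}(dy_i\,dy_j)$ paired against the remaining $z$-dependence of the integrand. The operator $Q(\xi;x_1,\dots,x_{2n})$ depends only on the $y$-variables (as noted in the excerpt, $Q(x,x')=Q(y,y')$), so it pulls out of the $z$-integration untouched and equals $Q(\xi;y_1,\dots,y_n)$. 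The functional $\mathbb T^{(n)}(\xi)$ is built from the pairings $\langle u_{i_s},u_{\mathbf j_1^{(s)}}\rangle_V$; applied to $h_1(x_1)\otimes\dots\otimes h_n(x_n)=\big(\prod_k\varphi_k(y_k)\big)\cdot\big(\prod_{k\in I}\bar g_k(z_k)\prod_{k\in J}g_k(z_k)\big)$, the $V$-pairings act only on the $\varphi$-factors, leaving $\mathbb T^{(n)}(\xi)\,\varphi_1(y_1)\otimes\dots\otimes\varphi_n(y_n)$ times the scalar $z$-dependent product.

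The remaining step is the $z$-integration. Each pair $\{i,j\}\in\xi$ contributes, after integrating the diagonal $z$-measure against the product $g_i$-or-$\bar g_i$ times $g_j$-or-$\bar g_j$, a scalar $\int_Z \overline{g_i(z)}\,g_j(z)\,dz=(g_j,g_i)_{\mathcal G}$ — here one uses that for $i<j$ exactly one of $i,j$ lies in $I$ (carrying the conjugated $\bar g$) and the other in $J$ (carrying $g$); this is precisely the statement that $\{i,j\}$ is a two-point block of a partition with one ``creation'' and one ``annihilation'' index, which follows from the structure~\eqref{ytds}--\eqref{xzseRTYUI} of $\xi$. Collecting, the pair $\{i,j\}$ produces the factor $(g_j,g_i)_{\mathcal G}\,\nu^{(2)}(dy_i\,dy_j)=\lambda^{(2)}[g_i,g_j](dy_i\,dy_j)$ by the definition~\eqref{b12nmvbn}. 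Assembling all pairs and summing over $\xi\in\mathcal P^{(n)}$ yields exactly the claimed formula.

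The routine part is the bookkeeping of which index in a block carries the conjugate; the only mild subtlety — and the step I would be most careful about — is confirming that for every block $\{i,j\}$ of every $\xi\in\mathcal P^{(n)}$ one endpoint is in $I$ and the other in $J$, so that the $z$-inner product comes out as $(g_j,g_i)_{\mathcal G}$ rather than $(g_i,g_j)_{\mathcal G}$ or a norm; this is immediate from the labelling conventions~\eqref{ytds}--\eqref{xzseRTYUI}, where $I$ is the set of ``left'' (annihilation) positions and $J$ the set of ``right'' (creation) positions of the blocks, but it is worth spelling out since it fixes the orientation in~\eqref{b12nmvbn}. No genuine obstacle is expected; the corollary is a direct specialization of Theorem~\ref{radiaition12345}.
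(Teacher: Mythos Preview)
Your proposal is correct and follows exactly the route the paper intends: the paper does not give a detailed argument but simply states that the corollary follows easily from Theorem~\ref{radiaition12345}, and your write-up supplies precisely that specialization (tensor-product functions, $Q$ depending only on $y$, factoring the $z$-integration through $\sigma^{(2)}$ to produce $(g_j,g_i)_{\mathcal G}\,\nu^{(2)}$). The orientation check you flag---that for each block $\{i,j\}$ with $i<j$ one has $i\in I$, $j\in J$---is indeed immediate from \eqref{ytds}--\eqref{xzseRTYUI}, so there is no gap.
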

 
  Below we will also need a formula for 
 \begin{align}\label{tangordsxvc652}
 \tau\left(a^-(f_1)\cdots a^-(f_n)a^+(f_{n+1})\cdots a^+(f_{2n})\right)
 \end{align}
    This is, of course, a special case of formula \eqref{fs75w7}, which can be simplified in the case of~\eqref{tangordsxvc652}.
  
  For each permutation $\pi \in S_n$, we define a partition $\xi\in \mathcal P^{(2n)}$ as follows: 
  $$\xi=\big\{\{1,n+\pi(n)\},\{2,n+\pi(n-1)\},\dots, \{n,n+\pi(1)\}\big\}.$$
   We denote by $\mathcal S_n$ the subset of $\mathcal P^{(2n)}$ consisting of all such partitions. Equivalently, $\mathcal S_n$ consists of all partitions $\xi\in \mathcal P^{(2n)}$ such that, for  each $\{i,j\}\in\xi$ with $i<j$, we have $i\le n$ and $j\geq n+1$. 
  The following corollary is immediate.
  
 \begin{corollary}\label{xkiasaxzs}
 Let $\tau$ be the Fock state. 
Let $\varphi_1,\dots,\varphi_{2n}\in C_0(Y; V)$ and $g_1,\dots, g_{2n}\in \mathcal G$.  Let the complex-valued measure $\lambda^{(2)}[g_i,g_j]$ be defined by  \eqref{b12nmvbn}.
We have 
\begin{align*}
&\tau\big(a^-(\varphi_1\otimes g_1)\cdots a^-(\varphi_n\otimes g_n)a^+(\varphi_{n+1}\otimes g_{n+1})\cdots a^+(\varphi_{2n}\otimes g_{2n})\big)\\
&\quad=\sum_{\xi \in \mathcal S_n} \int_{Y^{2n}} \bigotimes_{\substack{\{i,j\}\in \xi\\i<j}}\lambda^{(2)}[g_i,g_j](dy_i\,dy_j)\\
&\qquad\times \mathbb T^{(2n)}Q(\xi;y_1,\dots,y_{2n})\varphi_1(y_1)\otimes \varphi_2(y_2)\otimes \cdots\otimes \varphi_{2n}(y_{2n}). 
\end{align*}
  Here $\mathbb T^{(2n)}:V^{\otimes 2n} \to \mathbb C$ is the linear functional defined by \eqref{qdfrqd} and the linear operator $Q(\xi;x_1,\dots,x_{2n})$ is defined by formula \eqref{fxszs} in which 
\begin{align*}
Q^{(k)}(\xi;x_1,\dots,x_{2n})=&Q_{n+k}(x_{n-k+1},x_{j_1^{(k)}})Q_{n+k+1}(x_{n-k+1},x_{j_2^{(k)}})\\
&\quad\cdots Q_{n+k+l_k-1}(x_{n-k+1},x_{j_{l_k-1}^{(k)}}),
\end{align*}
where, for $\xi=\big\{\{1,j_n\},\{2,j_{n-1}\},\dots,\{n,j_1\}\big\}\in\mathcal S_n$, we denote $J=\{j_1,j_2,\dots,j_n\}$ and 
\begin{align*}
J^{(k)}&=\big\{j\in J \mid j\le j_k,\, j\neq j_1,\dots, j\neq j_{k-1}\big\},\\
&=\big\{j_1^{(k)},j_2^{(k)},\dots, j_{l_k}^{(k)}\big\},\quad j_1^{(k)}<j_2^{(k)}<\dots<j_{l_k}^{(k)}.
\end{align*}
 \end{corollary}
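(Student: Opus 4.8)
The plan is to read off Corollary~\ref{xkiasaxzs} from formula~\eqref{fs75w7} by a suitable specialization of the arguments, and then to simplify the resulting expression using the fact that all annihilation operators stand to the left of all creation operators.

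First I would apply \eqref{fs75w7} with $f'_k:=\varphi_k\otimes g_k$, $f_k:=0$ for $k=1,\dots,n$, and $f_k:=\varphi_k\otimes g_k$, $f'_k:=0$ for $k=n+1,\dots,2n$. Then $a^+(f_k)+a^-(f'_k)$ equals $a^-(\varphi_k\otimes g_k)$ for $k\le n$ and $a^+(\varphi_k\otimes g_k)$ for $k>n$, so the left-hand side of \eqref{fs75w7} coincides with \eqref{tangordsxvc652} for the functions $\varphi_k\otimes g_k$. On the right-hand side, recall that $h_k=f'_k$ for $k\in I$ and $h_k=f_k$ for $k\in J$; with the present choice $h_k$ vanishes unless $k\in I$ and $k\le n$, or $k\in J$ and $k\ge n+1$. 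Thus a partition $\xi\in\mathcal P^{(2n)}$ contributes only if $I\subseteq\{1,\dots,n\}$ and $J\subseteq\{n+1,\dots,2n\}$; since $|I|=|J|=n$ and $I\sqcup J=\{1,\dots,2n\}$, this forces $I=\{1,\dots,n\}$ and $J=\{n+1,\dots,2n\}$, i.e.\ $\xi\in\mathcal S_n$. Hence the sum over $\mathcal P^{(2n)}$ collapses to a sum over $\mathcal S_n$, equivalently over $S_n$ via the bijection $\pi\mapsto\xi$ fixed just above the corollary.

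Next I would unwind the definitions \eqref{ctsts}--\eqref{dmmfvdfc1} for $\xi\in\mathcal S_n$. Here $i_1>\dots>i_n$ and $\{i_1,\dots,i_n\}=\{1,\dots,n\}$, so $i_k=n-k+1$, and every element of $J$ exceeds every element of $I$. From the latter one computes directly that the sets of \eqref{xdzazza} satisfy $\mathbf J^{(k)}=\{n+k,n+k+1,\dots,2n\}$, hence $\mathbf j_1^{(k)}=n+k$, and that $J^{(k)}$ reduces to the set described in the corollary (all of whose elements now lie in $\{n+1,\dots,2n\}$). Consequently $\mathbb T^{(2n)}(\xi)$ of \eqref{dmmfvdfc1} reduces to the plain functional $\mathbb T^{(2n)}$ of \eqref{qdfrqd} (the product of the $n$ pairings being independent of their order, and $\langle\cdot,\cdot\rangle_V$ being symmetric), while the operators $Q(x_{i_k},x_{j_m^{(k)}})\big[\mathbf j_m^{(k)},\mathbf j_{m+1}^{(k)}\big]$ occurring in \eqref{ctsts} act on consecutive components $n+k+m-1,\,n+k+m$, so that $Q(\xi;x_1,\dots,x_{2n})$ of \eqref{fxszs} takes the form displayed in the corollary. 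Finally, writing $x_k=(y_k,z_k)$ and using $h_k(x_k)=\varphi_k(y_k)g_k(z_k)$, $Q(x,x')=Q(y,y')$, together with the defining property \eqref{vcyds6ue} of $\sigma^{(2)}$ to integrate out the $Z$-variables, each factor $\sigma^{(2)}(dx_i\,dx_j)$ gets replaced by the scalar measure $\lambda^{(2)}[g_i,g_j](dy_i\,dy_j)$ of \eqref{b12nmvbn} on $Y^2$, and \eqref{fs75w7} turns into the asserted formula.

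The only point requiring care is this last unwinding: in \eqref{ctsts}--\eqref{fxszs} the exchange operators $Q(x,x')[\mathbf j,\mathbf j']$ act on $J$-indexed components and the pairings are recorded by $\mathbb T^{(2n)}(\xi)$, whereas in the corollary the pairings are recorded by the fixed functional $\mathbb T^{(2n)}$ and the exchange operators sit at the consecutive positions $n+k,n+k+1,\dots$. Verifying that the two bookkeepings coincide is a pure relabelling, valid precisely because for $\xi\in\mathcal S_n$ the creation indices form the block $\{n+1,\dots,2n\}$; it uses no new identity or estimate beyond those already established for Theorem~\ref{radiaition12345}.
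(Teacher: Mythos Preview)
Your proposal is correct and is precisely the approach the paper has in mind: the corollary is stated as ``immediate'' from Theorem~\ref{radiaition12345}, and you have supplied exactly the routine specialization and unwinding (restricting \eqref{fs75w7} to $\xi\in\mathcal S_n$, computing $i_k=n-k+1$, $\mathbf J^{(k)}=\{n+k,\dots,2n\}$, hence $\mathbb T^{(2n)}(\xi)=\mathbb T^{(2n)}$ and $Q(\cdot)[\mathbf j_m^{(k)},\mathbf j_{m+1}^{(k)}]=Q_{n+k+m-1}(\cdot)$, then integrating out the $Z$-variables via \eqref{vcyds6ue}) that makes it so.
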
 

 \section{Quasi-free states on the $Q$-MCR algebra}\label{kljiun6958}

Let $\tau$ be a state on the $Q$-MCR algebra  $\mathbb A$. We assume that, for any $\sharp_1,\dots,\sharp_n\in \{+,-\}$ and any  $g_1,\dots,g_n\in\mathcal G$, the functional  $\tau^{(n)}_{\sharp_1,\dots,\sharp_n }(\cdot,g_1,\dots,g_n)$ is continuous on $C_0(Y^n; V^{\otimes n})$. In view of Corollaries \ref{wsdfrcgvddf} and \ref{xkiasaxzs}, we now give the following

\begin{definition} 
(i) Assume that, for any $g_1,g_2\in \mathcal G$, there exists a complex-valued measure $\lambda^{(2)}[g_1,g_2]$ on $Y^2$ such that, for all $\varphi_1,\varphi_2 \in C_0(Y; V_{\mathbb R})$,
\begin{align*}
\tau\big(B(\varphi_1\otimes g_1)B(\varphi_2\otimes g_2)\big)=\int_{Y^2}\langle \varphi_1(y_1),\varphi_2(y_2)\rangle_V\, \lambda^{(2)}[g_1,g_2](dy_1\,dy_2).
\end{align*}
  We say that $\tau$ is a {\it strongly quasi-free state}, if all $\varphi_1,\dots,\varphi_n\in C_0(Y; V_{\mathbb R})$ and $g_1,\dots,g_n \in \mathcal G$, we have 
 \begin{align}
 &\tau\big(B(\varphi_1\otimes g_1)\dotsm B(\varphi_n \otimes g_n)\big)=\sum_{\xi \in \mathcal P^{(n)}}\int_{Y^n}\bigotimes_{\substack{\{i,j\}\in \xi\\i<j}} \lambda^{(2)}[g_i,g_j](dy_i\, dy_j)\notag\\
 &\qquad \times \mathbb T^{(n)}(\xi)Q(\xi;y_1,\dots,y_n) \varphi_1(y_1)\otimes \varphi_2(y_2)\otimes \dotsm \otimes \varphi_n(y_n).\label{dseq4reure}
 \end{align}

(ii) Assume that, for any $g_1,g_2\in \mathcal G$, there exists a complex-valued measure $\rho^{(2)}[g_1,g_2]$ on $Y^2$ such that, for all $\varphi_1,\varphi_2 \in C_0(Y; V)$,
\begin{align*}
\tau\left(A^+(\varphi_1\otimes g_1)A^-(\varphi_2\otimes g_2)\right)=\int_{Y^2}\rho^{(2)}[g_1,g_2](dy_1\,dy_2)\langle \varphi_1(y_1),\varphi_2(y_2)\rangle_V.
\end{align*}
We say that $\tau$ is a {\it gauge-invariant quasi-free state} if for all $m,n\in \mathbb N$, $\varphi_1,\dots,\varphi_{m+n}\in C_0(Y;V)$ and $g_1,\dots, g_{m+n} \in \mathcal G$, we have
\begin{align}
&\tau\left(A^+(\varphi_1\otimes g_1)\dotsm A^+(\varphi_m\otimes g_m)A^-(\varphi_{m+1}\otimes g_{m+1})\dotsm A^-(\varphi_{m+n}\otimes g_{m+n})\right)\notag\\
&\quad=\delta_{m,n}\sum_{\xi \in \mathcal S_n} \int_{Y^{2n}} \bigotimes_{\substack{\{i,j\}\in \xi\\i<j}}\rho^{(2)}[g_i,g_j](dy_i\,dy_j)\notag\\
&\qquad \times \mathbb T^{(2n)}Q(\xi;y_1,\dots,y_{2n})\varphi_1(y_1)\otimes \varphi_2(y_2)\otimes \cdots\otimes \varphi_{2n}(y_{2n}). \label{bugdrt6ew5a5u}
\end{align}
In formula \eqref{bugdrt6ew5a5u}, $\delta_{m,n}$ denotes the Kronecker delta.

\end{definition}

\begin{remark}
Both formulas \eqref{dseq4reure} and \eqref{bugdrt6ew5a5u} imply that the state $\tau$ is completely determined by the operator-valued function $Q(y_1,y_2)$ and by the values   
$$\tau\big(B(\varphi_1\otimes g_1)B(\varphi_2\otimes g_2)\big)\text{ and }\tau\left(A^+(\varphi_1\otimes g_1)A^-(\varphi_2\otimes g_2)\right),$$ respectively. 
\end{remark}

Corollary \ref{wsdfrcgvddf} implies the following 

\begin{proposition}
The Fock state on the $Q$-MCR algebra is strongly quasi-free and the corresponding measure $\lambda^{(2)}[g_1,g_2](dy_1\,dy_2)$ is given by formula \eqref{b12nmvbn}. The Fock state is gauge-invariant quasi-free, with  $\rho^{(2)}[g_1,g_2](dy_1\,dy_2)=0$.
\end{proposition}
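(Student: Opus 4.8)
The plan is to show that the Fock state $\tau$ (defined on $\mathbb{A}$ via $\tau(a) = (a\Omega,\Omega)_{\mathcal{F}^Q(\mathcal{H})}$) satisfies the two defining properties by directly comparing the formulas in the Definition with Corollaries~\ref{wsdfrcgvddf} and \ref{xkiasaxzs}. Both corollaries apply because, as observed in the excerpt, for the Fock state all functionals $\tau^{(n)}_{\sharp_1,\dots,\sharp_n}(\cdot,g_1,\dots,g_n)$ are continuous on $C_0(Y^n;V^{\otimes n})$, so the continuity hypothesis in Section~\ref{kljiun6958} is met.

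\textbf{Strongly quasi-free.} First I would verify the two-point condition. Applying Theorem~\ref{radiaition12345} (equivalently Corollary~\ref{wsdfrcgvddf}) with $n=2$, only the single partition $\xi = \{\{1,2\}\}$ contributes; it is non-crossing, so $Q(\xi;y_1,y_2)$ is the identity operator and $\mathbb{T}^{(2)}(\xi) v_1\otimes v_2 = \langle v_1, v_2\rangle_V$. Hence
$$\tau\big(B(\varphi_1\otimes g_1)B(\varphi_2\otimes g_2)\big) = \int_{Y^2}\langle \varphi_1(y_1),\varphi_2(y_2)\rangle_V\,\lambda^{(2)}[g_1,g_2](dy_1\,dy_2),$$
with $\lambda^{(2)}[g_1,g_2](dy_1\,dy_2) = (g_2,g_1)_{\mathcal{G}}\,\nu^{(2)}(dy_1\,dy_2)$ as in \eqref{b12nmvbn}. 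This exhibits the required measure $\lambda^{(2)}[g_1,g_2]$. Then the general moment formula \eqref{dseq4reure} is exactly the conclusion of Corollary~\ref{wsdfrcgvddf} (recalling $b(f) = a^+(f)+a^-(Jf)$ corresponds to $B(f)$ under the Fock representation, and that $Q(\xi;x_1,\dots,x_n)$ depends only on the first coordinates $y_i$). So this half is essentially an identification of notation.

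\textbf{Gauge-invariant quasi-free.} Since $a^-(f)\Omega = 0$ for every $f$, the state $\tau$ kills every monomial in the creation/annihilation operators that is not already ``balanced'' in a way that leaves the vacuum invariant; in particular $\tau(A^+(f_1)\dotsm A^+(f_m)A^-(g_1)\dotsm A^-(g_n)) = 0$ whenever $m\neq n$, because $a^+(f_m)^* = a^-(Jf_m)$ on $\mathcal{F}^Q_{\mathrm{fin}}(\mathcal{H})$ and a Wick-ordered monomial with $m\neq n$ shifts the particle number. This gives the factor $\delta_{m,n}$. For the two-point function with $m=n=1$, the single relevant partition in $\mathcal{S}_1$ forces $\tau(A^+(\varphi_1\otimes g_1)A^-(\varphi_2\otimes g_2)) = (a^+(\varphi_1\otimes g_1)\Omega, a^+(J(\varphi_2\otimes g_2))\Omega) = 0$ because $a^-(\varphi_2\otimes g_2)\Omega = 0$; hence one may take $\rho^{(2)}[g_1,g_2] \equiv 0$, and all terms on the right-hand side of \eqref{bugdrt6ew5a5u} with $n\geq 1$ vanish. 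It remains only to note that for $m=n$ the left-hand side is also $0$: this is again Corollary~\ref{xkiasaxzs}, where every summand carries a product of measures $\lambda^{(2)}[g_i,g_j]$ whose role is now played by $\rho^{(2)}[g_i,g_j]=0$. Strictly speaking, since the Fock two-point ``$\rho$'' is zero, Corollary~\ref{xkiasaxzs} with the $\lambda$-measures is the statement that $\tau(a^-(\dots)\dotsm a^+(\dots))$ is nonzero in general; but here we want $\tau$ of a \emph{Wick-ordered} product $A^+\dotsm A^+ A^-\dotsm A^-$, and such a product annihilates $\Omega$, so its $\tau$-value is $0 = $ the right-hand side of \eqref{bugdrt6ew5a5u} with $\rho^{(2)}\equiv 0$.

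\textbf{Main obstacle.} The only genuine subtlety is bookkeeping: making sure the ``normal-ordered'' moment formula \eqref{bugdrt6ew5a5u}, which is stated in terms of $A^+$'s to the left and $A^-$'s to the right, is being compared with the correct specialization of Theorem~\ref{radiaition12345}, and in particular that the Fock state's vanishing of $\mathbf{S}^{(m,n)}$ (stated in Section~\ref{fgdyjdde}) is precisely what both clauses of the Definition demand once $\rho^{(2)}\equiv 0$. There is no analytic difficulty; the proof is a short verification once one observes $a^-(f)\Omega=0$ and reads off the $n=1$ and $n=2$ cases of the two corollaries. I would write the proof in three or four lines: cite $a^-(f)\Omega = 0$ for the $\rho^{(2)}\equiv 0$ and the $\delta_{m,n}$; cite Corollary~\ref{wsdfrcgvddf} for \eqref{dseq4reure} together with the explicit $\lambda^{(2)}$; and note that \eqref{bugdrt6ew5a5u} with $\rho^{(2)}\equiv0$ reduces to $0=0$, which holds since Wick-ordered monomials annihilate the vacuum.
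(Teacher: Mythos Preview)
Your proposal is correct and follows essentially the same approach as the paper, which simply records the proposition as an immediate consequence of Corollary~\ref{wsdfrcgvddf} (for the strongly quasi-free part) together with the already-noted vanishing of all $\mathbf S^{(m,n)}$ in Section~\ref{fgdyjdde} (for the gauge-invariant part). Your write-up is more explicit---you unpack the $n=2$ case, verify $\rho^{(2)}\equiv 0$ from $a^-(f)\Omega=0$, and correctly observe that Corollary~\ref{xkiasaxzs} is not needed for the gauge-invariant claim since Wick-ordered products annihilate the vacuum directly---but there is no substantive difference in strategy.
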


We will now construct a class of gauge-invariant quasi-free states and strongly quasi-free states  on the $Q$-MCR algebra. We will be able to do this under additional assumptions on the operator-valued function $Q$.

\begin{assumptions}\label{Rosated} We assume:

\begin{itemize}
\item[(i)] For all $y_1,y_2\in Y$, $\widetilde Q(y_1,y_2)$ is unitary and 
$
\widetilde Q(y_1,y_2)^*=\widetilde Q(y_2,y_1)$.

\item[(ii)] For all $y_1,y_2\in Y$, we have $\widetilde {\widetilde {Q}}(y_1,y_2)=Q(y_1,y_2)$.

\item[(iii)] For all $y_1,y_2\in Y$,  we have $\widehat Q(y_1,y_2)=Q(y_1,y_2)$ and $\widehat{\widetilde Q}(y_1,y_2)=\widetilde Q(y_1,y_2)$.

\item[(iv)] There exists a constant $\varkappa\in \mathbb R$ such that, for all $y \in Y$ and $v^{(2)} \in V^{\otimes 2}$, we have $\operatorname{Tr}\big(\widetilde Q(y,y)v^{(2)}\big)=\varkappa \operatorname{Tr}v^{(2)}$.

\item[(v)] Let $Y_1$ and $Y_2$ be two copies of $Y$ and let $\mathbf Y:=Y_1\sqcup Y_2$. Define a (continuous) function $\mathbf Q:\mathbf  Y^2\to \mathcal L\left(V^{\otimes 2}\right)$ as follows: ${\bf Q}(y_1,y_2):=Q(y_1,y_2)$ if either  $y_1,y_2\in Y_1$ or $y_1,y_2 \in Y_2$, and $\mathbf Q(y_1,y_2):=\widetilde Q(y_2,y_1)$ if either $y_1\in Y_1$, $y_2\in Y_2$ or  $y_1\in Y_2$, $y_2 \in Y_1$. [Note that, by (i) and (ii), for any $(y_1,y_2)\in \mathbf Y^2$, $\mathbf Q(y_1,y_2)$ is a unitary operator and $\mathbf Q(y_1,y_2)^*=\mathbf Q(y_2,y_1)$.] Then the function $\mathbf Q:\mathbf Y^2\to \mathcal L\left(V^{\otimes 2}\right)$  satisfies the functional Yang--Baxter equation point-wise, cf.~\eqref{v6w6u34}. 
\end{itemize}
\end{assumptions}

Similarly to Assumption~\ref{Rosated}, (v), we define $\mathbf X:=X_1\sqcup X_2$ and let $\mathbf Q:\mathbf X^2\to\mathcal L(V^{\otimes 2})$ be defined by $\mathbf Q(x_1,x_2):=\mathbf Q(y_1,y_2)$, where $x_i=(y_i,z_i)\in \mathbf X$, $i=1,2$. For 
$$\mathbf f\in L^2(\mathbf X;V)=L^2(X_1;V)\oplus L^2(X_2;V)= \mathcal H\oplus\mathcal H,$$
 we construct the standard creation operator $\mathbf a^+(\mathbf f)$ and the standard annihilation operator $\mathbf a^-(\mathbf f)$ in the $\mathbf Q$-symmetric Fock space $\mathcal F^{\mathbf Q}(L^2(\mathbf X;V))$. 
Similarly to Section~\ref{fgdyjdde}, we then construct the $\mathbf Q$-MCR algebra, denoted by $\mathbf A$, and the Fock state on $\mathbf A$, denoted by $ \boldsymbol\tau$. In particular, we think of $\mathbf A$ as the algebra spanned by operator-valued integrals 
\begin{equation}\label{pppooojjj}
\int_{{\bf X}^n}\langle \mathbf f^{(n)}(x_1,\dots,x_n),{\bf a}^{\sharp_1}(x_1)\otimes \dotsm \otimes {\bf a}^{\sharp_n}(x_n)\rangle_{V^{\otimes n}}\, dx_1\dotsm dx_n,
\end{equation}
where $\sharp_1,\dots,\sharp_n \in \{+,-\}$ and $\mathbf f^{(n)}\in \boldsymbol{\mathfrak F}^{(n)}$, where the space $\boldsymbol{\mathfrak F}^{(n)}$ is constructed similarly to $\mathfrak F^{(n)}$ but  by starting with $\mathbf X$ instead of $X$, compare with  \eqref{vtrs56e57}

Let $\sharp_1,\dots,\sharp_n \in \{+,-\}$, $i_1,\dots,i_n\in\{1,2\}$ and let $f^{(n)}\in \mathfrak F^{(n)}$. We define
\begin{equation}\label{fdea4q567i8o}
\int_{ X^n}\langle  f^{(n)}(x_1,\dots,x_n), \mathbf a_{i_1}^{\sharp_1}(x_1)\otimes \dotsm \otimes \mathbf a_{i_n}^{\sharp_n}(x_n)\rangle_{V^{\otimes n}}\, dx_1\dotsm dx_n
\end{equation}
to be equal to the operator-valued integral \eqref{pppooojjj} in which 
\begin{equation}
\mathbf f^{(n)}(x_1,\dots,x_n)= 
\begin{cases}
f^{(n)}(x_1,\dots,x_n),& \text{if }(x_1,\dots,x_n)\in X_{i_1} \times X_{i_2}\times \dotsm \times X_{i_n},\\
 0,&  \text{otherwise}.
\end{cases}\label{fdy6eio}
\end{equation}
In formula \eqref{fdy6eio}, we identified $(x_1,\dots,x_n)\in X_{i_1} \times X_{i_2}\times \dotsm \times X_{i_n}$ with  $(x_1,\dots,x_n)\in X^n$. Thus, intuitively, for $i=1,2$ and $\sharp\in \{+,-\}$, $\mathbf a^\sharp_i(\cdot)$ is the restriction of $\mathbf a^\sharp(\cdot)$ to $X_i$.

As easily seen, the algebra $\mathbf A$ is spanned by the operator-valued integrals of the form \eqref{fdea4q567i8o}.
Similarly to \eqref{cftrs5y64e}, we denote the integral in \eqref{fdea4q567i8o} by 
$\boldsymbol\Phi(f^{(n)};i_1,\sharp_1,\dots,i_n,\sharp_n)$.

Formulas \eqref{gdtrs5u4}--\eqref{cdtesw5w}, Proposition~\ref{zzzzaaqqq} and Assumption~\ref{Rosated} imply

\begin{lemma}\label{vye67gytr56}
 Let $f^{(n)}\in\mathfrak F^{(n)}$, $i_1,\dots,i_n\in\{1,2\}$, $\sharp_1,\dots,\sharp_n\in\{+,-\}$, and $k\in\{1,\dots,n-1\}$. The following commutation relations hold:
 if $\sharp_k=\sharp_{k+1}$ and $i_k=i_{k+1}$,
$$\boldsymbol\Phi(f^{(n)};i_1,\sharp_1,\dots,i_n,\sharp_n)=\boldsymbol\Phi(Q_k(x_k,x_{k+1})f^{(n)}(x_1,\dots,x_{k+1},x_k,\dots,x_n);i_1,\sharp_1,\dots,i_n,\sharp_n);$$
if $\sharp_k=\sharp_{k+1}$ and $i_k\ne i_{k+1}$,
\begin{align*}
&\boldsymbol\Phi(f^{(n)};i_1,\sharp_1,\dots,i_n,\sharp_n)\\
&\quad =\boldsymbol\Phi(\widetilde Q_k(x_{k+1},x_k)f^{(n)}(x_1,\dots,x_{k+1},x_k,\dots,x_n);i_1,\sharp_1,\dots,i_{k+1},\sharp_{k},i_{k},\sharp_{k+1},\dots, i_n,\sharp_n);
\end{align*}
 if $\sharp_k=-$, $\sharp_{k+1}=+$ and $i_k=i_{k+1}$,
\begin{align*}
&\boldsymbol\Phi(f^{(n)};i_1,\sharp_1,\dots,i_n,\sharp_n)\\
&\quad =\boldsymbol\Phi(\widetilde Q_k(x_{k+1},x_k)f^{(n)}(x_1,\dots,x_{k+1},x_k,\dots,x_n);i_1,\sharp_1,\dots,i_{k},\sharp_{k+1},i_{k+1},\sharp_{k},\dots, i_n,\sharp_n)\\
&\qquad+\boldsymbol\Phi(g^{(n-2)};i_1,\sharp_1,\dots,i_{k-1},\sharp_{k-1},i_{k+2},\sharp_{k+2},\dots,i_n,\sharp_n),
\end{align*}
where $g^{(n-2)}$ is given by \eqref{cdtesw5w};  if $\sharp_k=-$, $\sharp_{k+1}=+$ and $i_k\ne i_{k+1}$,
\begin{align*}
&\boldsymbol\Phi(f^{(n)};i_1,\sharp_1,\dots,i_n,\sharp_n)\\
&\quad =\boldsymbol\Phi(Q_k(x_{k},x_{k+1})f^{(n)}(x_1,\dots,x_{k+1},x_k,\dots,x_n);i_1,\sharp_1,\dots,i_{k+1},\sharp_{k+1},i_{k},\sharp_{k},\dots, i_n,\sharp_n);
\end{align*}
if $\sharp_k=+$, $\sharp_{k+1}=-$ and $i_k=i_{k+1}$,
\begin{align*}
&\boldsymbol\Phi(f^{(n)};i_1,\sharp_1,\dots,i_n,\sharp_n)\\
&\quad =\boldsymbol\Phi(\widetilde Q_k(x_{k+1},x_k)f^{(n)}(x_1,\dots,x_{k+1},x_k,\dots,x_n);i_1,\sharp_1,\dots,i_{k},\sharp_{k+1},i_{k+1},\sharp_{k},\dots, i_n,\sharp_n)\\
&\qquad-\varkappa\boldsymbol\Phi(g^{(n-2)};i_1,\sharp_1,\dots,i_{k-1},\sharp_{k-1},i_{k+2},\sharp_{k+2},\dots,i_n,\sharp_n),
\end{align*}
where $g^{(n-2)}$ is given by \eqref{cdtesw5w}; if $\sharp_k=+$, $\sharp_{k+1}=-$ and $i_k\ne i_{k+1}$,
\begin{align*}
&\boldsymbol\Phi(f^{(n)};i_1,\sharp_1,\dots,i_n,\sharp_n)\\
&\quad =\boldsymbol\Phi(Q_k(x_{k},x_{k+1})f^{(n)}(x_1,\dots,x_{k+1},x_k,\dots,x_n);i_1,\sharp_1,\dots,i_{k+1},\sharp_{k+1},i_{k},\sharp_{k},\dots, i_n,\sharp_n).
\end{align*}
\end{lemma}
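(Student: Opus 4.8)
The plan is to reduce every one of the six stated commutation relations to the corresponding relation already proved for the algebra $\mathbf A$, i.e.\ to the $\mathbf Q$-MCR written via formulas \eqref{gdtrs5u4}--\eqref{cdtesw5w} applied with the function $\mathbf Q$ and Proposition~\ref{zzzzaaqqq}. First I would recall that, by definition \eqref{fdea4q567i8o}--\eqref{fdy6eio}, $\boldsymbol\Phi(f^{(n)};i_1,\sharp_1,\dots,i_n,\sharp_n)$ is the operator-valued integral \eqref{pppooojjj} for the function $\mathbf f^{(n)}$ supported on the single ``cell'' $X_{i_1}\times\dots\times X_{i_n}\subset\mathbf X^n$. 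Since the commutation relations \eqref{gdtrs5u4}--\eqref{ctrw53} valid in $\mathbf A$ are stated for arbitrary $\mathbf f^{(n)}\in\boldsymbol{\mathfrak F}^{(n)}$, they may be applied to this cell-supported $\mathbf f^{(n)}$; a function supported on one cell is sent by the swap $x_k\leftrightarrow x_{k+1}$ to a function supported on the cell with $i_k$ and $i_{k+1}$ interchanged, so tracking which cell the result lands in is exactly what produces the index bookkeeping $i_1,\sharp_1,\dots,i_{k+1},\sharp_{k},i_{k},\sharp_{k+1},\dots$ appearing on the right-hand sides.

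The core of the argument is then a case split on the pair $(\sharp_k,\sharp_{k+1})$ and on whether $i_k=i_{k+1}$, matching the six cases of the statement. For $x_k\in X_{i_k}$, $x_{k+1}\in X_{i_{k+1}}$ one has $\mathbf Q(x_k,x_{k+1})=Q(y_k,y_{k+1})$ when $i_k=i_{k+1}$ and $\mathbf Q(x_k,x_{k+1})=\widetilde Q(y_{k+1},y_k)$ when $i_k\ne i_{k+1}$, by Assumption~\ref{Rosated}(v); likewise $\widehat{\mathbf Q}$ and $\widetilde{\mathbf Q}$ are computed from $\mathbf Q$ using Assumption~\ref{Rosated}(iii) (which gives $\widehat Q=Q$, $\widehat{\widetilde Q}=\widetilde Q$) together with Assumption~\ref{Rosated}(ii) (which gives $\widetilde{\widetilde Q}=Q$). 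Concretely:
\begin{itemize}
\item[] If $\sharp_k=\sharp_{k+1}=+$, I apply \eqref{gdtrs5u4} for $\mathbf Q$: the prefactor is $\mathbf Q_k(x_k,x_{k+1})$, which equals $Q_k(y_k,y_{k+1})$ if $i_k=i_{k+1}$ and $\widetilde Q_k(y_{k+1},y_k)$ if $i_k\ne i_{k+1}$, yielding the first two displayed relations. If $\sharp_k=\sharp_{k+1}=-$, I apply \eqref{vcerw5q5q}: the prefactor is $\widehat{\mathbf Q}_k(x_k,x_{k+1})$; by Assumption~\ref{Rosated}(iii), $\widehat{\mathbf Q}=\mathbf Q$ on each cell, so for $i_k=i_{k+1}$ this is $Q_k(y_k,y_{k+1})$, while for $i_k\ne i_{k+1}$ one needs $\widehat{\mathbf Q}(x_k,x_{k+1})$ with $x_k,x_{k+1}$ in different parts; unwinding the definition $\widehat{\mathbf Q}(y_1,y_2)=\mathbb S^{(2)}\mathbf Q(y_2,y_1)\mathbb S^{(2)}$ together with $\widehat{\widetilde Q}=\widetilde Q$ gives $\widetilde Q_k(y_{k+1},y_k)$, matching the claim. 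If $\sharp_k=-,\ \sharp_{k+1}=+$, I apply \eqref{ctrw53}--\eqref{cdtesw5w}: the prefactor is $\widetilde{\mathbf Q}_k(x_{k+1},x_k)$ and a trace term \eqref{cdtesw5w} appears; for $i_k=i_{k+1}$, $\widetilde{\mathbf Q}=\widetilde Q$ gives the prefactor $\widetilde Q_k(x_{k+1},x_k)$ and the trace term survives as stated, while for $i_k\ne i_{k+1}$, $\widetilde{\widetilde Q}=Q$ (Assumption~\ref{Rosated}(ii)) turns the prefactor into $Q_k(y_k,y_{k+1})$ and the trace term drops because the two annihilation/creation labels belong to different, ``orthogonal'' parts---more precisely, the relevant $\operatorname{Tr}_k$ of a function supported on a cell with $i_k\ne i_{k+1}$ restricted to $x_k=x_{k+1}$ vanishes, since $X_{i_k}\cap X_{i_{k+1}}=\emptyset$ forces $\mathbf f^{(n)}(\dots,x,x,\dots)\equiv0$. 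Finally, for $\sharp_k=+,\ \sharp_{k+1}=-$, I invoke Proposition~\ref{zzzzaaqqq} for $\mathbf Q$ (whose hypotheses are guaranteed by Assumption~\ref{Rosated}(i) giving invertibility of $\widetilde{\mathbf Q}$ and by Assumption~\ref{Rosated}(iv) giving \eqref{olhkgmknenj} with constant $\varkappa$): the prefactor is $\widetilde{\mathbf Q}_k(x_k,x_{k+1})^{-1}$ and a $-\varkappa$-multiple of the trace term appears. For $i_k=i_{k+1}$, by (i) and (iii), $\widetilde{\mathbf Q}(x_k,x_{k+1})^{-1}=\widetilde{\mathbf Q}(x_{k+1},x_k)=\widetilde Q(y_{k+1},y_k)$, giving the fifth displayed relation with its $-\varkappa$ term. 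For $i_k\ne i_{k+1}$, $\widetilde{\mathbf Q}(x_k,x_{k+1})=\widetilde{\widetilde Q}(y_k,y_{k+1})=Q(y_k,y_{k+1})$, hence $\widetilde{\mathbf Q}(x_k,x_{k+1})^{-1}=Q(y_k,y_{k+1})^{-1}=Q(y_{k+1},y_k)$ by unitarity of $Q$ and $Q^*(y_1,y_2)=Q(y_2,y_1)$; wait, one should double-check the direction here against the stated right-hand side $Q_k(x_k,x_{k+1})$---this is precisely a bookkeeping point to verify carefully---and again the trace term vanishes by disjointness of the parts, leaving the sixth relation.
\end{itemize}

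I expect the main obstacle to be purely organizational rather than conceptual: carefully keeping track, in each of the six cases, of (a) the direction of the arguments $(x_k,x_{k+1})$ versus $(x_{k+1},x_k)$ inside $Q$, $\widetilde Q$, $\widehat Q$ and their inverses, which is sensitive to the conventions in \eqref{v6w6u34}, \eqref{tydse5asw}, Assumption~\ref{Rosated}(i), and Proposition~\ref{zzzzaaqqq}, and (b) the precise relabeling of the sequence $i_1,\sharp_1,\dots,i_n,\sharp_n$ after the swap, including the fact that for the mixed case $\sharp_k=-,\sharp_{k+1}=+$ with $i_k=i_{k+1}$ the order of the $\sharp$'s is exchanged but the common index is repeated, whereas in the $i_k\ne i_{k+1}$ case both the $\sharp$'s and the $i$'s are exchanged. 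Beyond that, the only genuine input is the observation that the trace term from \eqref{ctrw53}/\eqref{cdtesw5w} automatically vanishes whenever $i_k\ne i_{k+1}$, because a function supported on $X_{i_1}\times\dots\times X_{i_n}$ with $X_{i_k}\cap X_{i_{k+1}}=\emptyset$ is identically zero on the diagonal $x_k=x_{k+1}$; this is what makes the two ``$i_k\ne i_{k+1}$'' relations free of any $\operatorname{Tr}$ contribution, and it mirrors the role played by gauge orthogonality of the two copies $Y_1,Y_2$ in \cite{anyons}.
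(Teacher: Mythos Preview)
Your proposal is correct and is essentially the same approach as the paper's: the paper's entire proof reads ``Formulas \eqref{gdtrs5u4}--\eqref{cdtesw5w}, Proposition~\ref{zzzzaaqqq} and Assumption~\ref{Rosated} imply'', and your plan is precisely a careful unpacking of that one line, case by case. The point you flag as needing a double-check in the sixth case in fact goes through: for $i_k\ne i_{k+1}$ one has $\mathbf Q(x_k,x_{k+1})=\widetilde Q(y_{k+1},y_k)$, hence $\widetilde{\mathbf Q}(x_k,x_{k+1})=\widetilde{\widetilde Q}(y_{k+1},y_k)=Q(y_{k+1},y_k)$ by Assumption~\ref{Rosated}(ii), and its inverse is $Q(y_{k+1},y_k)^*=Q(y_k,y_{k+1})$, matching the stated $Q_k(x_k,x_{k+1})$.
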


Let us fix a bounded linear operator $K\in \mathcal L\left(\mathcal G\right)$. In the case where $\varkappa\geq 0$, we assume that $K\geq 0$ and in the case where $\varkappa<0$, we assume $ 0\le K\le- \frac{1}{\varkappa} $. Let 
$K_1:=\sqrt K$ and $K_2:=\sqrt{1+\varkappa K}$. Furthermore, we  assume that both operators $K_1$ and $K_2$ are not equal to zero, equivalently $K\ne0$ and $K\ne-\frac{1}{\varkappa}$.

For a bounded linear operator  $C \in \mathcal L(\mathcal G)$, we denote $C':=JCJ$, the complex conjugate of $C$.
Here, just as above $J$ is the antilinear operator of complex conjugation.  

Let $f=\varphi\otimes g$ with $\varphi\in C_0(Y; V)$ and $g \in \mathcal G$. We define 
\begin{align}
&A^+(f)=\int_X\langle \varphi(y)(K_2g)(z),{\bf a}_2^+(x)\rangle_V\, dx+\int_X \langle \varphi(y)(K_1g)(z),{\bf a}_1^-(x)\rangle_V\, dx,\notag\\
&A^-(f)=\int_X\langle \varphi(y)(K_2^\prime g)(z),{\bf a}_2^-(x)\rangle_V\, dx+\int_X \langle \varphi(y)(K_1^\prime g)(z),{\bf a}_1^+(x)\rangle_V\, dx.\label{siisssaass}
\end{align}
We define the corresponding operator-valued distributions $A^+(x)$, $A^-(x)$ through the formula 
\begin{equation}\label{vcfstsj7k6r78i}
 A^+(f)=\int_X \langle f(x), A^+(x)\rangle_V\, dx,\quad A^-(f)=\int_X \langle f(x),A^-(x) \rangle_V\, dx.
 \end{equation}

\begin{proposition}\label{whisk63254}
We have $\left(A^+(f)\right)^*=A^-(Jf)$ and the operators $A^+(f)$, $A^-(f)$ defined by \eqref{siisssaass} satisfy the $Q$-MCR, see \eqref{xseas5aq53q}--\eqref{yqdqidi}.
\end{proposition}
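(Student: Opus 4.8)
The plan is to reduce both assertions to the already available properties of the standard creation and annihilation operators $\mathbf a^\sharp_i(\cdot)$ on the $\mathbf Q$-symmetric Fock space and to the commutation relations of Lemma~\ref{vye67gytr56}. First I would rewrite \eqref{siisssaass} in the $\boldsymbol\Phi$-notation: since $\int_X\langle\varphi(y)(Cg)(z),\mathbf a^\sharp_i(x)\rangle_V\,dx=\boldsymbol\Phi(\varphi\otimes Cg;i,\sharp)$ for every bounded $C\in\mathcal L(\mathcal G)$, we have, for $f=\varphi\otimes g$,
\begin{align*}
A^+(f)&=\boldsymbol\Phi(\varphi\otimes K_2g;2,+)+\boldsymbol\Phi(\varphi\otimes K_1g;1,-),\\
A^-(f)&=\boldsymbol\Phi(\varphi\otimes K_2'g;2,-)+\boldsymbol\Phi(\varphi\otimes K_1'g;1,+),
\end{align*}
and these extend by linearity to all $f\in\mathfrak F^{(1)}$, indeed to all of $\mathcal H$. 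I would also record that $K_1=\sqrt K$ and $K_2=\sqrt{1+\varkappa K}$ are non-negative, hence self-adjoint, so $K_1',K_2'$ are self-adjoint and, since $(CD)'=C'D'$, one has $K_1'^{\,2}=(K_1^2)'=K'$ and $K_2'^{\,2}=(1+\varkappa K)'=\mathbf 1+\varkappa K'$. For the adjoint identity I would then use the $n=1$ form of the $*$-operation on the algebra, $\boldsymbol\Phi(h;i,+)^*=\boldsymbol\Phi(Jh;i,-)$ and $\boldsymbol\Phi(h;i,-)^*=\boldsymbol\Phi(Jh;i,+)$ (equivalently $\mathbf a^+_i(h)^*\!\restriction=\mathbf a^-_i(Jh)$), cf.~\eqref{vftrd6e7}; since $Jh=(J\varphi)\otimes\overline{Cg}=(J\varphi)\otimes C'(Jg)$ for $h=\varphi\otimes Cg$ and $Jf=(J\varphi)\otimes(Jg)$, adding the two pieces of $A^+(f)^*$ yields exactly $A^-(Jf)$ on $\mathcal F^{\mathbf Q}_{\mathrm{fin}}$.

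For the $Q$-MCR \eqref{xseas5aq53q}--\eqref{yqdqidi}, equivalently \eqref{ydsydq}, I would treat the three relations one at a time. In each case one expands the left-hand product $A^{\sharp_1}(f_1)A^{\sharp_2}(f_2)$, via the product rule $\boldsymbol\Phi(h_1;i_1,\sharp_1)\boldsymbol\Phi(h_2;i_2,\sharp_2)=\boldsymbol\Phi(h_1(x_1)\otimes h_2(x_2);i_1,\sharp_1,i_2,\sharp_2)$, into four terms (one per choice of the two copies), reorders each term by Lemma~\ref{vye67gytr56} (with $n=2$, $k=1$), and expands the right-hand side of the $Q$-MCR in the same fashion; a term-by-term comparison then gives the claim. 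In a same-copy term the reordering is by $\mathbf Q|_{X_i\times X_i}=Q$ when $\sharp_1=\sharp_2$ (here Assumption~\ref{Rosated}(iii), $\widehat Q=Q$, is used for a pair of annihilators) and by $\widetilde Q$, with in relation \eqref{yqdqidi} an extra scalar term, for a mixed pair; in a cross-copy term it is by $\widetilde Q$ when $\sharp_1=\sharp_2$ and by $Q$ for a mixed pair. One then checks that the $K_1,K_2$-weights attached to the four left-hand terms coincide with those produced by expanding the right-hand side, and that the operators $Q$, $\widetilde Q$ match after relabelling the integration variables, for which one uses the defining relation \eqref{tydse5asw}, the identity $Q^*(y_1,y_2)=Q(y_2,y_1)$, and Assumptions~\ref{Rosated}(ii),(iii) ($\widetilde{\widetilde Q}=Q$, $\widehat{\widetilde Q}=\widetilde Q$) to bring the $\widetilde Q$-operators delivered by Lemma~\ref{vye67gytr56} into the shape dictated by \eqref{ydsydq}. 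Assumption~\ref{Rosated}(iv) is exactly what makes the creation--annihilation, same-copy case of Lemma~\ref{vye67gytr56} available (it is the hypothesis of Proposition~\ref{zzzzaaqqq} for $\widetilde Q$), and Assumption~\ref{Rosated}(v) is what permits the whole $\mathbf Q$-Fock construction, hence Lemma~\ref{vye67gytr56} itself; continuity of $Q$ makes all the integrals on the right-hand side of \eqref{xseas5aq53q}--\eqref{yqdqidi} well-defined.

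The only genuinely new computation is the scalar term in relation \eqref{yqdqidi}. Expanding $A^-(f_1)A^+(f_2)$, a constant appears only from the product $\boldsymbol\Phi(\varphi_1\otimes K_2'g_1;2,-)\boldsymbol\Phi(\varphi_2\otimes K_2g_2;2,+)$ (the case $\sharp_1=-$, $\sharp_2=+$, $i_1=i_2$ of Lemma~\ref{vye67gytr56}) and from $\boldsymbol\Phi(\varphi_1\otimes K_1'g_1;1,+)\boldsymbol\Phi(\varphi_2\otimes K_1g_2;1,-)$ (the case $\sharp_1=+$, $\sharp_2=-$, $i_1=i_2$, which carries the factor $-\varkappa$). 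Using $\operatorname{Tr}(u\otimes v)=\langle u,v\rangle_V$, $\overline{Cg}=C'(Jg)$ and self-adjointness of $K_1',K_2'$, the first constant equals $\big(\int_Y\langle\varphi_1(y),\varphi_2(y)\rangle_V\,dy\big)\big(K_2'^{\,2}g_1,Jg_2\big)_{\mathcal G}=\big(\int_Y\langle\varphi_1,\varphi_2\rangle_V\,dy\big)\big((\mathbf 1+\varkappa K')g_1,Jg_2\big)_{\mathcal G}$, and the second equals $-\varkappa\big(\int_Y\langle\varphi_1,\varphi_2\rangle_V\,dy\big)\big(K_1'^{\,2}g_1,Jg_2\big)_{\mathcal G}=-\varkappa\big(\int_Y\langle\varphi_1,\varphi_2\rangle_V\,dy\big)(K'g_1,Jg_2)_{\mathcal G}$; since $(\mathbf 1+\varkappa K')-\varkappa K'=\mathbf 1$, their sum is $\big(\int_Y\langle\varphi_1,\varphi_2\rangle_V\,dy\big)(g_1,Jg_2)_{\mathcal G}=\int_X\langle f_1(x),f_2(x)\rangle_V\,dx$, which is precisely the identity term of \eqref{yqdqidi}. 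This cancellation, $K_2^2-\varkappa K_1^2=\mathbf 1$, is exactly the reason for the choice $K_1=\sqrt K$, $K_2=\sqrt{1+\varkappa K}$.

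The hard part is none of the above individually but the bookkeeping in the $Q$-MCR step: keeping track of the two $V$-slots and of the exact argument order of the operators $Q(\cdot,\cdot)$ and $\widetilde Q(\cdot,\cdot)$ through the four-term expansions of the two sides of each relation and matching them term by term; in particular, one must verify that the cross-copy outputs $\widetilde Q(x_{k+1},x_k)$ acting on the swapped arguments agree, after relabelling, with the $\widetilde Q(x_1,x_2)^T$ appearing in \eqref{ydsydq}. The adjoint identity and the scalar cancellation, by contrast, are short.
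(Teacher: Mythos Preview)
Your proposal is correct and follows essentially the same route as the paper's own proof: reduce to Lemma~\ref{vye67gytr56} via the definition \eqref{siisssaass}, and verify the scalar term in \eqref{yqdqidi} by the cancellation $K_2^2-\varkappa K_1^2=\mathbf 1$. The paper records this very tersely (it only writes out the scalar computation), whereas you have spelled out the adjoint identity and the four-term bookkeeping in more detail, but the argument is the same.
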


\begin{proof}
The proposition is a consequence of \eqref{siisssaass}, Lemma~\ref{vye67gytr56}, and the following observation: for any $g_1,g_2\in\mathcal G$,
\begin{align*}
&\int_Z (K_2'g_1)(z)(K_2g_2)(z)dz-\varkappa\int_Z(K_1'g_1)(z)(K_1g_2)(z)dz\\
&\quad=(K_2g_2,K_2Jg_1)_\mathcal G-\varkappa(K_1g_2,K_1Jg_1)_\mathcal G\\
&\quad=\big((K_2^2-\varkappa K_1^2)g_2,Jg_1\big)_\mathcal G=\int_Z g_1(z)g_2(z)dz.\qedhere
\end{align*}

\end{proof}

Inspired by Proposition~\ref{whisk63254}, we will now construct a $Q$-MCR algebra $\mathbb A$. Formally, $\mathbb A$ is spanned by the operators
$$\Phi(f^{(n)};\sharp_1,\dots,\sharp_n):=\int_{X^n}\langle f^{(n)}(x_1,\dots,x_n), A^{\sharp_1}(x_1)\otimes \dotsm \otimes  A^{\sharp_n}(x_n) \rangle_{V^{\otimes n}} \,dx_1\dotsm dx_n,$$
where $f^{(n)}\in\mathfrak F^{(n)}$,  $\sharp_1,\dots,\sharp_n\in\{+,-\}$. We now rigorously define $\Phi(f^{(n)};\sharp_1,\dots,\sharp_n)$  as follows.

Denote $K(+,1):=K_1$, $K(+,2):=K_2$, $K(-,1):=K_1'$, $K(-,2):=K_2'$, and denote $\gamma(+,1):=-$, $\gamma(+,2):=+$ $\gamma(-,1):=+$, $\gamma(-,2):=-$. Let $f^{(n)}\in\mathfrak F^{(n)}$ be of the form \eqref{tsar43}. Then we define
\begin{align}
&\Phi(f^{(n)};\sharp_1,\dots,\sharp_n):=\sum_{i_1,\dots,i_n\in\{1,2\}}\boldsymbol\Phi\big(\varphi^{(n)}(y_1,\dots,y_n)\notag\\
&\quad\times\left(K(\sharp_1,i_1)g_1\right)(z_1)\dotsm  \left(K(\sharp_n,i_n)g_n\right)(z_n);i_1,\gamma(\sharp_1,i_1),\dots, i_n,\gamma(\sharp_n,i_n)\big).\label{vydstesa5qw5w6}
\end{align}
Next, we extend the definition of $\Phi(f^{(n)};\sharp_1,\dots,\sharp_n)$  by linearity to the case of an arbitrary $f^{(n)}\in\mathfrak F^{(n)}$. 

By construction, each $\Phi(f^{(n)};\sharp_1,\dots,\sharp_n)$ belongs to the $\mathbf Q$-MCR algebra $\mathbf A$. Let $\mathbb A$ denote the unital $*$-algebra generated by these elements. Thus, $\mathbb A$ is a $*$-sub-algebra of ${\bf A}$. We will denote by $\tau$ the restriction of the Fock state $\boldsymbol \tau$ to $\mathbb A$.

\begin{theorem}\label{buty7e6u} (i)
The $*$-algebra $\mathbb A$ constructed above is a $Q$-MCR algebra, i.e., it satisfies the conditions \eqref{tera4qy}--\eqref{cdtesw5w}. 

(ii) The state $\tau$ on $\mathbb A$ is gauge-invariant quasi-free and the corresponding complex-valued measure $\rho^{(2)}[g_1,g_2]$ on $Y^2$ is given by 
\begin{equation}\label{convoadsxc}
\rho^{(2)}[g_1,g_2](dy_1 dy_2)=\nu^{(2)}(dy_1 dy_2)\int_Z(Kg_1)(z)g_2(z)\mu(dz).
\end{equation}

(iii) The state $\tau$ is strongly quasi-free if and only if 
\begin{equation}\label{vutde7i}
\widetilde Q(y_1,y_2)=Q(y_2,y_1).
\end{equation}
 In the latter case, the corresponding complex-valued measure $\lambda^{(2)}[g_1,g_2]$ on $Y^2$ is given by 
\begin{equation}\label{vgdyr8t8tr8rd}
\lambda^{(2)}\left[g_1,g_2\right](dy_1\,dy_2)= \nu^{(2)}(dy_1\,dy_2)\big((g_2,g_1)_{\mathcal G}+(Kg_1,g_2)_{\mathcal G}+\varkappa (g_2,Kg_1)_{\mathcal G}\big).
\end{equation} 
\end{theorem}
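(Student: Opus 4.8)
The plan is to verify the three statements of Theorem~\ref{buty7e6u} by systematically reducing everything to the $\mathbf Q$-MCR algebra $\mathbf A$ and the Fock state $\boldsymbol\tau$ on it, using the definition \eqref{vydstesa5qw5w6} and Lemma~\ref{vye67gytr56}. For part (i), I would take a generic product $\Phi(f^{(m)};\sharp_1,\dots,\sharp_m)\Phi(g^{(n)};\sharp_{m+1},\dots,\sharp_{m+n})$, expand each factor via \eqref{vydstesa5qw5w6}, and use multilinearity of the operator-valued integrals \eqref{fdea4q567i8o} together with the product rule in $\mathbf A$ to obtain \eqref{tera4qy}. The $*$-relation \eqref{vftrd6e7} follows from $\bigl(A^+(f)\bigr)^*=A^-(Jf)$ (Proposition~\ref{whisk63254}) extended multiplicatively, noting $K(-\sharp,i)=K(\sharp,i)'$ and that $\mathbb S^{(n)}$ reverses the order of tensor factors. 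For the commutation relations \eqref{gdtrs5u4}--\eqref{cdtesw5w}, the point is that each is a consequence of the two-variable identities in Proposition~\ref{whisk63254} --- which is exactly the content of that proposition --- combined with the observation that all the $Q$-, $\widehat Q$-, $\widetilde Q$-symmetrizations commute through the $\mathcal G$-factors untouched, so the relation for multiple integrals follows from the two-point case by inserting the relation in positions $i,i+1$ and leaving the rest alone; here Assumption~\ref{Rosated}(iii) ($\widehat Q=Q$) is what lets \eqref{vcerw5q5q} reduce to the same form as \eqref{gdtrs5u4}.

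For part (ii), I would compute $\tau\bigl(A^+(\varphi_1\otimes g_1)\dotsm A^+(\varphi_m\otimes g_m)A^-(\varphi_{m+1}\otimes g_{m+1})\dotsm A^-(\varphi_{m+n}\otimes g_{m+n})\bigr)$ directly. Expanding each $A^\pm$ via \eqref{siisssaass} into its $\mathbf a_1$-part and $\mathbf a_2$-part, one gets a sum of $2^{m+n}$ terms, each a $\boldsymbol\tau$-moment of products of $\mathbf a^\pm$ on $\mathbf X=X_1\sqcup X_2$; then one applies Corollary~\ref{xkiasaxzs} (or rather its analogue for $\mathbf A$) to each term. The key combinatorial bookkeeping: in each surviving pairing $\{i,j\}$, the creation at $j$ and the annihilation at $i$ must live in the same copy $X_1$ or $X_2$, and by the assignment $K(+,1)=K_1,\ K(+,2)=K_2,\ K(-,1)=K_1',\ K(-,2)=K_2'$ together with $\gamma(+,2)=+,\ \gamma(-,1)=+$, the contributing configurations are exactly: an $A^+$ contributes via $\mathbf a_2^+$ (weight $K_2$) or $\mathbf a_1^-$ (weight $K_1$), and an $A^-$ via $\mathbf a_2^-$ (weight $K_2'$) or $\mathbf a_1^+$ (weight $K_1'$); for a pair to be nonzero the Fock state forces a creation–annihilation match in the same copy. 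When $j>n\ge i$ (so $\varphi_i$ under an $A^-$ paired with $\varphi_j$ under an $A^+$), the only nonvanishing option is $\mathbf a_1^+$ from the $A^+$ paired with $\mathbf a_1^-$ from the $A^-$, giving the weight $(K_1 g_i, K_1 g_j)_{\mathcal G}=(Kg_i,g_j)_{\mathcal G}$, which reproduces \eqref{convoadsxc} via the factorization $\lambda^{(2)}$-measure $=\nu^{(2)}$ on the $Y$-variables times the $\mathcal G$-inner-product on the $z$-variables; the $\mathbf Q$-weights $Q(\xi;\dots)$ restricted to all points in $X_1$ collapse to the plain $Q(\xi;y_1,\dots,y_{2n})$ because $\mathbf Q\restriction_{X_1^2}=Q$. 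Terms with $i,j$ on opposite sides of $n$ in the wrong copies vanish, and the $\delta_{m,n}$ appears because an unmatched $A^+$ or $A^-$ in the wrong copy kills the Fock expectation; this yields exactly \eqref{bugdrt6ew5a5u}. I expect \emph{this} --- keeping track of which copy each pairing sits in and checking that only the all-$X_1$ configuration survives with weight $(Kg_i,g_j)_{\mathcal G}$ --- to be the main obstacle, essentially because one must be careful that no cross-copy $\mathbf Q(y_i,y_j)=\widetilde Q(y_j,y_i)$ factors sneak into the gauge-invariant $n$-point functions.

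For part (iii), the strongly-quasi-free property is \eqref{dseq4reure}, which by Corollary~\ref{wsdfrcgvddf} (applied to $\boldsymbol\tau$ on $\mathbf A$) holds automatically for $\boldsymbol\tau$ with the function $\mathbf Q$ in place of $Q$; the question is whether the resulting formula, expressed back in terms of $\tau$ and the operators $B(\varphi\otimes g)=A^+(\varphi\otimes g)+A^-(\varphi\otimes Jg)$, has the $\mathbf Q$-operators replaced by the plain $Q$-operators. Writing $B(\varphi\otimes g)$ via \eqref{siisssaass} as a combination of the four terms $\mathbf a_2^+,\mathbf a_1^-,\mathbf a_2^-,\mathbf a_1^+$ (with weights $K_2g,\ K_1g,\ K_2'Jg,\ K_1'Jg$), one sees that the $B$-moments mix points from both $X_1$ and $X_2$, so the relevant $\mathbf Q$-factors include cross-copy ones $\mathbf Q(y_i,y_j)=\widetilde Q(y_j,y_i)$; these reduce to $Q(\xi;y_1,\dots,y_n)$ for \emph{all} assignments precisely when $\widetilde Q(y_j,y_i)=Q(y_i,y_j)$, i.e.\ \eqref{vutde7i} --- and this is both necessary (take a two-point moment with points in different copies) and sufficient. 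Under \eqref{vutde7i}, I would then compute $\tau\bigl(B(\varphi_1\otimes g_1)B(\varphi_2\otimes g_2)\bigr)$ by the same four-term expansion: the $\mathbf a_1^-\,\mathbf a_1^+$ and $\mathbf a_2^+\,\mathbf a_2^-$ orderings give the $\mathcal G$-contributions $(K_1g_2,K_1Jg_1)_{\mathcal G}=(Kg_1,g_2)_{\mathcal G}$-type and, using the relation \eqref{olhkgmknenj}/Assumption~\ref{Rosated}(iv) to handle the anti-Wick ordering, the $\varkappa(g_2,Kg_1)_{\mathcal G}$ term, plus the $(g_2,g_1)_{\mathcal G}$ term from the canonical part; together with the $\nu^{(2)}$ factor on the $Y$-variables this is \eqref{vgdyr8t8tr8rd}, which then propagates to all $n$ via \eqref{dseq4reure}. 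The one genuinely delicate check in (iii) is the necessity direction --- showing that if \eqref{vutde7i} fails, the explicit two-copy construction produces a $B$-four-point function violating \eqref{dseq4reure} --- which one does by exhibiting test functions $\varphi_i\otimes g_i$ supported so that the cross-copy $\widetilde Q(y_2,y_1)$ is not equal to $Q(y_1,y_2)$ on the support, making the coefficient of that pairing provably different from what \eqref{dseq4reure} predicts.
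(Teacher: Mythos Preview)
Your overall strategy matches the paper's --- reduce (i) to Lemma~\ref{vye67gytr56} via \eqref{vydstesa5qw5w6}, reduce (ii) to Corollary~\ref{xkiasaxzs} after arguing that only $X_1$-contributions survive, and reduce (iii) to the observation that \eqref{vutde7i} is equivalent to $\mathbf Q\equiv Q$, whereupon Corollary~\ref{wsdfrcgvddf} applies on $\mathbf X$. Two points need fixing.

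In (ii) you have the roles of $\mathbf a_1^\pm$ reversed: from \eqref{siisssaass}, $A^+$ contributes $\mathbf a_1^-$ (not $\mathbf a_1^+$) and $A^-$ contributes $\mathbf a_1^+$. More importantly, the paper bypasses your $2^{m+n}$-term bookkeeping entirely. In the product $A^+(f_1)\dotsm A^+(f_m)A^-(f_{m+1})\dotsm A^-(f_{m+n})$ acting on $\Omega$, any $\mathbf a_2^-$ among the last $n$ factors tries to annihilate a nonexistent $X_2$-particle, and any $\mathbf a_2^+$ among the first $m$ factors creates an $X_2$-particle that no remaining operator can remove; hence the Fock expectation collapses immediately to the single term
\[
\boldsymbol\tau\bigl(\mathbf a_1^-(\varphi_1\otimes K_1g_1)\dotsm \mathbf a_1^-(\varphi_m\otimes K_1g_m)\,\mathbf a_1^+(\varphi_{m+1}\otimes K_1'g_{m+1})\dotsm \mathbf a_1^+(\varphi_{m+n}\otimes K_1'g_{m+n})\bigr),
\]
which is literally a $Q$-Fock moment on $X_1$ (where $\mathbf Q=Q$), so Corollary~\ref{xkiasaxzs} applies verbatim without any cross-copy analysis.

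In (iii) your two-point computation is off: $\mathbf a_2^+\mathbf a_2^-\Omega=0$, so there is no anti-Wick contribution and Assumption~\ref{Rosated}(iv) plays no role there; the two surviving pairings are $\mathbf a_1^-\mathbf a_1^+$ and $\mathbf a_2^-\mathbf a_2^+$, giving $(K_1g_1,K_1g_2)_{\mathcal G}+(K_2g_2,K_2g_1)_{\mathcal G}$, from which \eqref{vgdyr8t8tr8rd} follows using $K_2^2=1+\varkappa K$. Finally, your proposed necessity argument via a two-point moment cannot work: for $n=2$ the right-hand side of \eqref{dseq4reure} contains no $Q$-factor at all, so the two-point function is insensitive to whether \eqref{vutde7i} holds. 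One must go to $n=4$, where cross-copy factors $\mathbf Q=\widetilde Q$ first enter; this is exactly what the paper does.
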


\begin{proof} (i) By using \eqref{vydstesa5qw5w6}, one shows that $\mathbb A$ is a $Q$-MCR algebra similarly to the proof of  Proposition~\ref{whisk63254}.

(ii) For any $(f_1,f_2)\in\mathcal H\oplus\mathcal H$, we denoted by 
$\mathbf a^\sharp(f_1,f_2)=\mathbf a_1(f_1)+\mathbf a_2(f_2)$ ($\sharp\in\{+,-\}$) the corresponding standard creation and annihilation operators in the $\mathbf Q$-symmetric Fock space $\mathcal F^\mathbf Q(\mathcal H\oplus\mathcal H)$.  Let $f=\varphi\otimes g$ with $\varphi\in C_0(Y; V)$ and $g \in \mathcal G$. By \eqref{siisssaass},
\begin{align*}
A^+(f)&=\mathbf a_2^+(\varphi\otimes (K_2g))+\mathbf a_1^-(\varphi\otimes (K_1g)),\\
A^-(f)&=\mathbf a_2^-(\varphi\otimes (K_2'g))+\mathbf a_1^+(\varphi\otimes (K_1'g)).
\end{align*}

For $f_1=\varphi_1\otimes g_1$ and $f_2=\varphi_2\otimes g_2$, we have
\begin{align}
\tau\big(A^+(f_1)A^-(f_2)\big)&=\tau\big(\mathbf a_1^-(\varphi_1\otimes (K_1g_1))\,\mathbf a_1^+(\varphi_2\otimes (K_1'g_2))\big)\notag\\
& =\int_Y\langle \varphi_1(y),\varphi_2(y)\rangle_V\,dy\,(K_1g_1,K_1Jg_2)_\mathcal G\notag\\
&=\int_{Y^2}\langle \varphi_1(y_1),\varphi_2(y_2)\rangle_V\, \rho^{(2)}[g_1,g_2](dy_1\,dy_2),\label{cgxts5wu5}
\end{align}
where $\rho^{(2)}[g_1,g_2]$ is given by \eqref{convoadsxc}. Next, for any $f_i=\varphi_i\otimes g_i$ ($i=1,\dots,m+n$), we have
\begin{align}
&\tau\big(A^+(f_1)\dotsm A^+(f_m)A^-(f_{m+1})\dotsm A^-(f_{m+n})\big)=\tau\big(\mathbf a_1^-(\varphi_1\otimes (K_1g_1))\notag\\
 &\qquad\dotsm \mathbf a_1^-(\varphi_m\otimes (K_1g_m))\,
\mathbf a_1^+(\varphi_{m+1}\otimes (K_1'g_{m+1}))\dotsm \mathbf a_1^+(\varphi_{m+n}\otimes (K_1'g_{m+n})\big)\notag\\
&\quad =\tau\big(a^-(\varphi_1\otimes (K_1g_1))\dotsm a^-(\varphi_m\otimes (K_1g_m))\,
a^+(\varphi_{m+1}\otimes (K_1'g_{m+1}))\notag\\
&\qquad\dotsm a^+(\varphi_{m+n}\otimes (K_1'g_{m+n})\big).\label{cfstaa}
\end{align}
In the last equality of \eqref{cfstaa},  $a^+(\cdot)$ and $a^-(\cdot)$ denote the standard creation and annihilation operators in the $Q$-symmetric Fock space $\mathcal F^Q(\mathcal H)$, and $\tau$ denotes the corresponding Fock state. Now the statement (ii) of the theorem  follows from \eqref{cgxts5wu5}, \eqref{cfstaa}, and Corollary~\ref{xkiasaxzs}.

(iii) For $f=\varphi\otimes g$ with $\varphi\in C_0(Y; V_\R)$ and $g \in \mathcal G$, the operator $B(f)=A^+(f)+A^-(Jf)$ is given by 
$$B(f)=\mathbf a^+(\varphi\otimes(JK_1g),\varphi\otimes(K_2g))+\mathbf a^-(\varphi\otimes(K_1g),\varphi\otimes (JK_2g)).$$
Therefore, for such $f_1=\varphi_1\otimes g_1$ and $f_2=\varphi_1\otimes g_2$,
\begin{align}
&\tau\big(B(f_1)B(f_2)\big)\notag\\
&\quad=\int_Y\langle \varphi_1(y),\varphi_2(y)\rangle_V\,dy\,\bigg(\int_Z (K_1g_1)(z)(JK_1g_2)(z)dz+\int_Z (JK_2g_1)(z)(K_2g_2)(z)dz\bigg)\notag\\
&\quad=\int_Y\langle \varphi_1(y),\varphi_2(y)\rangle_V,dy\,\big((K_1g_1,K_1g_2)_\mathcal G+(K_2g_2,K_2g_1)_\mathcal H\big)\notag\\
&\quad
=\int_{Y^2}\langle \varphi_1(y_1),\varphi_2(y_2)\rangle_V\, \lambda^{(2)}[g_1,g_2](dy_1\,dy_2),\label{vrts6uews}
\end{align}
where $ \lambda^{(2)}[g_1,g_2]$ is given by \eqref{vgdyr8t8tr8rd}.

Assume that condition \eqref{vutde7i} is not satisfied.  Then, by a straightforward calculation of $\tau\big(B(f_1)B(f_2)B(f_3)B(f_4)\big)$, we see that formula  \eqref{dseq4reure} fails for the state $\tau$ when $n=4$. Hence, $\tau$ is not strongly gauge-invariant. 

On the other hand, if condition \eqref{vutde7i} is satisfied, then for all $(y_1,y_2)\in\mathbf Y^2$ we have $\mathbf Q(y_1,y_2)=Q(y_1,y_2)$, compare with Assumption~\ref{Rosated} (v). Then that the statement of the theorem about $\tau$  being strongly quasi-free  follows from \eqref{vrts6uews} and a straightforward analog of Corollary~\ref{wsdfrcgvddf} when $X$ is replaced by $\mathbf X$.
\end{proof}

\section{Examples}\label{vcrte6u}

We will now consider several  classes of examples of the operator-valued function $Q$ to which our construction of quasi-free states is applicable. 

\subsection{Abelian anyons}

Let $V=\mathbb C$, which implies that $Q:Y^2\to\mathbb C$ is a continuous complex-valued function satisfying $|Q(y_1,y_2)|=1$ and $\overline{Q(y_1,y_2)}=Q(y_2,y_1)$. Assumption~\ref{Rosated} is now trivially satisfied, with $\varkappa =Q(y,y)\in\{-1,1\}$ and   $\widetilde Q(y_1,y_2)=Q(y_1,y_2)$. Thus, for each appropriate choice of a bounded linear operator $K$ in $\mathcal G$, we obtain the corresponding gauge-invariant quasi-free state $\tau$ on the $*$-algebra $\mathbb A$ of the $Q$-anyon commutation relations. This is essentially the construction from  \cite{anyons}. Note that, in the latter paper, the function $Q$ was not assumed to be continuous, which led to an arbitrary choice of the value of $\varkappa =Q(y,y)\in\R$, and to a different definition of the counterpart of the space $\mathfrak F^{(n)}$. 

Unless $Q(y_1,y_2)$ is not identically equal to 1 (which is the case of the CCR) or $-1$ (which is the case of the CAR), the function $Q$ does not satisfy condition \eqref{vutde7i}, hence the state $\tau$  is not strongly quasi-free.

\subsection{Lifting of  anyon commutation relations to multicomponent commutation relations}

Let $V=\mathbb C^r$ and, for each $i,j\in\{1,\dots,r\}$, we fix  a continuous complex-valued function $q(y_1,y_2,i,j)$ on $Y^2$ such that $|q(y_1,y_2,i,j)|=1$ and 
$$\overline{q(y_1,y_2,i,j)}=q(y_2,y_1,j,i).$$ 
We also assume that there exists a $\varkappa\in\{-1,1\}$ such that 
$q(y,y,i,i)=\varkappa$ for all $y\in Y$ and $i\in\{1,\dots,r\}$. 

For $(y_1,y_2)\in Y^2$, we define $Q(y_1,y_2)\in \mathcal L\left(V^{\otimes 2}\right)$ by 
\begin{equation}\label{cxrea4yqy}
Q(y_1,y_2)\,e_i\otimes e_j:=q(y_1,y_2,i,j)\,e_j\otimes e_i,\quad i,j\in\{1,\dots,r\}.\end{equation}
It is straightforward to see that $Q(y_1,y_2)$ is unitary, $Q^*(y_1,y_2)=Q(y_2,y_1)$, and $Q(y_1,y_2)$ satisfies the functional Yang--Baxter equation. Furthermore, $Q(y_1,y_2)$ satisfies Assumption~\ref{Rosated}, with 
\begin{equation}
\widetilde Q(y_1,y_2)e_i\otimes e_j=q(y_1,y_2,j,i)e_j\otimes e_i.\label{vcrte6ue}
\end{equation}

In accordance with formula  \eqref{ydsydq}, for the corresponding operator-valued distributions $A_i^+(x)$, $A_i^-(x)$ , as in \eqref{vftye6u} and \eqref{vctesay5ra5},  the following commutation relations hold:
\begin{align*}
A_i^+(x_1)A_j^+(x_2)&=q(y_2, y_1,i,j)A_j^+(x_2)A_i^+(x_1),\\
A_i^-(x_1)A_j^-(x_2)&=q(y_2,y_1,i,j)A_j^-(x_2)A_i^-(x_1),\\
A_i^-(x_1)A_j^+(x_2)&=\delta_{ij}\, \delta(y_1-y_2) +q(y_1,y_2,j,i)A_j^+(x_2)A_i^-(x_1)
\end{align*} 
for $i,j\in \{1,\dots,r\}$.

\begin{remark}\label{vcyrtds6u}
We may think of $q(y_1,y_2,i,j)$ as a continuous complex-valued function $q$ on $\big(Y\times \{1,\dots,r\}\big)^2$ which takes on values of modulus 1 and satisfies, for all $v_1,v_2\in Y\times \{1,\dots,r\}$, $\overline{q(v_1,v_2)}=q(v_2,v_1)$. Hence, the multicomponent system considered in this subsection can be thought of as a lifting  of the anyon commutation relations, for which the underlying space is $X\times \{1,\dots,r\}$ and the exchange function $Q$ is given by $q(y_1,y_2,j,i)$.  
\end{remark}

By Theorem~\ref{buty7e6u} (iii) and  \eqref{vcrte6ue}, a gauge-invariant quasi-free state $\tau$ is strongly quasi-free only in the trivial case where, for all $y_1,y_2\in Y$ and $i,j\in\{1,\dots,r\}$, we have $Q(y_1,y_2,i,j)=q_{ij}$ with  $q_{ij}\in\{-1,1\}$, and $q_{11}=\dots=q_{rr}$.

\begin{example}[Antiparticle]\label{vcte5y}
Let us consider the following example. Let $r=3$, and let a function $q(y_1,y_2,i,j)$ on $(Y\times\{1,2,3\})^2$ be defined as follows:  $q(y_1,y_2,1,1)=q(y_1,y_2,2,2)=q(y_1,y_2)$, $q(y_1,y_2,1,2)=q(y_1,y_2,2,1)=q(y_2,y_1)$ and $q(y_1,y_2,i,j)=1$ if at least one of the numbers $i$ and $j$ is 3. Here $q:Y^2\to\mathbb C$ is a continuous function, $|q(y_1,y_2)|=1$, and $q(y_2,y_1)=\overline{q(y_1,y_2)}$. Since $A_3^+(x_1)$ commutes with all $A_i(x_2)$ ($i=1,2,3$), the particle of type 3 is a boson. On the other hand, both particles of type 1 and 2 are  (abelian) anyons: $A_i^+(x_1)A_i^+(x_2)=q(y_2,y_1)A_i^+(x_2)A_i^+(x_1)$ ($i=1,2$). 

Recall the Introduction for an intuitive interpretation of fusion of quasiparticles. 
The fusion of particles of type 1 and 2 can be understood as a (heuristic) operator-valued distribution $B(x):=A^+_1(x)A^+_2(x)$. 
As easily seen, $B(x_1)$ also commutes with all $A_i(x_2)$ ($i=1,2,3$), and hence also with $B(x_2)$. Hence, one can identify $B(x)$ with $A_3(x)$. Thus the fusion of the particles of type 1 and $2$ gives a boson. This means that particle 2 is the antiparticle of particle 1, compare with \cite[Section~4.1.1]{Pachos}.
\end{example}

\begin{remark} It follows from Example~\ref{vcte5y} that, for the construction of a gauge-invariant quasi-free state on the algebra of the anyon commutation relations, one does doubling of the underlying space by adding to the particle its antiparticle. Indeed, in that case, the doubling of the space is equivalent to an addition of a particle of type 2, and the resulting commutation relations are $A_i^+(x_1)A_i^+(x_2)=Q(y_2,y_1)A_i^+(x_2)A_i^+(x_1)$ for $i=1,2$, and $A_i^+(x_1)A_j^+(x_2)=Q(y_1,y_2)A_j^+(x_2)A_i^+(x_1)$ for $i\ne j$.
\end{remark}

\subsection{Systems with particles of opposite type}
Let $V=\mathbb C^r$  ($r\ge2$), and just as above, let $\{e_1,\dots,e_r\}$ be the standard orthonormal basis in $V$. Let $\theta$ be a permutation from $S_r$ such that $\theta^2=\operatorname{id}$, i.e., all cycles in $\theta$ are of length one or two. If $\theta(i)=j$ and $i\ne j$, we may think of $i$ and $j$ as opposite types of particles.

\begin{theorem}\label{rtsw5uw5ude} For each $i,j\in\{1,\dots,r\}$, let $q(\cdot,\cdot,i,j)$ be 
 a continuous complex-valued function  on $Y^2$
 that satisfies the following assumptions, for all $y_1,y_2\in Y$ and $i,j\in\{1,\dots,r\}$,
 \begin{gather}
|q(y_1,y_2,i,j)|=1,\label{vctrsw5y32w}\\
\overline{q(y_1,y_2,i,j)}=q(y_2,y_1,i,j),\label{bhufg8ur8}\\
q(y_1,y_2,i,j)=q(y_1,y_2,j,i),\label{yft6u3wq7}\\
q(y_1,y_2,\theta(i),\theta(j))=q(y_1,y_2,i,j).\label{vuy6eu43wu}
\end{gather}
We also assume that there exists a $\varkappa\in\{-1,1\}$ such that 
\begin{equation}\label{gycdyrdsu}
q(y,y,i,i)=\varkappa\quad\text{for all $y\in Y$ and $i\in\{1,\dots,r\}$}.
\end{equation}
 For $(y_1,y_2)\in Y^2$, we define $Q(y_1,y_2)\in \mathcal L\left(V^{\otimes 2}\right)$ by 
\begin{equation}\label{ctrsw5y3w7q}
Q(y_1,y_2)\,e_i\otimes e_j:=q(y_1,y_2,i,j)\,e_{\theta(j)}\otimes e_{\theta(i)},\quad i,j\in\{1,\dots,r\}.\end{equation}
Then the following statements hold.

(i) $Q(y_1,y_2)$ is unitary, $Q^*(y_1,y_2)=Q(y_2,y_1)$, and the operator-valued function $Q(y_1,y_2)$ satisfies the functional Yang--Baxter equation. 

(ii) $Q(y_1,y_2)$ satisfies Assumption~\ref{Rosated}, with 
\begin{equation}\label{cdr5ws5qa2}
\widetilde Q(y_1,y_2)\,e_i\otimes e_j=q(y_1,y_2,\theta(i),j)\,e_{\theta(j)}\otimes e_{\theta(i)}=q(y_1,y_2,i,\theta(j))\,e_{\theta(j)}\otimes e_{\theta(i)}.
\end{equation}
The corresponding $Q$-MCR are given by 
\begin{align}
A_ i^+(x_1)A_ j^+(x_2)&= q(y_2,y_1, i,j)A_{\theta(j)}^+(x_2)A_{\theta(i)}^+(x_1),\notag\\
A_ i^-(x_1)A_  j^-(x_2)&=  q(y_2,y_1,  i,  j)A_{\theta(  j)}^-(x_2)A_{\theta(  i)}^-(x_1),\notag\\
A_  i^-(x_1)A_  j^+(x_2)&=\delta_{  i,\,  j}\,\delta(x_1-x_2)+  q(y_1,y_2,\theta(  i),  j)A_{\theta(  j)}^+(x_2)A_{\theta(  i)}^-(x_1).\label{gyrd6rde64ufr7rd}
\end{align}

(iii) For each appropriate choice of the operator $K$,  the corresponding gauge-invariant quasi-free state $\tau$ is strongly quasi-free if and only if 
\begin{equation}\label{ceay4q}
q(y_1,y_2,i,j)=q(y_2,y_1,\theta(i),j)\quad\text{for all }i,j\in\{1,\dots,r\}. \end{equation}
In the latter case, the commutation relation \eqref{gyrd6rde64ufr7rd}
becomes
$$A_  i^-(x_1)A_  j^+(x_2)=\delta_{  i,\,  j}\,\delta(x_1-x_2)+  q(y_2,y_1, i, j)A_{\theta(  j)}^+(x_2)A_{\theta(  i)}^-(x_1).$$
\end{theorem}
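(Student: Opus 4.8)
The plan is to prove all three parts by direct computation on the standard bases $\{e_i\otimes e_j\}$ of $V^{\otimes 2}$ and $\{e_i\otimes e_j\otimes e_k\}$ of $V^{\otimes 3}$, using the hypotheses \eqref{vctrsw5y32w}--\eqref{vuy6eu43wu} and $\theta^2=\operatorname{id}$ throughout. The single elementary consequence of \eqref{yft6u3wq7}--\eqref{vuy6eu43wu} that will be invoked over and over is
\[
q(y_1,y_2,\theta(a),b)=q(y_1,y_2,b,\theta(a))=q(y_1,y_2,\theta(b),a)=q(y_1,y_2,a,\theta(b)),\qquad a,b\in\{1,\dots,r\}.
\]

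For part (i): since $(i,j)\mapsto(\theta(j),\theta(i))$ is a bijection of $\{1,\dots,r\}^2$ and $|q|=1$, the operator $Q(y_1,y_2)$ carries the orthonormal basis $\{e_i\otimes e_j\}$ onto an orthonormal basis up to unit phases, hence is unitary. For $Q^*(y_1,y_2)=Q(y_2,y_1)$ I would compare $(Q(y_1,y_2)e_i\otimes e_j,\,e_k\otimes e_l)$ with $(e_i\otimes e_j,\,Q(y_2,y_1)e_k\otimes e_l)$: both vanish unless $k=\theta(j)$ and $l=\theta(i)$, and in that case equality becomes $q(y_1,y_2,i,j)=\overline{q(y_2,y_1,\theta(j),\theta(i))}$, which holds by \eqref{bhufg8ur8} and the displayed identity. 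For the functional Yang--Baxter equation \eqref{v6w6u34} one applies both sides to $e_i\otimes e_j\otimes e_k$; because every factor has the form $e_a\otimes e_b\mapsto(\text{phase})\,e_{\theta(b)}\otimes e_{\theta(a)}$ and $\theta^2=\operatorname{id}$, each side is a scalar multiple of $e_k\otimes e_j\otimes e_i$, and the two scalars, $q(y_2,y_3,i,j)\,q(y_1,y_3,\theta(i),k)\,q(y_1,y_2,j,k)$ and $q(y_2,y_3,i,j)\,q(y_1,y_3,i,\theta(k))\,q(y_1,y_2,j,k)$, agree again by the displayed identity.

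For part (ii): substituting basis vectors into the defining relation \eqref{tydse5asw} gives $\widetilde Q(y_1,y_2)e_i\otimes e_j=q(y_1,y_2,\theta(j),i)\,e_{\theta(j)}\otimes e_{\theta(i)}$, which coincides with both expressions in \eqref{cdr5ws5qa2} by the displayed identity. The structural point is that $\widetilde Q$ has the same shape as $Q$ with $q$ replaced by $\tilde q(y_1,y_2,i,j):=q(y_1,y_2,\theta(i),j)$, and $\tilde q$ again satisfies \eqref{vctrsw5y32w}--\eqref{vuy6eu43wu}. Hence Assumption~\ref{Rosated}(i) follows from part (i) applied to $\tilde q$; (ii) holds because twisting $\tilde q$ once more returns $q$; (iii) holds because, using $Je_i=e_i$, one computes $\widehat Q(y_1,y_2)e_i\otimes e_j=\overline{q(y_2,y_1,j,i)}\,e_{\theta(j)}\otimes e_{\theta(i)}=q(y_1,y_2,i,j)\,e_{\theta(j)}\otimes e_{\theta(i)}$ by \eqref{bhufg8ur8}, \eqref{yft6u3wq7}, so $\widehat Q=Q$, and the same computation with $\tilde q$ gives $\widehat{\widetilde Q}=\widetilde Q$; (iv) holds because $\operatorname{Tr}\big(\widetilde Q(y,y)\,e_i\otimes e_j\big)=q(y,y,\theta(i),i)\,\delta_{ij}$, and by \eqref{bhufg8ur8}--\eqref{yft6u3wq7} the continuous function $y\mapsto q(y,y,\theta(i),i)$ is real-valued, hence a constant in $\{-1,1\}$ (equal to the $\varkappa$ of \eqref{gycdyrdsu} on the fixed points of $\theta$), giving \eqref{olhkgmknenj}; and (v) holds because $\mathbf Q$ equals $Q$ on the diagonal sectors and $\widetilde Q(y_2,y_1)$ on the mixed ones, so $\mathbf Q$ also has the form $e_a\otimes e_b\mapsto(\text{phase})\,e_{\theta(b)}\otimes e_{\theta(a)}$, and running the computation of part (i) for each distribution of $(y_1,y_2,y_3)$ among $Y_1,Y_2$ collapses, as before, to the displayed identity. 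The explicit relations \eqref{gyrd6rde64ufr7rd} are then obtained by reading off the $e_i\otimes e_j$ component of the shorthand form \eqref{ydsydq}, using the formulas just derived for $Q$, $\widehat Q=Q$ and $\widetilde Q$.

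For part (iii): by Theorem~\ref{buty7e6u}(iii), $\tau$ is strongly quasi-free if and only if $\widetilde Q(y_1,y_2)=Q(y_2,y_1)$; by \eqref{cdr5ws5qa2} and \eqref{ctrsw5y3w7q} this means $q(y_1,y_2,\theta(i),j)=q(y_2,y_1,i,j)$ for all $i,j$, i.e., after replacing $i$ by $\theta(i)$ and using $\theta^2=\operatorname{id}$, exactly condition \eqref{ceay4q}; and under \eqref{ceay4q} one has $q(y_1,y_2,\theta(i),j)=q(y_2,y_1,i,j)$, so the third relation in \eqref{gyrd6rde64ufr7rd} takes the claimed simplified form. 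I expect the main obstacle to be the verification of Assumption~\ref{Rosated}(v): although every case reduces to the same elementary identity as in part (i), the Yang--Baxter check for $\mathbf Q$ must be carried out over the several configurations of the three points among $Y_1$ and $Y_2$, with the $\theta$-twists entering asymmetrically in the mixed sectors, and the index bookkeeping there demands care. A subtler side point is that \eqref{olhkgmknenj} needs $q(y,y,\theta(i),i)$ to be independent of $i$ (and not merely constant in $y$), which should be checked in addition to \eqref{gycdyrdsu}; it is automatic, for instance, when $\theta$ has at most one non-trivial cycle, which covers the examples of interest.
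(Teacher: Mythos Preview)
Your proposal is correct and follows the same route as the paper: direct computation on the basis $\{e_i\otimes e_j\}$, repeatedly invoking \eqref{yft6u3wq7}--\eqref{vuy6eu43wu} via the ``displayed identity''. Two remarks are worth making.

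For Assumption~\ref{Rosated}(v) the paper does not go through the eight configurations of $(y_1,y_2,y_3)$ in $Y_1\sqcup Y_2$. Instead it observes once and for all that $\mathbf Q$ has the same structural form as $Q$, namely $\mathbf Q(y_1,y_2)\,e_i\otimes e_j=\mathfrak q(y_1,y_2,i,j)\,e_{\theta(j)}\otimes e_{\theta(i)}$ for a function $\mathfrak q$ on $\mathbf Y^2\times\{1,\dots,r\}^2$ that again satisfies \eqref{vctrsw5y32w}--\eqref{vuy6eu43wu}; then the functional Yang--Baxter equation for $\mathbf Q$ is literally the computation of part~(i) applied to $\mathfrak q$. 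Checking that $\mathfrak q$ inherits \eqref{vctrsw5y32w}--\eqref{vuy6eu43wu} is a short exercise with your displayed identity, and this packaging removes the ``index bookkeeping'' you anticipated as the main obstacle.

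Your caveat about Assumption~\ref{Rosated}(iv) is well taken: what is actually needed is that $q(y,y,\theta(i),i)$ be independent of $i$, and \eqref{gycdyrdsu} only pins this down on the fixed points of $\theta$. The paper does not address this explicitly either. Your claim that it is ``automatic when $\theta$ has at most one non-trivial cycle'' is slightly too optimistic: if $\theta$ has one $2$-cycle $(a,b)$ \emph{and} a fixed point, one still needs $q(y,y,a,b)=\varkappa$. What is true is that the value $q(y,y,\theta(i),i)$ depends only on the cycle of $\theta$ containing $i$, so (iv) holds iff all these per-cycle constants coincide; in the paper's concrete examples this is arranged (in \S\ref{xrea5yw357} there are no fixed points, and in Proposition~\ref{vy6e683} the extra hypothesis $q_1(y,y)=q_2(y,y)=\varkappa$ is imposed precisely for this reason).
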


\begin{proof} (i) The statement easily follows from \eqref{vctrsw5y32w}--\eqref{vuy6eu43wu}.

(ii) Assumption~\ref{Rosated}~(i), (ii) and (iii) follow from \eqref{vctrsw5y32w}--\eqref{vuy6eu43wu} by straightforward calculations. Formula \eqref{gycdyrdsu} implies Assumption~\ref{Rosated} (iv). To show that  Assumption~\ref{Rosated}~(v) is satisfied, we note that the operator $\mathbf Q(y_1,y_2)$ acts as follows:
$$\mathbf Q(y_1,y_2)e_i\otimes e_j=\mathfrak q(y_1,y_2,i,j) e_{\theta(j)}\otimes e_{\theta(i)},$$
where 
\begin{equation}
\mathfrak q(y_1,y_2,i,j)=\begin{cases}
q(y_1,y_2,i,j),&\text{if either $y_1,y_2\in Y_1$ or $y_1,y_2\in Y_2$},\\
q(y_1,y_2,\theta(i),j)&\text{if either $y_1\in Y_1$, $y_2\in Y_2$ or $y_1\in Y_2$, $y_2\in Y_1$.}
\end{cases}
\label{vctsw5yw}
\end{equation}
 Formulas \eqref{vctrsw5y32w}--\eqref{vuy6eu43wu} and \eqref{vctsw5yw} imply that the functions $\mathfrak q(y_1,y_2,i,j)$  satisfy the formulas \eqref{vctrsw5y32w}--\eqref{vuy6eu43wu} in which $q(y_1,y_2,i,j)$ is replaced by $\mathfrak q(y_1,y_2,i,j)$, and $y_1,y_2\in\mathbf Y=Y_1\sqcup Y_2$. Now, the functional Yang--Baxter equation for $\mathbf Q(y_1,y_2)$ can be checked completely analogously to the proof of the  functional Yang--Baxter equation for $Q(y_1,y_2)$.
 
 The corresponding commutation relations hold in accordance with formula \eqref{ydsydq}.
 
 (iii) The statement follows immediately from Theorem~\ref{buty7e6u} (iii).  
\end{proof}

We will now consider some special cases of the operator-valued function $Q(y_1,y_2)$ as in Theorem~\ref{rtsw5uw5ude}.

\subsubsection{Two-component systems}\label{xrea5yw357}

Let $V=\mathbb C^2$. Let $\theta\in S_2$ be given by $\theta(1)=2$, $\theta(2)=1$. We fix two continuous functions $q_i: Y^2\to \mathbb C$ ($i=1,2$) such that, for all $(y_1,y_2)\in Y^2$, $\vert q_i(y_1,y_2)\vert =1$ and $\overline{q_i(y_1,y_2)}=q(y_2,y_1)$. We define $Q(y_1,y_2)\in \mathcal L(V^{\otimes 2})$ by 
\begin{align}
Q(y_1,y_2)e_i\otimes e_i&=q_1(y_1,y_2)e_{\theta(i)} \otimes e_{\theta(i)},\notag\\
Q(y_1,y_2)e_i\otimes e_{j}&=q_2(y_1,y_2)e_i \otimes e_{j},\quad i\ne j.\label{painkillerssss}
\end{align}
Thus, formula \eqref{ctrsw5y3w7q} holds with $q(y_1,y_2,i,i)=q_1(y_1,y_2)$  and $q(y_1,y_2,i,j)=q_2(y_1,y_2)$ if $i\not=j$. The assumptions~\eqref{vctrsw5y32w}--\eqref{gycdyrdsu} are obviously satisfied. 
The corresponding $Q$-MCR are then given by \eqref{utfdT}. 

Each corresponding gauge-invariant quasi-free state $\tau$ is strongly quasi-free if and only if $q_1(y_1,y_2)=q_2(y_2,y_1)=:q(y_1,y_2)$ for all $(y_1,y_2)\in Y^2$.
In the latter case, the $Q$-MCR are given by 
\begin{align}
A_i^+(x_1)A_i^+(x_2)&=q(y_2,y_1)A_{\theta(i)}^+(x_2)A_{\theta(i)}^+(x_1),\notag\\
 A_i^+(x_1)A_{j}^+(x_2)&=q(y_1,y_2)A_i^+(x_2)A_{j}^+(x_1),\quad i\ne j,\notag\\
A_i^-(x_1)A_i^-(x_2)&=q(y_2,y_1)A_{\theta(i)}^-(x_2)A_{\theta(i)}^-(x_1),\notag\\
 A_i^-(x_1)A_{j}^-(x_2)&=q(y_1,y_2)A_i^-(x_2)A_{j}(x_1),\quad i\ne j,\notag\\
A_i^-(x_1)A_i^+(x_2)&=\delta(x_1-x_2)+q(y_2,y_1)A_{\theta(i)}^+(x_2)A_{\theta(i)}^-(x_1),\notag\\
 A_i^-(x_1)A_{j}^+(x_2)&=q(y_1,y_2)A_i^+(x_2)A_{j}^-(x_1),\quad i\ne j,\notag
\end{align}
for $i,j\in\{1,2\}$.

\subsubsection{Three-component systems}
We will now construct a three-component system by adding an abelian anyon to the two-component from \S~\ref{xrea5yw357}. 

Let $V=\mathbb C^3$, and let $\theta\in S_3$ be given by $\theta(1)=2$, $\theta(2)=1$, $\theta(3)=3$. We fix four continuous functions $q_i: Y^2\to \mathbb C$ ($i=1,2,3,4$) such that, for all $(y_1,y_2)\in Y^2$, $\vert q_i(y_1,y_2)\vert =1$ and $\overline{q_i(y_1,y_2)}=q(y_2,y_1)$. We define $Q(y_1,y_2)\in \mathcal L(V^{\otimes 2})$ by \eqref{painkillerssss} for $i,j\in\{1,2\}$, and
\begin{align}
Q(y_1,y_2)e_3\otimes e_3&=q_3(y_1,y_2)e_{3} \otimes e_{3},\notag\\
Q(y_1,y_2)e_i\otimes e_3&=q_4(y_1,y_2)e_{3} \otimes e_{\theta(i)},\notag\\
Q(y_1,y_2)e_3\otimes e_i&=q_4(y_1,y_2)e_{\theta(i)} \otimes e_{3},\quad i=1,2.\notag
\end{align}
We also assume that $q_{1}(y,y)=q_3(y,y)=\varkappa$. The assumptions~\eqref{vctrsw5y32w}--\eqref{gycdyrdsu} are obviously satisfied.
The corresponding $Q$-MCR are  given by \eqref{utfdT} for $i,j\in\{1,2\}$. The commutation relations for the operators $A^+_3(x)$, $A^-_3(x)$ are the usual abelian anyon commutation relations governed by the function $q_3$. Finally, for $i\in\{1,2\}$,
\begin{align*}
A_i^+(x_1)A_3^+(x_2)&=q_4(y_2,y_1)A_3^+(x_2)A_{\theta(i)}^+(x_1),\\
A_i^-(x_1)A_3^-(x_2)&=q_4(y_2,y_1)A_3^-(x_2)A_{\theta(i)}^-(x_1),\\
A_i^-(x_1)A_3^+(x_2)&=q_4(y_1,y_2)A_3^+(x_2)A_{\theta(i)}^-(x_1),\\
A_3^-(x_1)A_i^+(x_2)&=q_4(y_1,y_2)A_{\theta(i)}^+(x_2)A_{3}^-(x_1).
\end{align*}

Each corresponding gauge-invariant quasi-free state $\tau$ is strongly quasi-free if and only if $q_1(y_1,y_2)=q_2(y_2,y_1)=:q(y_1,y_2)$ for all $(y_1,y_2)\in Y^2$, and both functions $q_3$ and $q_4$ are real-valued. In particular, this implies that the particles of type 3 must be either bosons or fermions.

\subsection{Fusion of quasiparticles}

Let $k\ge3$ be odd. We will now discuss fusion of $k$ quasiparticles described by the commutation relations~\eqref{utfdT}.

Recall that, for the quasiparticles discussed in  \S~\ref{xrea5yw357},
\begin{equation}\label{byrs6u53u}
A^+_{i}(x_1)A^+_{j}(x_2)=q(y_2,y_1,i,j)A^+_{\theta(j)}(x_2)A^+_{\theta(i)}(x_1),\quad i,j\in\{1,2\},
\end{equation}
where $\theta(1)=2$, $\theta(2)=1$ and 
\begin{equation}\label{vgydy6r75xx}
q(y_1,y_2,i,j)=
\begin{cases}
q_1(y_1,y_2),&\text{if }i=j,\\
q_2(y_1,y_2),&\text{if }i\not=j.
\end{cases}
\end{equation}

We start with a heuristic calculation of the corresponding exchange function $Q(y_1,y_2)$. 
To describe  $k$ quasiparticles at point $x$, we  define, for 
 $\mathbf i=(i_1,i_2,\dots,i_k)\in\{1,2\}^k$,
\begin{equation}\label{ve6e4wsa357}
A^+_{\mathbf i}(x):=A^+_{i_1}(x)A^+_{i_2}(x) \dotsm A^+_{i_k}(x).\end{equation}
(Note that the right hand side of \eqref{ve6e4wsa357} cannot be rigorously understood as an operator-valued distribution.) Since $k$ is an odd number, the commutation relation~\eqref{byrs6u53u} easily implies that, for any $x_1,x_2\in X$ and $\mathbf i,\mathbf j\in\{1,2\}^k$,
\begin{equation}\label{iyr7o4}
A^+_{\mathbf i}(x_1)A^+_{\mathbf j}(x_2)=\mathbf q(y_2,y_1,\mathbf i,\mathbf j)
A^+_{\theta(\mathbf j)}(x_2)A^+_{\theta(\mathbf i)}(x_1).
\end{equation}
Here, we denote $\theta(\mathbf i):=(\theta(i_1),\theta(i_2),\dots,\theta(i_k))$, and
\begin{equation}\label{yfxxdru6}
\mathbf q(y_1,y_2,\mathbf i,\mathbf j):=\prod_{l,m=1}^k q\big(y_1,y_2,\theta^{m-1}(i_l),\theta^{l-1}(j_m)\big).\end{equation} 

In view of the heuristic formula \eqref{iyr7o4}, we now proceed with a rigorous construction. Let $\{e_1,e_2\}$ be the standard orthonormal basis in $\mathbb C^2$, let $V=(\mathbb C^2)^{\otimes k}$, and we choose in $V$ the orthonormal basis $\left\{e_\mathbf i\mid\mathbf i\in\{1,2\}^k\right\}$, where   $e_\mathbf i:=e_{i_1}\otimes e_{i_2}\otimes\dots\otimes e_{i_k}$. We define $Q:Y^2\to\mathcal L(V^{\otimes 2})$ by
\begin{equation}\label{vcrt6u4}
Q(y_1,y_2)e_\mathbf i\otimes e_\mathbf j=\mathbf q(y_1,y_2,\mathbf i,\mathbf j)e_{\theta(\mathbf j)}\otimes e_{\theta(\mathbf i)},\end{equation}
where $\mathbf q(y_1,y_2,\mathbf i,\mathbf j)$ is defined by \eqref{yfxxdru6}. 

\begin{proposition}\label{vy6e683}
Assume $q_1(y,y)=q_2(y,y)=\varkappa\in\{-1,1\}$ for $y\in Y$. Then the functions $\mathbf q(y_1,y_2,\mathbf i,\mathbf j)$ satisfy the conditions of Theorem~\ref{rtsw5uw5ude}. For each appropriate choice of the operator $K$,  the corresponding gauge-invariant quasi-free state $\tau$ is strongly quasi-free if and only if $q_1(y_1,y_2)=q_2(y_2,y_1)=:q(y_1,y_2)$ for all $(y_1,y_2)\in Y^2$. 
\end{proposition}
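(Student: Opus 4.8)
The plan is to verify the six conditions \eqref{vctrsw5y32w}--\eqref{gycdyrdsu} of Theorem~\ref{rtsw5uw5ude} for the function $\mathbf q(y_1,y_2,\mathbf i,\mathbf j)$ defined in \eqref{yfxxdru6}, and then read off the strongly-quasi-free criterion from part (iii) of that theorem. Since $\mathbf q$ is a product of $k^2$ factors of the form $q(y_1,y_2,\cdot,\cdot)$, each of modulus $1$, condition \eqref{vctrsw5y32w} is immediate. Condition \eqref{bhufg8ur8} follows by applying $\overline{q(y_1,y_2,a,b)}=q(y_2,y_1,a,b)$ (which holds for $q$ given by \eqref{vgydy6r75xx} because $\overline{q_i(y_1,y_2)}=q_i(y_2,y_1)$) to each factor; the product structure is preserved since the index set $\{(l,m)\}$ and the arguments $\theta^{m-1}(i_l),\theta^{l-1}(j_m)$ do not depend on the order of $y_1,y_2$. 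Condition \eqref{gycdyrdsu} is the one place where the hypothesis $q_1(y,y)=q_2(y,y)=\varkappa$ is used: when $y_1=y_2=y$ and $\mathbf i=\mathbf j$, every factor equals $\varkappa$, so $\mathbf q(y,y,\mathbf i,\mathbf i)=\varkappa^{k^2}=\varkappa$ because $k^2$ is odd (as $k$ is odd). This is the reason $k$ must be odd and the reason we must collapse $q_1$ and $q_2$ to a common value at the diagonal.

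The substantive verifications are \eqref{yft6u3wq7} (symmetry $\mathbf q(y_1,y_2,\mathbf i,\mathbf j)=\mathbf q(y_1,y_2,\mathbf j,\mathbf i)$) and \eqref{vuy6eu43wu} (the $\theta$-invariance $\mathbf q(y_1,y_2,\theta(\mathbf i),\theta(\mathbf j))=\mathbf q(y_1,y_2,\mathbf i,\mathbf j)$). For \eqref{vuy6eu43wu}, replacing $\mathbf i$ by $\theta(\mathbf i)$ and $\mathbf j$ by $\theta(\mathbf j)$ turns the $(l,m)$ factor into $q(y_1,y_2,\theta^{m}(i_l),\theta^{l}(j_m))$; using $\theta^2=\operatorname{id}$ together with \eqref{yft6u3wq7} and \eqref{vuy6eu43wu} for the scalar function $q$ (both of which hold trivially for \eqref{vgydy6r75xx} since $q$ depends only on whether its two discrete arguments are equal, and $\theta$ is a fixed-point-free involution on $\{1,2\}$ so it preserves the relation "equal/not equal"), one checks each factor is unchanged. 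For \eqref{yft6u3wq7}, swapping $\mathbf i\leftrightarrow\mathbf j$ sends the $(l,m)$ factor to $q(y_1,y_2,\theta^{m-1}(j_l),\theta^{l-1}(i_m))$; after relabeling the index pair $(l,m)\mapsto(m,l)$ in the product this becomes $q(y_1,y_2,\theta^{l-1}(j_m),\theta^{m-1}(i_l))$, which agrees with the original $(l,m)$ factor by the symmetry \eqref{yft6u3wq7} of $q$. Thus all hypotheses of Theorem~\ref{rtsw5uw5ude} hold, so $Q$ defined by \eqref{vcrt6u4} satisfies Assumption~\ref{Rosated} and the construction of Section~\ref{kljiun6958} produces gauge-invariant quasi-free states $\tau$.

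For the strongly-quasi-free claim we invoke Theorem~\ref{rtsw5uw5ude}(iii): $\tau$ is strongly quasi-free iff $\mathbf q(y_1,y_2,\mathbf i,\mathbf j)=\mathbf q(y_2,y_1,\theta(\mathbf i),\mathbf j)$ for all $\mathbf i,\mathbf j$, i.e.\ iff the scalar condition \eqref{ceay4q}, $q(y_1,y_2,i,j)=q(y_2,y_1,\theta(i),j)$, holds; in terms of \eqref{vgydy6r75xx} this reads $q_1(y_1,y_2)=q_2(y_2,y_1)$ (taking $i=j$, so $\theta(i)\ne j$) together with $q_2(y_1,y_2)=q_1(y_2,y_1)$ (taking $i\ne j$, so $\theta(i)=j$), and these two are the same equation, namely $q_1(y_1,y_2)=q_2(y_2,y_1)=:q(y_1,y_2)$. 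It remains to confirm that the $\mathbf q$-level condition is genuinely equivalent to the scalar one and not merely implied by it; this follows because the $\mathbf q$-condition specialized to $\mathbf i=\mathbf j=(i,i,\dots,i)$ already forces the scalar identity (the product then factors into powers of $q_1,q_2$ whose exponents one computes directly). The main obstacle is organizing the index bookkeeping in \eqref{yfxxdru6} cleanly — in particular verifying the relabeling $(l,m)\mapsto(m,l)$ for symmetry and tracking the powers of $\theta$ modulo $2$ — but since $q$ in \eqref{vgydy6r75xx} depends on its discrete arguments only through the predicate "$i=j$", and $\theta$ on $\{1,2\}$ simply toggles this predicate an even or odd number of times, each factor's value is ultimately determined by a simple parity count, which keeps the computation routine.
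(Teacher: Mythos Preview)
Your approach is the same as the paper's --- reduce everything to the factor-by-factor verification that $\mathbf q$ inherits conditions \eqref{vctrsw5y32w}--\eqref{gycdyrdsu} from the scalar $q$ --- and your checks of those five conditions are correct.

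The gap is in the ``only if'' direction of the strongly-quasi-free claim. You assert that specializing the $\mathbf q$-level condition \eqref{ceay4q} to $\mathbf i=\mathbf j=(i,\dots,i)$ ``already forces the scalar identity'', but it does not. Write $a:=q_1(y_1,y_2)$, $b:=q_2(y_1,y_2)$. For \emph{any} $\mathbf i,\mathbf j$, the $(l,m)$-factor on the left of $\mathbf q(y_1,y_2,\mathbf i,\mathbf j)=\mathbf q(y_2,y_1,\theta(\mathbf i),\mathbf j)$ equals $a$ iff $i_l=\theta^{l+m}(j_m)$ and equals $b$ otherwise, while the corresponding right-hand factor equals $\bar b$ in the first case and $\bar a$ in the second (since $\theta$ is fixed-point-free, exactly one alternative occurs). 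Hence if $N_1$ factors on the left are $a$ and $N_2=k^2-N_1$ are $b$, the equation reads $a^{N_1}b^{N_2}=\bar b^{\,N_1}\bar a^{\,N_2}$, i.e.\ $(ab)^{k^2}=1$. So the $\mathbf q$-level condition is exactly $(q_1(y_1,y_2)\,q_2(y_1,y_2))^{k^2}=1$ for all $(y_1,y_2)$, and no choice of $\mathbf i,\mathbf j$ squeezes out more.

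To finish, use the standing continuity hypothesis: $q_1q_2$ is a continuous function on the connected set $Y^2=\mathbb R^2$ taking values in the discrete set of $k^2$-th roots of unity, hence is constant; its value on the diagonal is $q_1(y,y)q_2(y,y)=\varkappa^2=1$, so $q_1q_2\equiv 1$, i.e.\ $q_1(y_1,y_2)=\overline{q_2(y_1,y_2)}=q_2(y_2,y_1)$. This is the missing step; once you insert it, the argument is complete.
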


\begin{proof}
The statement easily follows from the fact that the function $q(y_1,y_2,i,j)$ satisfies the conditions of Theorem~\ref{rtsw5uw5ude} and the definition~\eqref{yfxxdru6}. 
\end{proof}

\begin{remark} Let $Q:Y^2\to\mathcal L(V^{\otimes 2})$ be an arbitrary continuous function such that $Q(y_1,y_2)$ is a unitary operator in $V^{\otimes 2}$,    $Q^*(y_1,y_2)=Q(y_2,y_1)$, and $Q(y_1,y_2)$ satisfies the functional Yang--Baxter equation.  Let $k\ge 2$ and $W:=V^{\otimes k}$. Let $Q_k:Y^2\to\mathcal L(W^{\otimes 2})$ be a function derived from  fusion of $k$ quasiparticles with the exchange function $Q$. Then one can show that the continuous function $Q_k$ is also such that $Q_k(y_1,y_2)$ is a unitary operator in $W^{\otimes 2}$,    $Q^*(y_1,y_2)=Q(y_2,y_1)$, and $Q(y_1,y_2)$ satisfies the functional Yang--Baxter equation.   Note that, in the case where the operator $Q$ is as in \S~\ref{xrea5yw357}, we proved a stronger result: the operator-valued function $Q_k$  also satisfies Assumption~\ref{Rosated}, and if $Q$ satisfies \eqref{vutde7i}, then so does $Q_k$. 
\end{remark}

\subsubsection*{Acknowledgements}

This paper was mostly written when N.O. was a PhD student in the Department of Mathematics of Swansea University. N.O. is grateful to the Department for their constant support during his studies. 






\end{document}